\documentclass[12pt]{article}

\usepackage[margin=1in]{geometry}
\usepackage[T1]{fontenc}
\usepackage[utf8]{inputenc}
\usepackage{amsmath,amsthm,amssymb}
\usepackage{mathptmx}
\usepackage{bm}
\usepackage{mathdots}
\usepackage{graphicx}
\usepackage{placeins}
\usepackage[authoryear,round]{natbib}
\usepackage{microtype}
\usepackage{setspace}
\usepackage[hidelinks]{hyperref}

\newtheorem{conjecture}{Conjecture}
\newtheorem{proposition}{Proposition}
\newtheorem{lemma}{Lemma}

\newcommand{\OMIT}[1]{}

\title{Bifurcation mechanism at a sustain point\\of a long narrow economy}

\author{%
K. Ikeda\thanks{Department of Civil and Environmental Engineering, Tohoku University, Aoba, Sendai 980-8579, Japan. Email: kiyohiro.ikeda.b4@tohoku.ac.jp.}\\
H. Aizawa\thanks{Department of Civil and Environmental Engineering, Chuo University, Kasuga, Bunkyo-ku, Tokyo 112-8551, Japan. Email: haizawa018@g.chuo-u.ac.jp.}\\
J. M. Gaspar\thanks{School of Economics and Management and CEF.UP, University of Porto, Porto, Portugal. Email: jgaspar@fep.up.pt.}%
}
\date{}

\begin{document}
\maketitle

\begin{abstract}
We investigate population agglomeration in a long narrow economy, in which
an odd number of places are evenly distributed 
over a line segment.
The bifurcation analysis of this economy elucidates
the mechanism of the emergence of twin cities around the central city.
The validity and usefulness of this analysis are confirmed using several
well-known
economic geography models that display
various kinds of bifurcation behaviors.
By this analysis, we investigate the historical change in the
population distribution
in a chain of cities on Japan's Main Island.
\end{abstract}

\noindent\textbf{Keywords:} Bifurcation; core--periphery pattern; economic geography;  long narrow economy; sustain point.

\section{Introduction}\label{SecIntro}

Chains of cities prosper worldwide, e.g.,
on Japan's Main Island
and in a closed, narrow corridor between the Atlantic Ocean 
and the Appalachian Mountains (see Fig.~\ref{USA=JAPAN}).
In turn, the mechanism of the growth or decay of a large central city, such as 
Tokyo and New York City,
among a chain of cities, is of special importance.    
Figure \ref{USA=JAPAN}(a) shows the populations of five large cities
on Japan's Main Island in the 1950s (shown by blue arcs).
The population in 2020 (shown by orange arcs)
displays significant growth in Nagoya
located in the middle of 
two large cities (Tokyo and Osaka).
This example demonstrates the importance of the geographical advantage of the central city for the agglomeration of
population  in the real world.

\begin{figure}
	\centering
	\small
			\begin{tabular}{c@{\hspace{15mm}}c}
		\includegraphics[scale=0.26]{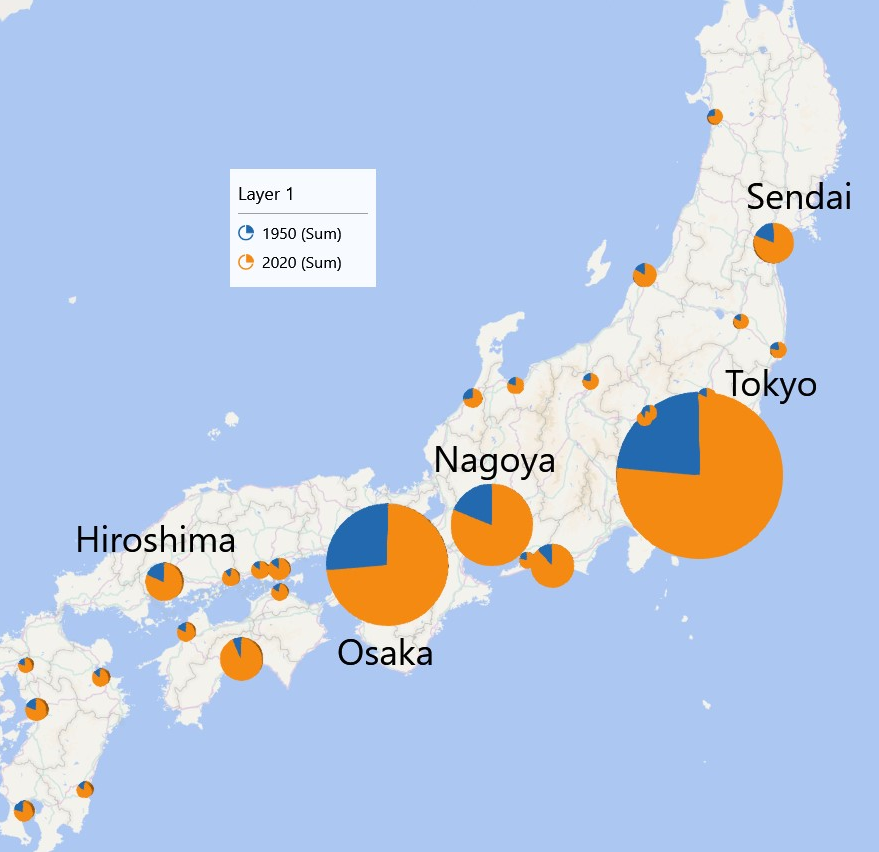} &
			\includegraphics[scale=0.24]{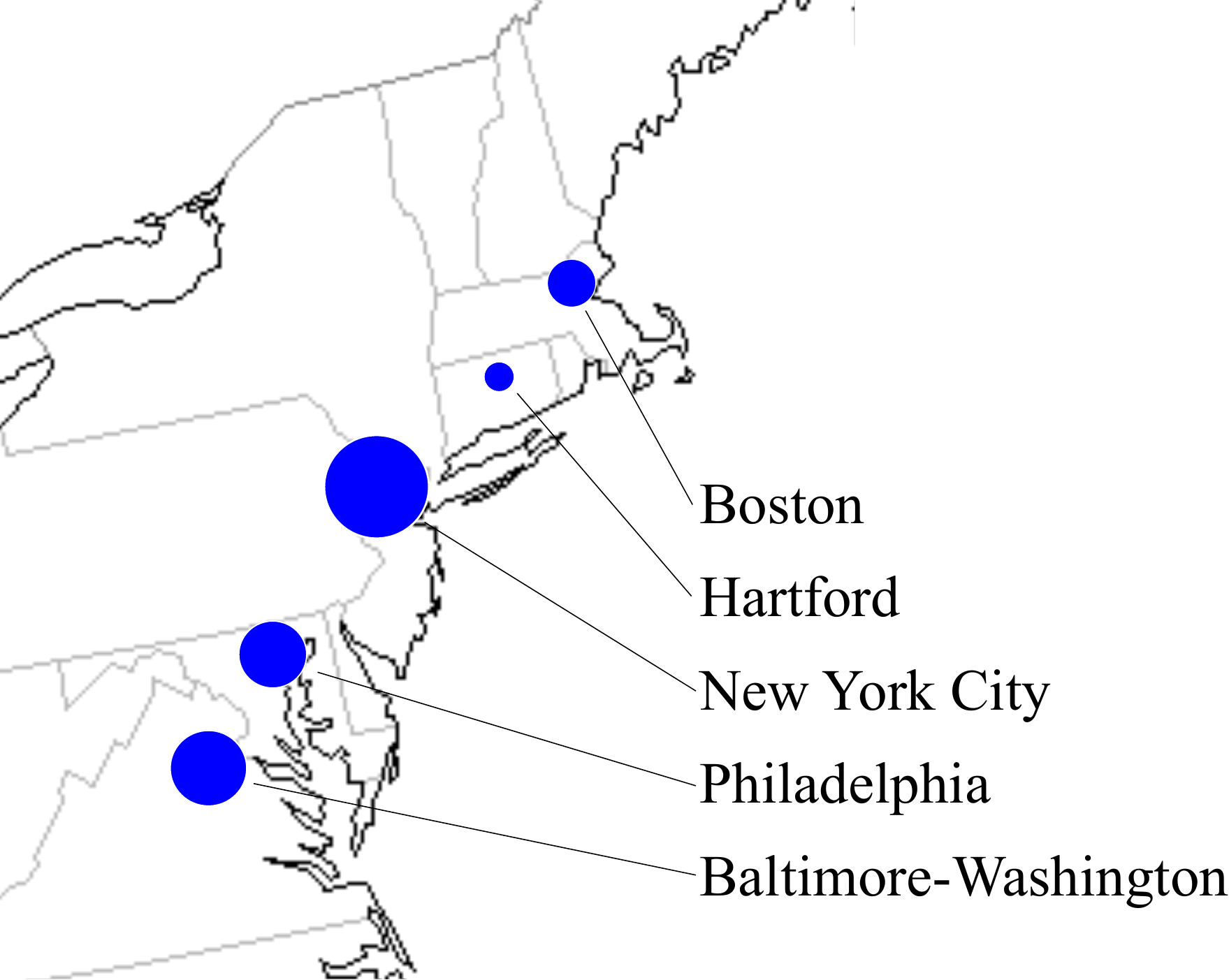} \\
			(a) Japan's Main Island & (b) The East Atlantic \\
		\end{tabular}
	\caption{A chain of cities in the world
		(in (a): the blue arc is the population in 1950 
            and the orange arc 
		is the population in 2020 \cite{united.nations.2022}}
	\label{USA=JAPAN}
\end{figure}

That said, this paper aims to elucidate the 
agglomeration mechanism of the population in 
multiple cities along a narrow corridor.
As a model of such cities, we employ a long narrow economy, 
i.e., an odd number of 
places evenly distributed over a line segment.
Although the real world is two-dimensional, several chains of cities are characterized by a linear arrangement of urban areas. They can be found at transnational scales (e.g., the STRING and Golden, Blue, and Green Bananas in Europe).
Such chains of cities seem to be characterized by discrete agglomerations
because not all areas are suitable for residential locations.
For instance, over 70\% of
Japanese territory consists of largely uninhabited mountains
and rugged uplands \cite{Sargent.1980}.
There, accordingly, are spatial discontinuities between cities with significant population densities.
 Moreover, as argued by  \citet{Akamatsu.et.al.2017}, any model subject to empirical validation must be discrete, because all collected data is aggregated over some geographical areas. 

We answer the question ^^ ^^ How
and where do peripheral cities grow or decay around a central city?" 
For peripheral cities, 
we refer to a pair of cities placed symmetrically around the central city.
These cities form a spatial pattern that resembles a megalopolis.
To tackle this mission,
we elucidate the bifurcation/agglomeration mechanism
of the long narrow economy 
in the following two steps:
\begin{enumerate}
\item
The bifurcation mechanism for general 
economic geography models.
\item
The agglomeration mechanism for specific,
well-known economic geography models.
\end{enumerate}

In the first step,
we study the bifurcation of the full agglomeration,
for which a central city gathers all the population.
This bifurcation occurs
when transport costs (freeness of trade) reach a critical level,
called \textit{sustain point} in economic geography \cite{Fujita.etal.1999}.
 Above (below) this level, this state 
 becomes economically unsustainable, and new cities emerge
 around the large central city. 
While most previous studies investigate the bifurcation of
a uniform distribution, we do the opposite, focusing on what happens once the full agglomeration becomes unsustainable.

The applicability and usefulness of this theoretical study 
are demonstrated by referring to 
well-known economic geography models \cite{Forslid.Ottaviano.2003,Murata.Thisse.2005,Pfluger.Sudekum.2008} (called 
FO, MT, and PFSU models, respectively).
We choose these three models based on the classification regarding the spatial scale of dispersion forces presented by 
\citet{Akamatsu.et.al.2023}; local dispersion forces occur within a region, whereas global dispersion forces depend on the proximity structure between regions. The FO model exhibits only a global dispersion force, whereas the MT model contains only a local one. The PFSU model includes both.
Models with only local dispersion forces cannot explain the existence of multi-peaked agglomerations \cite{Akamatsu.et.al.2023}. 
Hence, the MT model displays a relatively undiversified bifurcation behavior compared to the other two models.

In the second step, we 
study the location of peripheral cities,
based on the analysis of the FO and the PFSU models.	
The MT model is not suited for this study
 since peripheral cities are always located next to 
the central city.
The long narrow economy with five cities 
is employed to investigate the historical change of the 
population distribution
in a chain of cities on Japan's Main Island
(cf., Fig.~\ref{USA=JAPAN}(a)).
This analysis demonstrates the usefulness 
of the study of the long narrow economy.

This paper is organized as follows.
Section~\ref{Related Section} presents related studies.
The bifurcation from the full agglomeration 
to the central city
 is advanced in Section~\ref{Bifurcation mechanism}.
Section~\ref{Economic geography models} 
introduces economic geography models.
 The bifurcation behavior of the three economic geography models
 is studied in Section~\ref{Stability of full agglomeration}.
Bifurcation analysis for the FO model
is advanced in Section~\ref{AnalysisGeography}.
Section~\ref{Use of EG analysis} 
investigates the historical change in the
population distribution
in a chain of cities on Japan's Main Island.
Section~\ref{section conclusion} concludes.


\section{Related studies}\label{Related Section}

\citet{Fujita.Krugman.1995}
investigated the mechanism of the
formation of new cities for a full agglomeration
 in continuous space.
\citet{Allen.Arkolakis.2014} presented 
examples of spatial patterns and their stability by changing trade costs and other parameters.
 However, these works are silent on the exact location of such places around the central city.

A racetrack economy (equally spread places around a circle) is employed in several studies to investigate spatial population patterns theoretically.
Agglomeration processes
in this economy with a core--periphery model 
was studied \cite{Akamatsu.et.al.2012,Ikeda.et.al.JEDC.2012}
and this study was later extended to various kinds of models
\cite{Akamatsu.et.al.2023}.
However, this economy is an idealized spatial platform
that lacks border effects and the locational advantage of a central city.

Some researches report
several agglomeration patterns of the long-narrow economy:
the simplest core--periphery pattern for three places
\cite{Ago.Isono.Tabuchi.2006},
a chain of spatially repeated core--periphery 
patterns \textit{a la} Christaller and L\"{o}sch
\cite{Fujita.Mori.1997},
and a megalopolis which consists of large core cities
that are connected by \textit{an industrial belt}
\cite{Mori.1997}.
These patterns were numerically observed, 
starting from the uniform distribution 
and changing agglomeration forces and transport costs
\cite{IJET.2017}.
However, the agglomeration of population has already emerged in the real world. Therefore, it seems more pertinent 
to start from the full agglomeration to the central city
in the investigation of the rise or fall of peripheral cities. 

Dynamic analyses of 
economic geography models 
displayed chaos and various kinds of bifurcations
\cite{Currie.Kubin.2006,Agliari.2014,Commendatore.2017}.
The bifurcation theory was applied to
dynamic problems in economics
\cite{Dercole.2008,Dercole.2020}.
 There are several books on chaos in economics
 \cite{Janeway.1989,Dechert.1996,Rosser Jr.2000}.

This paper focuses on 
a static analysis as is conventional in economic geography.
Moreover, 
as a novel contribution to economic geography,
the paper elucidates the mechanism of
the formation of satellite cities in a chain of cities
for several well-known models  
and offers an application for real data,
while a recent study of 
such formation on a hexagonal lattice
 focuses only on a theoretical aspect
 \cite{Ikeda.NETS.2023}.


\section{General bifurcation mechanism of a long narrow economy}%
\label{Bifurcation mechanism}

We employ a long narrow economy
composed of an odd number of places as a spatial platform for economic geography models.
We study the bifurcation mechanism 
of the emergence of peripheral cities 
around a large central city.
This study is applicable to  
general economic geography models.

\subsection{Modeling of the spatial economy}\label{Spatial economy model}

The long narrow economy has $K=2k+1$ $(k=1,2,\ldots)$ places 
labeled $i \in N=\left\{-k,...,0,...,k \right\}$,
and are evenly distributed over a line segment
(Fig.~\ref{Linear_space_2}).
The $0$th place is located at the center,
and a place $i \neq 0$ is  
$| i |$ \textit{steps} away from the center. 

\begin{figure}[b]
	\begin{center}
		\includegraphics[scale=0.45]{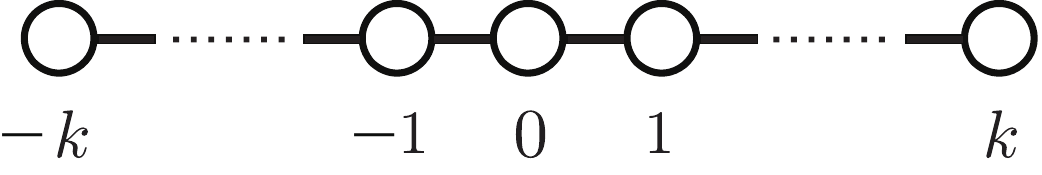}
	\end{center}
	\caption{A long narrow economy}
	\label{Linear_space_2}
\end{figure}

There are mobile agents
that migrate among places. 
The number of these agents in place $i$ is denoted by $\lambda_i$
and is subject to the conservation law: 
\begin{align}\label{ConservationLaw} 
 & \sum_{i\in N} \lambda_i =1 .
 \end{align}
Agents choose to live in the place with the highest
indirect utility $v_i$.
We assume that
the indirect utility $v_i$
is a sufficiently smooth nonlinear function of the population distribution 
$\bm{\lambda}=(\lambda_i \mid i\in N)$ and model parameters.
Furthermore, we assume that 
there is no initial heterogeneity in 
each pair of 
places $i=\pm j$  $(\neq 0)$
regarding the parameters of $v_i$.

We consider 
the replicator dynamics 
\cite{TaylorJonker1978}:
$\frac{d\bm{\lambda}}{dt} = 
{\bm F}(\bm{\lambda},\phi)$,
where 
${\bm F}(\bm{\lambda},\phi)=(F_i(\bm{\lambda},\phi)\mid i\in N)$
and
\begin{equation}\label{F_expression}
F_i(\bm{\lambda},\phi)=
(v_i(\bm{\lambda},\phi) - \bar{v}(\bm{\lambda},\phi)) \lambda_i,
\quad i\in N.
\end{equation}

\noindent
Here,
$\bar{v}=\sum_{i\in N} \lambda_i v_i$
expresses the weighted average utility and $\phi \in (0,1) $ is 
\textit{trade freeness}, which is an inverse measure of transportation cost.
 We choose the trade freeness as the bifurcation parameter 
 that expresses the historical tendency of decreasing/increasing transport costs, as is customary in economic geography.
Stationary point 
$({\bm \lambda},\phi)$
of the dynamics is defined as 
a solution of the governing equation:
\begin{equation}\label{gov.eq.general.F}
 \bm{F}({\bm \lambda},\phi)={\bf 0}.
\end{equation}

Figure~\ref{Examples_FA_Twin}
depicts important agglomeration patterns
in economic geography: 
full agglomeration,
twin cities, uniform, core--periphery, and diffused patterns
(cf., Section~\ref{AgglomerationBehaviorsMore}).

\begin{figure}
\begin{small}
	\begin{center}
	\begin{tabular}{c@{\hspace{10mm}}c@{\hspace{5mm}}c}
		\includegraphics[scale=0.60]{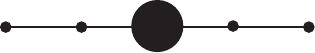} &
		\includegraphics[scale=0.60]{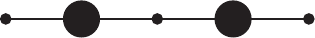} &
		\includegraphics[scale=0.60]{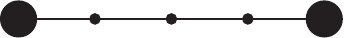} \\
		\noalign{\vskip 1ex}
 (a) Full agglomeration $\bm{\lambda}^{\rm FA}_0$ &
     \multicolumn{2}{c}{(b) Twin cities}
		\end{tabular}

\vspace{4mm}
	\begin{tabular}{c@{\hspace{10mm}}c@{\hspace{10mm}}c}
		\includegraphics[scale=0.60]{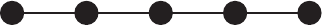} &
		\includegraphics[scale=0.60]{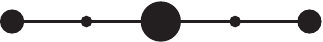} &
		\includegraphics[scale=0.60]{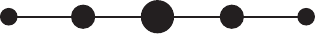} \\
		\noalign{\vskip 1ex}
		(c) Uniform &
		(d) Core--periphery &
		(e) Diffused 
		\end{tabular}
     \end{center}
	\end{small}
	\caption{Agglomeration patterns in a long narrow economy
	for $K=5$ places}
	\label{Examples_FA_Twin}
\vspace{0mm}
\end{figure}

\subsection{Invariant patterns}\label{Invariant long narrow}

There is a unique agglomeration pattern
$\bar{\bm{\lambda}}$
which is a solution of the governing equation \eqref{gov.eq.general.F}
for any values of the trade freeness $\phi$
and economic parameters of the models.
This pattern, called an \textit{invariant pattern} 
\cite{Ikeda.et.al.JEDC.2012,JEDC.2018,Ikeda.Chaos.2019},
plays an important role in the study of agglomeration patterns.

Full agglomeration and twin cities defined below are invariant patterns
(Proposition~\ref{Invariant proposition}).
\begin{itemize}
\item
The full agglomeration 
$\bm{\lambda}^{\rm FA}_j$
to the $j$th place,
i.e.,
$\lambda_{j}=1$ for some $j \in N$
and no population elsewhere.
\item
The twin cities
$\bm{\lambda}^{\rm Twin}_j$ 
in places $i=\pm j$ that gather all population,
i.e., 
$\lambda_{\pm j}=1/2$ for some $j \in \{1,\ldots,k\}$
and no population elsewhere. 
\end{itemize}
Then, we have the following proposition.

\begin{proposition}\label{Invariant proposition}
Full agglomeration 
$\bm{\lambda}^{\rm FA}_{j}$
and twin cities
$\bm{\lambda}^{\rm Twin}_j$ 
are invariant patterns.
\end{proposition}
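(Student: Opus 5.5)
We verify directly that each pattern satisfies the governing equation \eqref{gov.eq.general.F}, i.e.\ $F_i(\bm{\lambda},\phi)=0$ for every $i\in N$, for all $\phi$ and all parameter values. The argument uses only the product form \eqref{F_expression}, $F_i=(v_i-\bar v)\lambda_i$, and the symmetry of the economy about the centre. No property of $v_i$ is needed beyond its being well defined and symmetric.

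\emph{Full agglomeration.} For $\bm{\lambda}^{\rm FA}_j$ we have $\lambda_i=0$ for $i\neq j$, so $F_i=0$ for those indices. For $i=j$, the average utility is $\bar v=\sum_{l}\lambda_l v_l=v_j$, since $\lambda_j=1$. Hence $v_j-\bar v=0$ and $F_j=0$. This holds for any $j\in N$, including $j=0$, and for every $\phi$ and parameter value.

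\emph{Twin cities.} For $\bm{\lambda}^{\rm Twin}_j$, again $F_i=0$ whenever $i\neq\pm j$. At the two occupied places, $\bar v=\tfrac12(v_j+v_{-j})$, so
\begin{equation*}
v_{\pm j}-\bar v=\pm\tfrac12\,(v_j-v_{-j}).
\end{equation*}
It therefore suffices to show that $v_j(\bm{\lambda}^{\rm Twin}_j,\phi)=v_{-j}(\bm{\lambda}^{\rm Twin}_j,\phi)$.

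This symmetry step is the only substantive point. Let $\sigma$ be the reflection $i\mapsto -i$ of $N$, and write $(\sigma\bm{\lambda})_i=\lambda_{-i}$. The evenly spaced line segment is invariant under $\sigma$, so interplace distances and transport costs satisfy $d(-i,-l)=d(i,l)$. By the standing assumption, the places $\pm j$ carry identical parameters. Together these make the model reflection-equivariant:
\begin{equation*}
v_{-i}(\bm{\lambda},\phi)=v_i(\sigma\bm{\lambda},\phi).
\end{equation*}
I would state this equivariance explicitly as a consequence of the modelling assumptions of Section~\ref{Spatial economy model}; it is where the main care is needed, since the paper phrases the assumption only informally.

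The twin-cities pattern is fixed by the reflection, $\sigma\bm{\lambda}^{\rm Twin}_j=\bm{\lambda}^{\rm Twin}_j$. Applying the equivariance with $i=j$ gives $v_{-j}(\bm{\lambda}^{\rm Twin}_j,\phi)=v_j(\bm{\lambda}^{\rm Twin}_j,\phi)$, so $F_{\pm j}=0$.

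Both patterns are therefore solutions of \eqref{gov.eq.general.F} for every $\phi$ and every set of model parameters, and hence are invariant patterns.
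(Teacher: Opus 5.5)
Your proof is correct and follows the same approach as the paper's proof in Appendix~\ref{Proof FA twin}: the product form of $F_i$ makes $F_i=0$ at unoccupied places, $\bar v=v_j$ handles the full agglomeration, and $v_{-j}=v_j$ by symmetry gives $v_{\pm j}-\bar v=0$ for the twin cities. Your reflection-equivariance $v_{-i}(\bm{\lambda},\phi)=v_i(\sigma\bm{\lambda},\phi)$ just states the paper's ``by symmetry'' (cf.\ \eqref{symmetry conditions App}) more carefully, but rename the reflection, since $\sigma$ already denotes the elasticity of substitution.
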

\begin{proof}
See Appendix~\ref{Proof FA twin} for the proof.
\end{proof}

\subsection{Bifurcation of the full agglomeration at the center}%
\label{Bifurcation theory full}

We elucidate the bifurcation mechanism
of the full agglomeration $\bm{\lambda}^{\rm FA}_{0}$  at the center
to search for the locations of emerging peripheral
cities around the large central city. 
We chose this full agglomeration that has
superior stability to other full agglomerations
(Section~\ref{Other than center}).

We search for stable solutions 
with positive populations in the central and peripheral cities
by the following two steps:
(1) bifurcation analysis to find bifurcating paths, and
(2) stability analysis to identify stable paths.

A solution $(\bm{\lambda},\phi)$ with no population 
($\lambda_i=0$) in one or more places is called a corner solution.
This solution
is called sustainable
if $v_i - \bar{v} \le 0$ for all $i\in N$
and is unsustainable 
if $v_i - \bar{v} > 0$ for some $i\in N$. 
The full agglomeration $(\bm{\lambda}^{\rm FA}_{0},\phi)$ 
is a corner solution,
and its sustainability is described below.

\begin{itemize}
\item
A \textit{local sustain point} 
for the $i$th place is defined by
$\phi=\phi_i^{\rm s}$
where $\lambda_i=0$
and $v_i -\bar{v}= 0$ hold.
\item
A \textit{sustain point } $\phi=\phi^{\rm s}$
for the whole system is defined as 
a particular local sustain point for some $i=\pm \delta_{\rm s}$
satisfying $v_{\delta_{\rm s}} -\bar{v}= 0$ 
and $v_{i} -\bar{v}< 0$ 
for all $i\neq \pm \delta_{\rm s}$.
\end{itemize}
The places $i = \pm \delta_{\rm s}$
for the sustain point 
are highlighted as the locations of emerging peripheral cities.

\subsubsection{Bifurcation analysis}

We consider a local sustain point for places $i=\pm j$ $(\neq 0)$,
satisfying $v_{- j}-v_0=v_{j}-v_0=0$. 
This sustain point is a bifurcation point,
from which one or two peripheral
cities emerge, $j$ steps away from the central place,
as explained below.
In the neighborhood of this point,
at most three components  
$\lambda_{-j},~ \lambda_0, ~\lambda_{j}$ are nonzero.
Then $\bm{F}({\bm \lambda},\phi)={\bf 0}$ in \eqref{gov.eq.general.F}
with the replicator form \eqref{F_expression}
reduces to the so-called bifurcation equation 
\begin{align}\label{Eqs pm j}
\begin{cases}
  [v_{-j}(\lambda_{-j},\lambda_{j},\phi)
   - \bar{v}(\lambda_{-j},\lambda_{j},\phi)] \lambda_{-j}=0,
  \\
 [v_{j}(\lambda_{-j},\lambda_{j},\phi)
   - \bar{v}(\lambda_{-j},\lambda_{j},\phi)] \lambda_{j}=0
\end{cases}
\end{align}
for two nonzero components  
$\lambda_{-j}$ and $\lambda_{j}$.
By \eqref{ConservationLaw},
the other nonzero component is given as
 $\lambda_0=1-\lambda_{-j}-\lambda_{j}$ and 
$[v_{0}(\bm{\lambda},\phi) - \bar{v}(\bm{\lambda},\phi)] \lambda_{0}=0$
is satisfied.
By the bilateral symmetry of the places $-j$ and $j$,
we have the symmetry conditions:
\begin{align}\label{symmetry conditions App}
\begin{cases}
 \bar{v}(\lambda_{j},\lambda_{-j},\phi)=\bar{v}(\lambda_{-j},\lambda_j,\phi), \\
 v_{-j}(\lambda_{j},\lambda_{-j},\phi)=v_j (\lambda_{-j},\lambda_j,\phi).
\end{cases}
\end{align}

We have the following proposition for possible solutions to \eqref{Eqs pm j}.
For $K=5$ places, these solutions are depicted in Fig.~\ref{Blanches}. 

\begin{figure}[b]
	\begin{center}
		\includegraphics[scale=0.45]{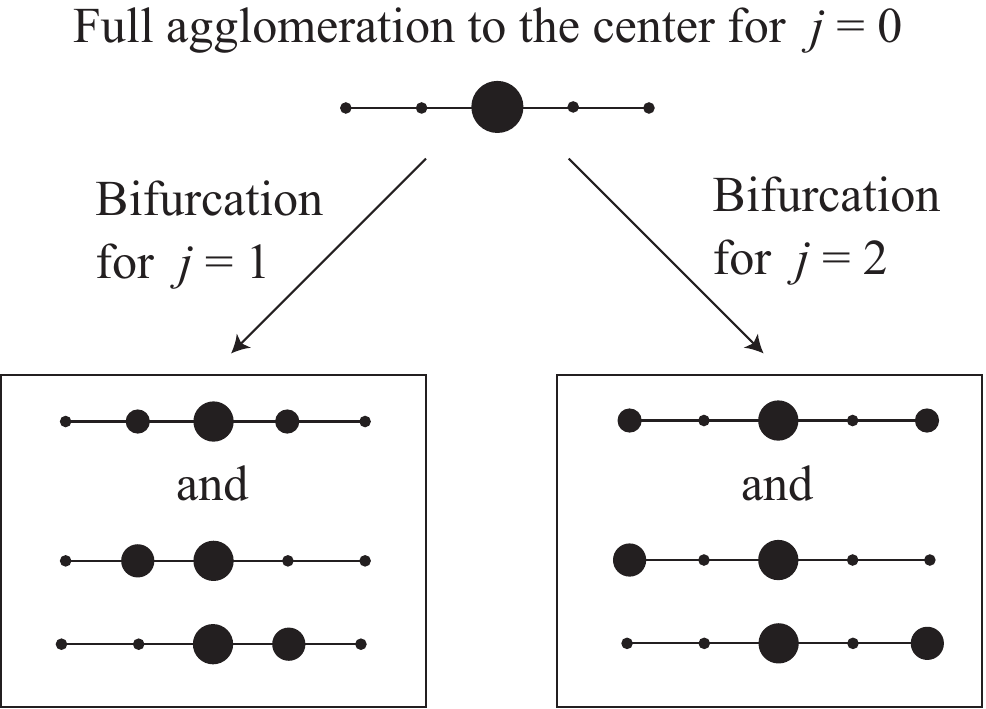} 
	\end{center}
	\caption{Possible bifurcations for $K=5$ places}
	\label{Blanches}
\end{figure}

\begin{proposition}\label{BifurCurveV_2}
At a local sustain point of $\bm{\lambda}^{\rm FA}_{0}$
with $v_j-\bar{v}=0$,
we have three kinds of solution curves: 
\begin{align*}
\begin{cases}
\mbox{Full agglomeration $(\lambda_{-j}=\lambda_{j}=0,~\lambda_0=1)$.} \cr
\mbox{Bifurcating path with two identical peripheral cities
with $\lambda_{j}=\lambda_{-j}>0$}. 
 \cr
\mbox{Bifurcating path with one peripheral city
with $\lambda_{j}>\lambda_{-j}=0$ 
or $\lambda_{-j}>\lambda_{j}=0$.} \cr
\end{cases}
\end{align*}
\end{proposition}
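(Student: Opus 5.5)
The argument rests on the product structure of the bifurcation equation \eqref{Eqs pm j}. Each equation vanishes if and only if either $\lambda_{\pm j}=0$ or the bracket $v_{\pm j}-\bar v$ vanishes. We therefore split into four cases according to which factors vanish and show that each case gives one of the three listed curves near the local sustain point $(\bm{\lambda}^{\rm FA}_0,\phi^{\rm s}_j)$.

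The cases are as follows.
\begin{itemize}
\item \emph{Both populations vanish.} If $\lambda_{-j}=\lambda_j=0$, then $\lambda_0=1$. This is the full agglomeration, which solves \eqref{gov.eq.general.F} for every $\phi$ by Proposition~\ref{Invariant proposition}.
\item \emph{Exactly one population vanishes.} If $\lambda_{-j}=0$ and $\lambda_j>0$, we must solve the single scalar equation
\[
g(\lambda_j,\phi):=v_j(0,\lambda_j,\phi)-\bar v(0,\lambda_j,\phi)=0 .
\]
At the sustain point, $g(0,\phi^{\rm s}_j)=0$ by definition. Generically $\partial g/\partial\phi\neq 0$ there (the sustain point is crossed transversally), so the implicit function theorem yields a smooth curve $\phi=\phi(\lambda_j)$ emanating from $(0,\phi^{\rm s}_j)$. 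Restricting to $\lambda_j>0$ gives a one-peripheral-city path. Applying the symmetry conditions \eqref{symmetry conditions App} to the case $\lambda_j=0<\lambda_{-j}$ produces the mirror path with the same function $\phi(\cdot)$.
\item \emph{Both populations positive.} If $\lambda_{\pm j}>0$, both brackets must vanish. I would change to symmetric and antisymmetric coordinates $s=\lambda_j+\lambda_{-j}$ and $a=\lambda_j-\lambda_{-j}$. By \eqref{symmetry conditions App}, the function $h=v_j-v_{-j}$ is odd in $a$, so it can be written as $h=a\,\tilde h(s,a,\phi)$. The symmetric combination $v_j+v_{-j}-2\bar v$ is even in $a$. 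Hence $a=0$, i.e.\ $\lambda_j=\lambda_{-j}$, always solves the antisymmetric equation. The remaining equation $v_j-\bar v=0$ restricted to $a=0$ is again a scalar equation in $(s,\phi)$ vanishing at $(0,\phi^{\rm s}_j)$. The implicit function theorem in $\phi$ gives the curve of twin peripheral cities with $\lambda_j=\lambda_{-j}>0$.
\end{itemize}

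The main obstacle is excluding, or at least accounting for, asymmetric interior solutions with $a\neq 0$ and both $\lambda_{\pm j}>0$, which require $\tilde h=0$. Near the sustain point $s$ and $a$ are small, so I would compare $\tilde h(0,0,\phi^{\rm s}_j)$ with the $\lambda$-derivatives of $g$. I would argue that $\tilde h(0,0,\phi^{\rm s}_j)\neq0$ is the generic situation, so that no such branch passes through the point. I would state this nondegeneracy, together with the transversality $\partial g/\partial\phi\neq0$, as the generic hypotheses under which the proposition holds.
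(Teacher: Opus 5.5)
Your proposal is correct and has the same structure as the paper's proof in Appendix~A.2. It uses the same case split on whether $\lambda_{\pm j}$ vanish, and the symmetry conditions \eqref{symmetry conditions App} make the two equations coincide when $\lambda_j=\lambda_{-j}$; the only difference is that you get the branches from the implicit function theorem, where the paper uses the expansion \eqref{BifEqu}. Your two genericity hypotheses are the ones the paper assumes tacitly: $\partial g/\partial\phi\neq0$ is $a\neq0$ in \eqref{BifEqu}, and $\tilde h(0,0,\phi^{\rm s}_j)\neq0$ is $b\neq c$, which rules out asymmetric interior branches, a point the paper does not address explicitly.
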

\begin{proof}
See Appendix~\ref{Proof Solutions}
for the proof.
\end{proof}

\subsubsection{Stability analysis}

At a sustain point, a stable bifurcating path can branch
 as described by the following proposition.

\begin{proposition}\label{Stability_Bifcurve}
At a sustain point of $\bm{\lambda}^{\rm FA}_{0}$,  
there is at most one bifurcating path
that is asymptotically stable.
\end{proposition}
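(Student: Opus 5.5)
By Proposition~\ref{BifurCurveV_2}, at a sustain point with $i=\pm\delta_{\rm s}$ there are exactly two kinds of bifurcating paths: the symmetric path with $\lambda_{\delta_{\rm s}}=\lambda_{-\delta_{\rm s}}>0$, and the asymmetric path with one peripheral city (two mirror copies, which are conjugate under the reflection $i\mapsto -i$ and so are stable or unstable together). I would show that these two kinds cannot both be asymptotically stable, by reducing to the two-dimensional bifurcation equation \eqref{Eqs pm j} and computing eigenvalues of its Jacobian along each branch. At a sustain point all other places satisfy $v_i-\bar v<0$ strictly. By continuity the eigenvalues $v_i-\bar v$ for $\lambda_i=0$, $i\ne 0,\pm\delta_{\rm s}$, stay negative near the point, so stability is governed by the reduced system in $(\lambda_{-\delta_{\rm s}},\lambda_{\delta_{\rm s}})$ alone.

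Write $j=\delta_{\rm s}$, $x=\lambda_{-j}$, $y=\lambda_j$, and $g(x,y,\phi)=v_j-\bar v$. By \eqref{symmetry conditions App}, $v_{-j}-\bar v=g(y,x,\phi)$. The reduced system is $\dot x=x\,g(y,x,\phi)$, $\dot y=y\,g(x,y,\phi)$. Expanding at $(0,0,\phi^{\rm s})$, where $g=0$, gives
\[
g(x,y,\phi)\approx a(\phi-\phi^{\rm s})+b\,x+c\,y ,
\]
with $b=\partial g/\partial x$ (cross effect) and $c=\partial g/\partial y$ (own effect). The branches, to leading order, are as follows.
\begin{itemize}
\item Symmetric branch: $x=y=s$ with $a(\phi-\phi^{\rm s})+(b+c)s=0$. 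The Jacobian is $s\begin{pmatrix} c & b\\ b & c\end{pmatrix}$, with eigenvalues $s(c+b)$ (symmetric mode) and $s(c-b)$ (antisymmetric mode).
\item One-city branch: $x=0$ and $a(\phi-\phi^{\rm s})+c\,y=0$. The Jacobian is triangular, with eigenvalues $y\,c$ (along the branch) and $g(y,0,\phi)=a(\phi-\phi^{\rm s})+b\,y=(b-c)\,y$ (transverse, into $x>0$).
\end{itemize}
Since $s,y>0$, stability of the symmetric branch requires $c-b<0$, i.e.\ $b>c$. Stability of the one-city branch requires $(b-c)y<0$, i.e.\ $b<c$. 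These conditions are mutually exclusive, so at most one path is asymptotically stable. The degenerate case $b=c$ gives a zero eigenvalue, which rules out asymptotic stability of at least one branch in the linear sense. Note also that the sign of $b+c$ or $c$ fixes the side of $\phi^{\rm s}$ on which each branch exists. This direction does not affect the exclusion argument.

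The main obstacle is making the eigenvalue statements rigorous in the original $K$-dimensional replicator system, not just the reduced one. Three things are needed:
\begin{itemize}
\item a justification that the coordinate $\lambda_0$, eliminated through \eqref{ConservationLaw}, contributes no extra critical direction;
\item that the remaining corner directions are strictly stable, which follows from the strict inequalities in the sustain-point definition;
\item a treatment of the degenerate case $b=c$, which I would handle as non-generic or through higher-order terms.
\end{itemize}
For the first two points I would use block-triangularity of the Jacobian at a corner solution: the rows for $\lambda_i=0$ have only diagonal entries $v_i-\bar v$.
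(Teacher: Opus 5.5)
Your proposal is correct and is essentially the paper's proof (the two lemmas in Appendix~\ref{Stability Local Sustain FA}). The paper uses the same leading-order expansion of the reduced bifurcation equation and the same Jacobians on the symmetric and one-city branches, with eigenvalues $s(b+c),\,s(c-b)$ and $yc,\,y(b-c)$; your $b,c$ are the paper's $c,b$, since you label the cross and own effects the other way round. It then reaches the same mutually exclusive sign conditions. Your additional remarks on the block-triangular Jacobian at the corner and on the degenerate case $b=c$ are refinements the paper leaves implicit, not a different route.
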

\begin{proof}
See Lemma~\ref{Stability_Bifcurve Lemma}
in Appendix~\ref{Stability Local Sustain FA}
for the proof.
\end{proof}

We explain how a stable bifurcating path branches
at a sustain point referring to 
an image of this bifurcation in 
Fig. \ref{Branches}.
The horizontal line SS$^\prime$ denotes
the state of stable (sustainable)
full agglomeration $(\phi>\phi^{\rm s})$.
Two bifurcating paths, SA and SB, branch at the sustain point S denoted by $(\circ)$.
The path SA 
residing on the opposite side of
SS$^\prime$ ($\phi<\phi^{\rm s}$) is
called ^^ ^^ branching forward" and the path SB 
residing on the same side of
SS$^\prime$ $(\phi>\phi^{\rm s})$ is
called ^^ ^^ branching backward".
Then,
we have the following proposition.

\begin{proposition}\label{Stability_of_Bifcurve}
At a sustain point of $\bm{\lambda}^{\rm FA}_{0}$,
if a stable bifurcating path exists, it branches forward.
\end{proposition}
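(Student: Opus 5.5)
We reduce the claim to a sign comparison along a bifurcating path near the sustain point S. The full agglomeration $\bm{\lambda}^{\rm FA}_0$ is sustainable for $\phi>\phi^{\rm s}$ and fails at $\phi=\phi^{\rm s}$ for the pair $i=\pm\delta_{\rm s}$. So we may assume
\[
g(\phi):=v_{\delta_{\rm s}}(\bm{\lambda}^{\rm FA}_0,\phi)-\bar{v}(\bm{\lambda}^{\rm FA}_0,\phi)
\]
satisfies $g(\phi^{\rm s})=0$ and $g'(\phi^{\rm s})<0$. The latter is a transversality condition: $g<0$ on the sustainable side $\phi>\phi^{\rm s}$, and we assume the crossing is nondegenerate. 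Write $\epsilon=\lambda_{\delta_{\rm s}}$ (with $\lambda_{-\delta_{\rm s}}=\epsilon$ or $0$, according to Proposition~\ref{BifurCurveV_2}) and parametrize each bifurcating path as $\phi=\phi^{\rm s}+\tilde\phi(\epsilon)$. On the path the bifurcation equation \eqref{Eqs pm j}, after division by $\lambda_{\delta_{\rm s}}>0$, gives
\[
h(\epsilon,\phi):=v_{\delta_{\rm s}}-\bar v=0 .
\]
Expanding this to first order yields $\tilde\phi(\epsilon)\approx -(\partial h/\partial\epsilon)\,\epsilon/g'(\phi^{\rm s})$. Since $g'<0$, the path branches forward ($\tilde\phi<0$ for small $\epsilon>0$) exactly when $\partial h/\partial\epsilon<0$ at S.

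Next we link $\partial h/\partial\epsilon$ to stability, using the Jacobian of the replicator dynamics \eqref{F_expression}. On the path the nonzero components are $\lambda_{\pm\delta_{\rm s}}$ and $\lambda_0$. We restrict the Jacobian to the simplex tangent directions, using \eqref{ConservationLaw} to eliminate $\lambda_0$. For the components with $\lambda_i>0$, the Jacobian entries are $\lambda_i\,\partial(v_i-\bar v)/\partial\lambda_k$. These are $O(\epsilon)$ multiples of the derivatives of $h$, and the factor $\lambda_{\delta_{\rm s}}=\epsilon>0$ does not change signs. For a one-city path the relevant eigenvalue is therefore approximately $\epsilon\,\partial h/\partial\epsilon$. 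For a two-city path we split the directions by the symmetry \eqref{symmetry conditions App} into a symmetric mode $\lambda_{\delta_{\rm s}}=\lambda_{-\delta_{\rm s}}$ and an antisymmetric mode. The eigenvalue of the symmetric mode is again approximately $\epsilon\,\partial h/\partial\epsilon$, with $h$ taken along the symmetric direction.

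The remaining eigenvalues need separate treatment. One family comes from places $i$ with $\lambda_i=0$: on the path these eigenvalues equal $v_i-\bar v$. At S they are strictly negative for $i\neq\pm\delta_{\rm s}$, by the definition of the sustain point, so they stay negative nearby. For the one-city path there is also the eigenvalue of the vacant partner place $-\delta_{\rm s}$, equal to $v_{-\delta_{\rm s}}-\bar v$; this vanishes at S, and its sign is exactly what Lemma~\ref{Stability_Bifcurve Lemma} controls. Asymptotic stability therefore requires the in-path eigenvalue to be negative, i.e.\ $\partial h/\partial\epsilon<0$. By the first step this is equivalent to $\tilde\phi<0$, which means branching forward, as Proposition~\ref{Stability_of_Bifcurve} asserts.

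The main obstacle is making the identification ``in-path eigenvalue $\approx\epsilon\,\partial h/\partial\epsilon$'' rigorous. The total derivative along the path mixes $\partial h/\partial\epsilon$ with $\partial h/\partial\phi\cdot\tilde\phi'$, and $\bar v$ itself depends on $\epsilon$. I would handle this through the exchange-of-stability style computation: differentiate $h(\epsilon,\phi^{\rm s}+\tilde\phi(\epsilon))\equiv0$ and evaluate the Jacobian at the path point, so that the relevant eigenvalue equals $-g'(\phi^{\rm s})\,\tilde\phi'(0)\,\epsilon+o(\epsilon)$. This makes the sign relation immediate. If $\partial h/\partial\epsilon=0$ at S (a degenerate case), the same argument would be repeated with the first nonvanishing higher-order term.
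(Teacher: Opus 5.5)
Your argument is correct and is essentially the paper's proof (Lemma~\ref{Stability_Bifcurve Lemma}(iv)). There, the reduced equations $x(a\psi+bx+cy)=0$ and $y(a\psi+by+cx)=0$ give branching offsets $\psi_1\approx-(b+c)w/a$ and $\psi_2\approx-bw/a$ together with in-path eigenvalues $(b+c)w$ and $bw$, which is exactly your pairing $\tilde\phi\approx-(\partial h/\partial\epsilon)\,\epsilon/g'(\phi^{\rm s})$ with eigenvalue $\epsilon\,\partial h/\partial\epsilon$ under $a\leftrightarrow g'(\phi^{\rm s})<0$; note only that taking $\phi>\phi^{\rm s}$ as the sustainable side is a normalization (at the PFSU upper sustain point it is the other side), and the mirror case follows by reversing $\phi$, as in the paper's ``respectively'' clauses.
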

\begin{proof}
See
Lemma~\ref{Stability_Bifcurve Lemma}(iv)
in Appendix~\ref{Stability Local Sustain FA}
for the proof.
\end{proof}

\noindent
Just after bifurcation at point S 
in Fig.~\ref{Branches},
the bifurcating path SB is always unstable
while the stability of the path SA depends on the model and its economic parameters.
Thus, a stable bifurcating path, when it exists, 
realizes a continuation of stable paths
as $\phi$ crosses the sustain point $\phi=\phi^{\rm s}$,
connecting the stable core--periphery pattern AS and
the stable full agglomeration SS$^\prime$.
We show examples of the stable bifurcating paths for the 
FO and PFSU models in Section~\ref{Bifurcation from a full}.
In the literature, 
such continuation is 
observed for the PF model 
 with the two places economy \cite{Pfluger.2004}  and with the multi-region equidistant economy \cite{gaspar.2018,gaspar.2021}.

\begin{figure}[htbp]
	\begin{center}
		\includegraphics[scale=0.45]{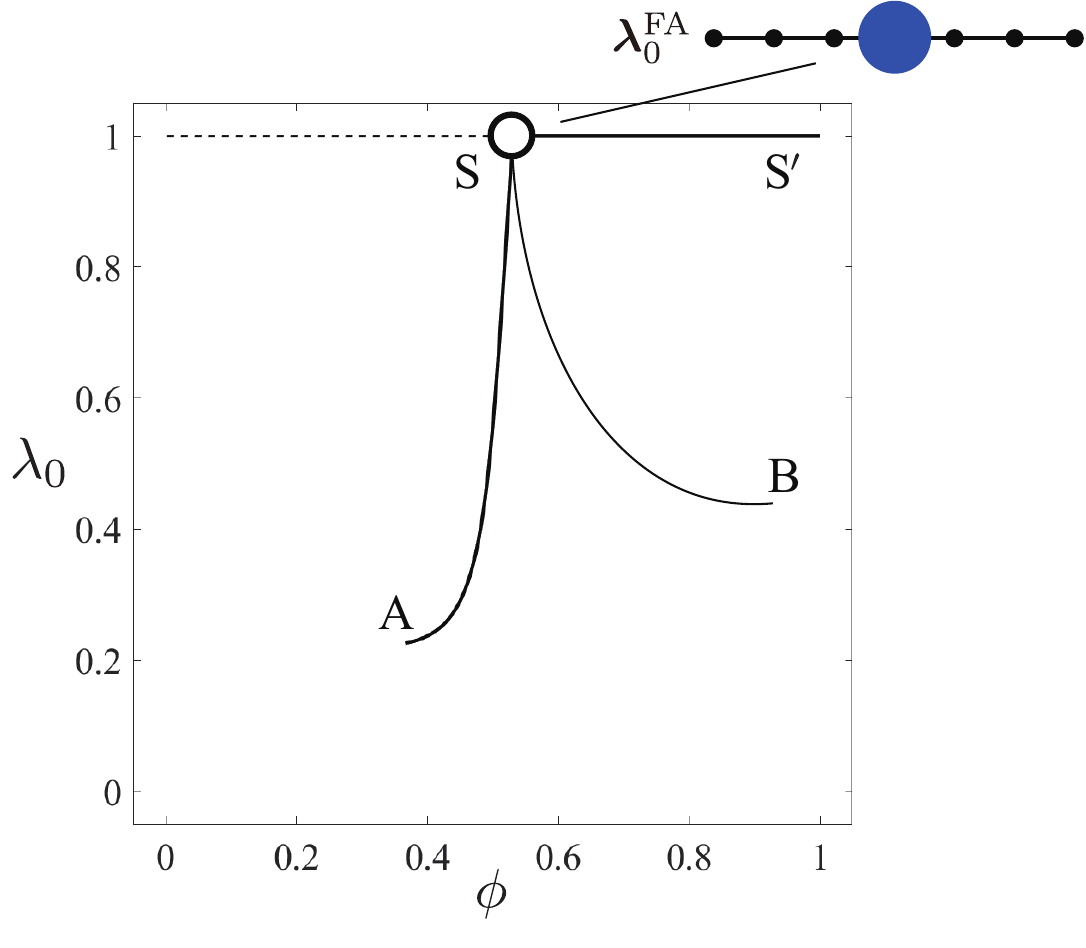}
	\end{center}
	\caption{Image of Proposition \ref{Stability_of_Bifcurve} for the case that 
path SS$^\prime$ for the full agglomeration is stable. } 
	\label{Branches}
\end{figure}

\subsection{Bifurcation 
of the twin cities}\label{Twin-Bifurcation}

For the twin cities $\bm{\lambda}^{\rm Twin}_j$, 
we investigate the bifurcation of the local sustain point.
Eigenvalues of the Jacobian matrix of the twin cities
are $v_0-\bar{v}$ (repeated once) and
 $v_{-\delta}-\bar{v}=v_{\delta}-\bar{v}$ (repeated twice 
 with $v_{-\delta}=v_{\delta}$ and
 $\bar{v}=v_{-j}=v_{j}$;
 $\delta \in \{1,\ldots,k\},\delta\neq j)$.
Accordingly, there are two kinds of sustain points
associated with either (1) $v_0-\bar{v}=0$
or (2) $v_{\pm \delta}-\bar{v}=0$. 

The bifurcation at a sustain point with $v_0-\bar{v}=0$
produces a central city.
The growth of the central city produces a core--satellite pattern
(Fig.~\ref{Examples_FA_Twin}(d)), 
as observed in the 
FO model (Section~\ref{AgglomerationBehaviorsMore}). 

\begin{proposition}\label{twin sustain Bifurcation}
At a sustain point of
$\bm{\lambda}^{\rm Twin}_{j}$ 
with $v_0-\bar{v}=0$,
we have:

\noindent
(i) There is only one bifurcating path,
which has positive populations 
in three places $i=0$ and $\pm j$
($\lambda_{-j}=\lambda_{j}$).

\noindent
(ii) If a stable bifurcating path exists, it branches forward.
\end{proposition}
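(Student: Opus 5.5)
We would mirror the argument used for Propositions~\ref{BifurCurveV_2}--\ref{Stability_of_Bifcurve}, now at the twin cities $\bm{\lambda}^{\rm Twin}_j$. Near the sustain point with $v_0-\bar{v}=0$, the remaining places $\delta\neq 0,\pm j$ still satisfy $v_\delta-\bar{v}<0$, since the sustain point is a sustain point for the whole system. Hence they stay unpopulated, and only $\lambda_{-j},\lambda_0,\lambda_j$ can be nonzero. Using \eqref{ConservationLaw}, we parametrize by $(\lambda_0,\mu)$ with $\lambda_{\pm j}=(1-\lambda_0\pm\mu)/2$. The governing equation \eqref{gov.eq.general.F} then reduces to $[v_0-\bar{v}]\lambda_0=0$ together with $[v_{\pm j}-\bar{v}]\lambda_{\pm j}=0$.

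\textbf{Part (i).} The symmetry conditions \eqref{symmetry conditions App} show that the map $\mu\mapsto-\mu$ preserves the system. At the twin cities, $v_{-j}=v_j=\bar{v}$, and the Jacobian in the $(\lambda_{-j},\lambda_j)$ directions (restricted to $\lambda_0=0$) is nonsingular there. Here we assume genericity: the eigenvalue from the mode $\lambda_j-\lambda_{-j}$ is nonzero, because the only zero eigenvalue at this sustain point is $v_0-\bar{v}$.

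\begin{itemize}
\item \emph{Antisymmetric mode.} By the implicit function theorem applied to $v_j-v_{-j}=0$ (the antisymmetric combination, odd in $\mu$), the unique solution near $\mu=0$ is $\mu=0$, i.e., $\lambda_{-j}=\lambda_j$.
\item \emph{Symmetric equation.} With $\mu=0$, the remaining symmetric equation $v_j-\bar{v}=0$ is automatically equivalent to $v_0-\bar{v}=0$ when $\lambda_0>0$, since $\bar{v}=\lambda_0v_0+(1-\lambda_0)v_j$.
\item \emph{Factoring out the trivial branch.} The system reduces to the scalar equation $G(\lambda_0,\phi)\lambda_0=0$ with $G=v_0-v_j$. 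One factor gives the twin cities $\lambda_0=0$. The other gives $G(\lambda_0,\phi)=0$, which by the implicit function theorem (assuming $\partial G/\partial\phi\neq0$ at the sustain point) defines a unique curve $\phi=\phi(\lambda_0)$ through the sustain point.
\end{itemize}

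One-city branches with $\lambda_{-j}\neq\lambda_j$ are excluded by the nondegeneracy of the antisymmetric mode. This yields the unique bifurcating path of (i).

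\textbf{Part (ii).} We would compute the Jacobian of $\bm F$ along the bifurcating path to leading order in $\lambda_0$. In the basis consisting of the symmetric mode $\lambda_0$, the antisymmetric mode $\mu$, and the empty places, it is block diagonal by symmetry.

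\begin{itemize}
\item The empty places contribute eigenvalues $v_\delta-\bar{v}<0$.
\item The antisymmetric mode contributes an eigenvalue that is continuous from the twin-city value, so its sign is the same on both sides.
\item The symmetric mode contributes the critical eigenvalue $\lambda_0\,\partial_{\lambda_0}(v_0-\bar{v})$ plus higher-order terms. On the path, $v_0-\bar{v}=0$, so this eigenvalue equals $\lambda_0(1-\lambda_0)\,\partial_{\lambda_0}G$. Differentiating $G(\lambda_0,\phi(\lambda_0))=0$ gives $\partial_{\lambda_0}G=-G_\phi\,\phi'(\lambda_0)$.
\end{itemize}

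On the twin-city side, the corresponding eigenvalue is $G(0,\phi)\approx G_\phi(\phi-\phi^{\rm s})$. Stability of the bifurcating path requires $\partial_{\lambda_0}G<0$, i.e., $G_\phi\,\phi'>0$. So along the path, $\phi-\phi^{\rm s}$ has the sign of $G_\phi$. This is exactly the side on which the twin cities are unsustainable ($G>0$), i.e., the path branches forward, opposite to the stable twin-city segment.

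The main obstacle is the careful sign bookkeeping in (ii), and in particular justifying that the critical eigenvalue is governed by $\partial_{\lambda_0}G$ despite the replicator factor $\lambda_0$ and the coupling with the symmetric part of the $\lambda_{\pm j}$ block. To handle this, we would first eliminate the $\lambda_{\pm j}$ directions by a Schur complement, reusing the analogous computation behind Lemma~\ref{Stability_Bifcurve Lemma}(iv).
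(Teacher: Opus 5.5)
Your proposal is correct. It follows the same mechanism as the paper, but proves it rather than citing it. The paper's proof is one sentence: it declares this sustain point a transcritical bifurcation point and takes both the branch structure and the exchange of stability from the cited literature.

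You derive that structure directly, in three steps:
\begin{enumerate}
\item Reflection symmetry makes $v_j-v_{-j}$ odd in the antisymmetric coordinate, so the implicit function theorem forces $\lambda_{-j}=\lambda_j$.
\item The replicator factor then collapses the system to $\lambda_0\,G(\lambda_0,\phi)=0$. This splits off the invariant twin-city branch and leaves the single curve $G=0$.
\item The identity $\partial_{\lambda_0}G=-G_\phi\,\phi'$ gives the exchange of stability.
\end{enumerate}

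What your route buys is that the hypotheses hidden in the word ``transcritical'' become explicit. One is a nonzero eigenvalue $\tfrac12\partial_\mu(v_j-v_{-j})$ for the antisymmetric mode; the paper's list of twin-city eigenvalues does not mention this mode at all. The other is transversality, $G_\phi\neq 0$. In addition, your sign argument does not need $\phi'(0)\neq 0$: along a path that is stable for all small $\lambda_0>0$, the sign of $\phi'$ alone fixes the sign of $\phi-\phi^{\rm s}$.

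The obstacle you flag at the end does not arise. On the face where only $\lambda_0$ and $\lambda_{\pm j}$ are positive, the conservation law leaves just the coordinates $(\lambda_0,\mu)$. Since $\dot{\lambda}_0$ is even in $\mu$ and $\dot{\mu}$ is odd in $\mu$, the $2\times 2$ Jacobian is diagonal at $\mu=0$. The critical eigenvalue is therefore exactly $\frac{d}{d\lambda_0}\bigl[\lambda_0(1-\lambda_0)G\bigr]$. This equals $G(0,\phi)$ on the twin-city branch and $\lambda_0(1-\lambda_0)\partial_{\lambda_0}G$ on the bifurcating path, so no Schur complement is needed.

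A minor point: rename your coordinate $\mu$, which clashes with the FO model parameter.
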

\begin{proof}
This bifurcation point is a transcritical bifurcation point
and its bifurcation mechanism and stability are studied in the literature
\cite[Section 2.5.2]{Ikeda.Murota.2019}.
\end{proof}

For the other sustain point 
associated with $v_{\pm \delta}-\bar{v}=0$ $(\delta\in \{1,\ldots,k\})$,
we have the following proposition,
similarly to Propositions~\ref{BifurCurveV_2}, \ref{Stability_Bifcurve},
and \ref{Stability_of_Bifcurve}
for the full agglomeration $\bm{\lambda}^{\rm FA}_{0}$. 

\begin{proposition}\label{twin sustain double}
At a sustain point of 
$\bm{\lambda}^{\rm Twin}_{j}$ 
		with $v_{\pm \delta}-\bar{v}=0$
				$(\delta \in \{1,\ldots,k\},~\delta\neq j)$,
		we have: 

\noindent
(i) A bifurcating solution 
	has positive populations only in
		a pair of twin places $i=\pm j$ and $\pm \delta$.
		
		\noindent
(ii) The other solution has positive populations in three places  
		$i=\pm j$ and $i=-\delta$ or $\delta$.
		
		\noindent
(iii) If a stable bifurcating path exists, it branches forward.
\end{proposition}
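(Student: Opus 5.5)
The plan is to reduce this sustain point of $\bm{\lambda}^{\rm Twin}_j$ to the same two-dimensional bifurcation problem that underlies Propositions~\ref{BifurCurveV_2}, \ref{Stability_Bifcurve} and \ref{Stability_of_Bifcurve}, with the twin pair $\pm j$ now playing the role of the central place $0$.

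\textbf{Reduction.} Near the sustain point, only the components $\lambda_{\pm j}$ and $\lambda_{\pm\delta}$ can be nonzero. All other places have $\lambda_i=0$ and $v_i-\bar v<0$ by the definition of a sustain point, so the replicator equations \eqref{F_expression} keep them at zero. On the subspace $\lambda_{-j}=\lambda_j$ we write $\lambda_{\pm j}=(1-\lambda_{-\delta}-\lambda_\delta)/2$. The equations for the twin places are then solved by the regular (nonzero) eigen-direction of the twin pair, using the implicit function theorem. The point is that the eigenvalues of the Jacobian associated with the interior components $\lambda_{\pm j}$ are nonzero at the sustain point, because only $v_{\pm\delta}-\bar v$ vanishes there.

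After this elimination, the bifurcation equation takes exactly the form \eqref{Eqs pm j}, with $(\lambda_{-\delta},\lambda_\delta)$ in place of $(\lambda_{-j},\lambda_j)$. The reflection $i\mapsto -i$ fixes the pair $\pm j$ and swaps $-\delta$ with $\delta$, so the reduced functions inherit the symmetry conditions \eqref{symmetry conditions App}.

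\textbf{Solution curves, parts (i) and (ii).} Each equation factors into $\lambda_{\pm\delta}=0$ or $v_{\pm\delta}-\bar v=0$. This gives the same case split as in the proof of Proposition~\ref{BifurCurveV_2} in Appendix~\ref{Proof Solutions}:
\begin{itemize}
\item both $\lambda_{\pm\delta}=0$, which is the twin cities themselves;
\item $\lambda_{-\delta}=\lambda_\delta>0$, found by solving the single equation $v_\delta-\bar v=0$ on the symmetric subspace, which gives part (i);
\item exactly one of $\lambda_{\pm\delta}$ positive, which gives part (ii) and yields three populated places $\pm j$ and $\delta$ (or $-\delta$).
\end{itemize}
For the symmetric branch, I would justify solvability by assuming $\partial(v_\delta-\bar v)/\partial\phi\neq0$, the same generic transversality used earlier. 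Then $\phi$ is given as a function of $\lambda_\delta$ along that branch.

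\textbf{Stability, part (iii).} I would follow Lemma~\ref{Stability_Bifcurve Lemma}(iv). Along a bifurcating branch, the Jacobian has an eigenvalue for each newly positive component, approximately $\lambda_\delta\,\partial(v_\delta-\bar v)/\partial\lambda_\delta$. It also has an eigenvalue $v_{\mp\delta}-\bar v$ for any component that remains zero. The remaining eigenvalues stay negative by continuity.

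Stability of the twin pattern for $\phi$ on the sustainable side means $v_{\pm\delta}-\bar v<0$ there; write $\phi_{\rm s}$ for the sustain value of $\phi$. On the branch, $v_\delta-\bar v=0$, and to first order
\[
0=\partial_\phi(v_\delta-\bar v)\,(\phi-\phi_{\rm s})+\partial_{\lambda_\delta}(v_\delta-\bar v)\,\lambda_\delta .
\]
Stability requires $\partial_{\lambda_\delta}(v_\delta-\bar v)<0$. Together with the sign of $\partial_\phi$ fixed by sustainability, this forces $\phi-\phi_{\rm s}$ to lie on the unsustainable side. In other words, the branch branches forward.

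\textbf{Main obstacle.} The delicate step is the reduction. I must check that eliminating the twin components does not alter these first-order signs, since $\bar v$ depends on $\lambda_{\pm j}$ through the conservation law. I would handle this by computing the reduced Jacobian at the sustain point in block-triangular form. Because the $\lambda_{\pm\delta}$ rows vanish to first order, the $\pm\delta$ block decouples, and its eigenvalues are exactly $v_{\pm\delta}-\bar v$ as stated before the proposition.
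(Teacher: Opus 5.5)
Your route is the one the paper intends. The paper gives no separate proof of this proposition, only ``similarly to Propositions~\ref{BifurCurveV_2}--\ref{Stability_of_Bifcurve}'': reduce to the form \eqref{BifEqu} and rerun Lemma~\ref{Stability_Bifcurve Lemma} with the pair $\pm j$ playing the role of the center. Your explicit elimination of the twin components is exactly the step the paper leaves implicit, but as written it goes wrong in three places.

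First, you cannot work on the subspace $\lambda_{-j}=\lambda_j$. If you impose it while $\lambda_{-\delta}\neq\lambda_\delta$, then generically $v_j\neq v_{-j}$, so $F_{\pm j}=0$ fails and branch (ii) is lost. Instead write $\lambda_{\pm j}=(1-\lambda_{-\delta}-\lambda_\delta\pm\eta)/2$ and solve $\lambda_j(v_j-\bar v)-\lambda_{-j}(v_{-j}-\bar v)=0$ for $\eta$ by the implicit function theorem. Symmetry gives $\eta=0$ on branch (i), but not on branch (ii).

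Second, the derivative this needs is $\tfrac12\,\partial_\eta(v_j-v_{-j})$ at the twin state. That is an interior eigenvalue, not of the form $v_i-\bar v$, and it is absent from the list given before Proposition~\ref{twin sustain Bifurcation}. Its nonvanishing does not follow from ``only $v_{\pm\delta}-\bar v$ vanishes''. It is an extra nondegeneracy hypothesis, and its negativity is part of what stability of the twin cities on the sustainable side must mean.

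Third, the $\pm\delta$ block does not decouple along the bifurcating branches. Its entries in the $\eta$ column, $\lambda_{\mp\delta}\,\partial_\eta(v_{\mp\delta}-\bar v)$, are $O(w)$. The entries of the $\eta$ row in the $\pm\delta$ columns are $O(1)$. So the coupling enters at the same order as the block itself. What rescues the argument is that, at stationary points, the Schur complement of the $\eta$ entry coincides exactly with the Jacobian of the reduced two-variable system. Hence $a,b,c$ must be read as derivatives of the reduced functions, including the $\partial\eta/\partial\lambda_{\pm\delta}$ terms.

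In (iii), two of your claims hold only for the one-city branch: that each newly positive component contributes an eigenvalue $\approx\lambda_\delta\,\partial_{\lambda_\delta}(v_\delta-\bar v)$, and that the branch obeys a first-order relation containing only the $\lambda_\delta$ term. On that branch $\hat J_2\approx w\begin{pmatrix} b&c\\0&c-b\end{pmatrix}$ and the branch equation is $a\psi+bw=0$, so you are fine there. On the symmetric branch, however, $\hat J_1\approx w\begin{pmatrix} b&c\\c&b\end{pmatrix}$ has eigenvalues $(b\pm c)w$, not $bw$ twice, and the branch equation is $a\psi+(b+c)w=0$. So ``stability requires $b<0$'' does not fix the sign of $\psi$; take $c>-b>0$, which gives a backward branch.

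The correct argument uses the eigenvalue along the branch direction: $(b+c)w$ for $(1,1)$ and $bw$ for $(1,0)$. Stability forces that eigenvalue to be negative, and it carries the same coefficient as the branch equation. Hence $a\psi=-(\cdot)\,w>0$, placing $\psi$ on the unsustainable side. This is precisely Lemma~\ref{Stability_Bifcurve Lemma}(i), (ii), (iv).

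Finally, your three-case split, like the paper's, is not exhaustive. To justify ``the other solution'' in (ii), you must rule out solutions with $\lambda_{-\delta}\neq\lambda_\delta$ both positive. Subtracting the two reduced equations and using the symmetry conditions gives $(\lambda_{-\delta}-\lambda_\delta)\,(b-c+\cdots)=0$. This excludes such solutions near the sustain point under the further nondegeneracy assumption $b\neq c$.
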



\section{Economic geography models}\label{Economic geography models}

We employ three well-known economic geography models, namely the 
FO model
\cite{Forslid.Ottaviano.2003},
the PFSU model 
\cite{Pfluger.Sudekum.2008},
and the MT model \cite{Murata.Thisse.2005}.
We carefully choose these models, as explained in Section~\ref{SecIntro}.

The FO and PFSU models
are analytically solvable and
hence are more tractable compared to other economic geography
models. 
Such analytical tractability allows us to analyze the existence and uniqueness of bifurcation points for the full agglomeration.

The preliminary analysis of the MT model showed that the peripheral cities
always emerge next to the center $(i=\pm 1)$.
For this reason, the MT model is not suitable for our purpose of
finding the location of the peripheral cities.
Appendix~\ref{AnalysisMT} gives assumptions of this model.

\subsection{Common ground in economic modeling}

We introduce some general assumptions and 
establish a common ground for the economic geography models.
There are two factors of production (skilled and unskilled workers) and two
sectors (manufacturing, M, and traditional, A). 
Both types of workers consume final goods of two kinds: manufacturing
sector goods and traditional sector goods. 
Workers inelastically supply one unit of each type of labor. 
Skilled workers are mobile between places. 
The number of skilled workers in place $i\in N$ is denoted
by $\lambda_{i}$ under the constraint
 $\sum_{i\in N}\lambda_{i}=1$ in \eqref{ConservationLaw}.
The unskilled workers are immobile and distributed equally
with $L_{i}=L/K$ $({}^\forall i \in N)$.

The transportation costs for M-sector goods
take the iceberg form. That is, for each unit of M-sector goods
transported from place $i$ to place $j\neq i$, only a fraction $1/\tau_{ij}<1$
actually arrives ($\tau_{ii}=1$). It is assumed that 
$\tau_{ij}=\exp(\tau \ m(i,j) \ \tilde{L})$
is a function of a transport cost parameter $\tau>0$, where $m(i,j)$
is an integer expressing the road distance between places $i$ and $j$
and $\tilde{L}$ is the distance unit.
We have $m(i,j)=|i-j|$ for the long narrow economy.
 As the bifurcation parameter,
we introduce the trade freeness (a converse measure of transport costs):
\[
\phi=\exp[-\tau(\sigma-1)\tilde{L}]\in(0,1),
\]
where $\sigma~(>1)$ is a parameter that expresses the constant elasticity of substitution between manufactured varieties.
Then we represent by $\phi_{ij}=\phi^{|i-j|}$
the accessibility between places $i$ and $j$.

\subsection{The FO model}

The FO model has the utility of the form:
\[
u_{i}=\mu\log C_{i}+(1-\mu)\log A_{i}
\]

\noindent
in place $i$,
where $\mu\in(0,1)$ is the share of income spent on manufactures,
$C_{i}=\left[\int_{s\in S}c_{i}(s)^{\frac{\sigma-1}{\sigma}}ds\right]^{\frac{\sigma}{\sigma-1}}$
is the CES composite of manufactures, 
and $c_{i}(s)$ is the consumption
in place $i$ of a variety $s$ of manufactures. 
We assume the no-black-hole condition 
\cite{Forslid.Ottaviano.2003}
\begin{align}\label{no-black-hole} &
\mu < \sigma - 1,
\end{align}
since the violation of this condition denies the existence of bifurcation 
and is empirically unrealistic.

The indirect
utility in place $i$ is given by 
\cite{Akamatsu.et.al.2023}:
\begin{align}\label{IndiUtiFE}
	v_{i}(\boldsymbol{\lambda})=\frac{\mu}{\sigma-1}\log\Delta_{i}(\boldsymbol{\lambda})+\log w_{i}(\boldsymbol{\lambda})+\zeta,
\end{align}

\noindent
where $\Delta_{i}(\boldsymbol{\lambda})=\sum_{j\in N}\phi_{ij}\lambda_{j}$,
$\zeta$ is a constant term,
and $w_i$ is the nominal wage in place $i$
that is given by (see \citet{Gaspar.2020}):
\begin{equation}
w_i(\boldsymbol{\lambda}) =\frac{\mu}{\sigma}\sum_{j \in N}\dfrac{\phi_{ij}\left(1+w_{j}\lambda_{j}\right)}{\sum_{m \in N}\phi_{mj}\lambda_{m}},
	\label{eq:nominal wage FE}
\end{equation}

\noindent
where we use $L_{i}=1$.

For the full agglomeration 
$\bm{\lambda}^{\mathrm{FA}}_0$,
the indirect utilities in places   
$i=0,~\pm j$ 
$(\neq 0)$
are expressed explicitly as
(see Appendix~\ref{The FE model Appendix}) 
\begin{align}
&\hspace{-2mm} v_0 = \log \frac{\hat{\mu}}{1-\hat{\mu}}
	(2k+1), \qquad \hat{\mu} = \frac{\mu}{\sigma} \in (0, 1);
	\label{Payoff_full_Agglomeration_Center}
	\\  \hspace{-2mm}
 & v_j=v_{-j}
	= \log\frac{\hat{\mu}}{1 - \hat{\mu}} +
	\frac{j \mu}{\sigma - 1}\log\phi  
	+ \log
	\left\{
	(\hat{\mu} k + k + 1)\phi^{j} + (1 - \hat{\mu} )
	\left[ (k-j) \phi^{-j} + \sum^{j}_{p=1} \phi^{j-2p} \right] \,
	\right\}. \hspace{-2mm}
	\label{Payoff_full_Agglomeration}
\end{align}

\subsection{The PFSU model}

The PFSU model \cite{Pfluger.Sudekum.2008}
 differs from the FO model in that  
agents have the utility of 
 quasi-linear logarithmic form
and consume housing measured in floor areas, 
which produces a local dispersion force.
The price of housing depends on the number of agents.
The utility is given by
\[
u_{i}=\alpha\log C_{i}+\gamma\log H_{i}+A_{i},
\]

\noindent
where $\alpha$ $(>0)$
 is a preference parameter towards manufactured goods,
 $H_{i}$ stands for the consumption of housing goods,  and $\gamma >0$ is a parameter representing preferences towards consumption of housing. 

 The indirect
utility of a mobile agent is given by \cite{Akamatsu.et.al.2023}:
\begin{align}\label{IndiUtiPFSU}
	v_{i}=
	\frac{\alpha}{\sigma-1}\log\Delta_{i}(\boldsymbol{\lambda}) +
	w_{i}(\boldsymbol{\lambda}) - 
	\gamma \log\frac{\lambda_{i}+1}{h_{i}} + \xi,
\end{align}
where $\xi$ is a constant term and $h_{i}$ denotes the share of housing stock in place $i$. 
The nominal wage $w_i$ is given by
\begin{align}\label{Nominal wage PF}
	w_{i}(\boldsymbol{\lambda})=\frac{\alpha}{\sigma}
	\sum_{j\in N}\dfrac{\phi_{ij}\left(1+\lambda_{j}\right)}
	{\sum_{m\in N}\phi_{mj}\lambda_{m}},
\end{align}
\noindent
where we use $L_i=1$.

The indirect utilities 
for $\bm{\lambda} = \bm{\lambda}^{\mathrm{FA}}_0$
 are given by (see Appendix~\ref{The PFSU model Appendix})
\begin{align}
	& v_{0} =\dfrac{\alpha}{\sigma}\left(2k+2\right)-\gamma\log2 + \xi,
	\label{eq:indirectutilitiesPFSUmodel0} \\
	& 
v_j=v_{-j}=\dfrac{\alpha}{\sigma}
	\left[ \phi^{j}(k+2)+\phi^{-j}(k-j)+\frac{\phi^{j}-\phi^{-j}}{\phi^{2}-1}\right]
	 +\dfrac{\alpha j}{\sigma-1}\log\phi + \xi 
	 \quad (1\le j \le k).
	 \label{eq:indirectutilitiesPFSUmodel}
\end{align}


\section{Bifurcation mechanism of economic geography models}\label{Stability of full agglomeration}

We elucidate how peripheral
cities emerge for the three economic geography models
(Section~\ref{Economic geography models}).
Although these models display different kinds of bifurcation behaviors,
the bifurcation mechanism 
presented in Section~\ref{Bifurcation theory full}
synthetically describe these behaviors.

\subsection{Stability of full agglomerations at several places}\label{Other than center}

To demonstrate
that the center is the best place for the full agglomeration
regarding stability,
we conduct stability analysis
of $K=7$ places
for the three models 
with typical values of economic parameters.
The red solid line in Fig.~\ref{SustainabilityMONO}
shows the range of $\phi \in (0, 1)$ in which
the full agglomeration $\bm{\lambda}^{\rm FA}_j$
$(0 \le j \le 3)$ is stable.
Recall that 
$\bm{\lambda}^{\rm FA}_{j}$ 
is an invariant pattern (Proposition~\ref{Invariant proposition}).

For the FO
and PFSU models (Figs.~\ref{SustainabilityMONO}(a) and (b)),
the full agglomeration 
$\bm{\lambda}^{\rm FA}_0$ at the center has the 
widest range of stable state $\phi\in(\phi^{\rm s},1)$
and is the one that becomes stable (sustainable) first among 
full agglomerations when the trade freeness increases from a low value.
For the MT model,
the full agglomeration at the center is
equally predominant as the full agglomerations elsewhere.

\begin{figure}[!b]
\begin{tiny}
\begin{center}
\begin{tabular}{c@{\hspace{-4mm}}c@{\hspace{-4mm}}c}
\includegraphics[width = 50mm]{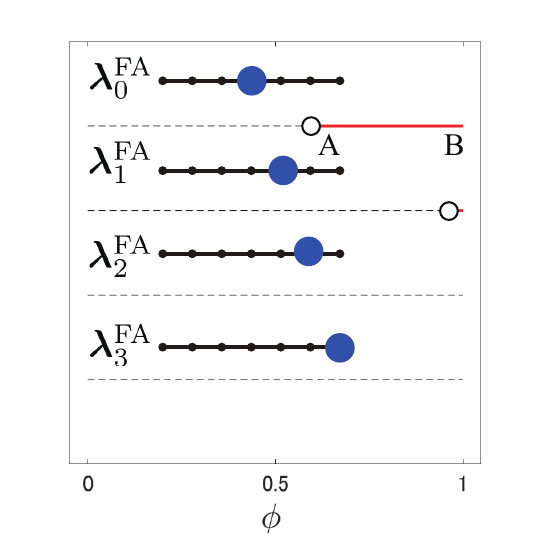} &
\includegraphics[width = 50mm]{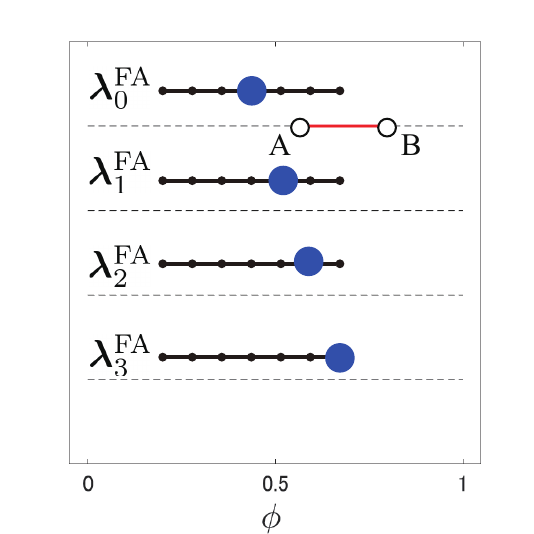} &
\includegraphics[width = 50mm]{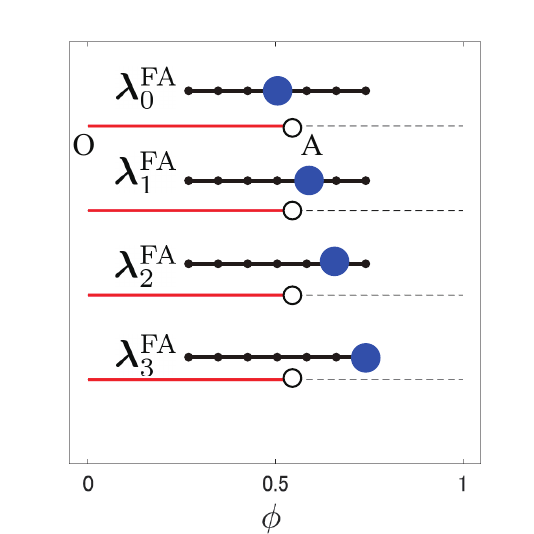}
\\
(a) The FO model & (b) The PFSU model &
(c) The MT model
\end{tabular}
\end{center}
\end{tiny}
\caption{
The range of $\phi$ of stable full agglomerations
$\bm{\lambda}=\bm{\lambda}^{\rm FA}_j$
($\sigma = 6.0$
for all the
three models;
$\mu = 0.4$
that satisfies the no-black-hole condition ($\mu<\sigma-1$)
for the FO model;
$(\alpha, \gamma) = (0.8, 0.1)$
for the PFSU model;
$\theta=0.2$ for the MT model
\cite{Murata.Thisse.2005};
red solid line: stable; broken line: unstable;
$\bigcirc$: sustain point)}
\label{SustainabilityMONO}
\end{figure}
\FloatBarrier

We, accordingly, propose the following conjecture and
examine the bifurcation of $\bm{\lambda}^{\rm FA}_0$ specifically.

\begin{conjecture}
	The center is the best place for the full agglomeration
	regarding stability.
\end{conjecture}

\subsection{Bifurcation at a sustain point}\label{Bifurcation from a full}

We conduct a numerical bifurcation analysis of $K = 7$ places
for the three economic geography models.
Figure~\ref{BifAnaPFFOPFSU} shows the paths of full agglomeration
$\bm{\lambda}^{\rm FA}_0$ at the center
and bifurcating paths that branch at the sustain points shown by $(\circ)$.
The vertical axis is the population $\lambda_0$
at the central place $(i=0)$
and the horizontal axis is the trade freeness $\phi \in (0,1)$.
Solid curves express stable solutions,
while broken curves denote unstable ones.

\begin{figure}[!t]
\begin{small}
\begin{center}
\begin{tabular}{ccc}
\includegraphics[scale=0.6]{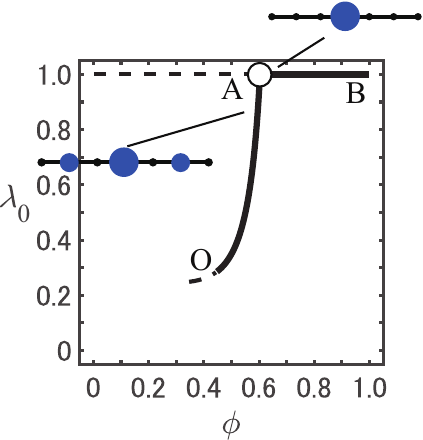} &
\includegraphics[scale=0.6]{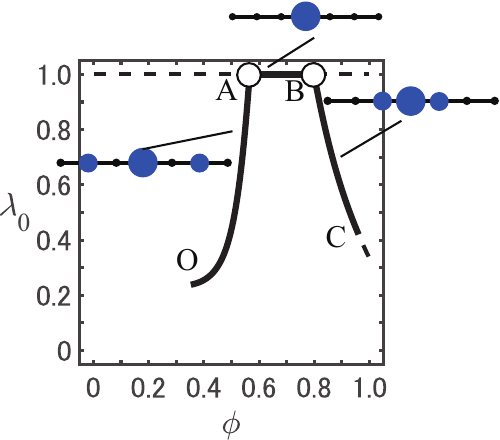} &
\includegraphics[scale=0.6]{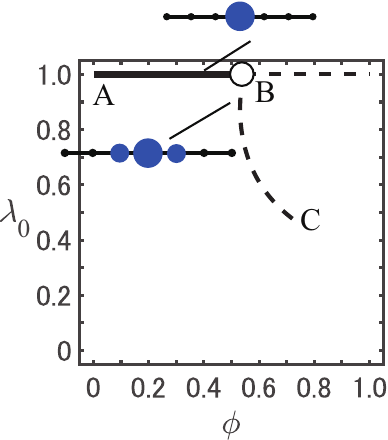} \\
(a) The FO model & (b) The PFSU model &
(c) The MT model
\end{tabular}
\end{center}\end{small}
\caption{
Bifurcating paths
emanating from the state of the full agglomeration at the center
(parameter values employed are given in
the caption of Fig.~\ref{SustainabilityMONO};
solid line: stable; broken line: unstable;
$\bigcirc$: sustain point)}
\label{BifAnaPFFOPFSU}
\end{figure}
\FloatBarrier

First, we observe the sustain point(s) in the stable full agglomeration
(during AB).
\begin{itemize}
\item
For the FO model,
the full agglomeration
has the unique sustain point $\phi^{\rm s}$ at point A,
and is stable for $\phi\in(\phi^{\rm s},1)$
and unstable for $\phi\in(0,\phi^{\rm s})$.
\item
For the PFSU model,
the full agglomeration has two sustain points
$\phi^{\rm s1}$ and $\phi^{\rm s2}$
(at A and B, respectively),
stable for $\phi\in(\phi^{\rm s1},\phi^{\rm s2})$
and unstable for
$\phi\in\left\{ (0,\phi^{\rm s1})\cup(\phi^{\rm s2},1)\right\}$.
\item
For the MT model,
as $\phi$ increases,
the stable full agglomeration
becomes unstable at the sustain point B.
\end{itemize}

By a bifurcation analysis at each sustain point, 
we find two kinds of bifurcating paths:
one with a single peripheral city and another with two peripheral cities
(cf., Proposition~\ref{BifurCurveV_2}).
The bifurcating paths with one peripheral city are unstable for all the models;
accordingly, only the curves
with two peripheral cities (OA and BC)
are included in Fig.~\ref{BifAnaPFFOPFSU}.
There is at most one stable bifurcating path
for each sustain point (cf., Proposition~\ref{Stability_Bifcurve}).

For the FO and PFSU models,
each sustain point has one stable bifurcating path
just after bifurcation.
As $\phi$ increases from a small value,
the stable bifurcating path OA
transits to stable full agglomeration
$\bm{\lambda}^{\rm FA}_0$ (AB).
There is a continuation of stable paths
(cf., Proposition~\ref{Stability_of_Bifcurve}).
The bifurcated path OA displays a core--periphery pattern
with a large central city surrounded by two peripheral cities
located two steps away from the center
($\delta_{\rm s}=2$).
The full agglomeration emerges at the center
by steadily absorbing
and finally nullifying the (mobile) population of peripheral cities.
The PFSU model encounters
peripheral city formation both for sufficiently high and low
trade freeness.
For the MT model, the bifurcating path branches backward
($\phi<\phi^{\rm s}$)
and is unstable just after bifurcation
(cf., Proposition~\ref{Stability_of_Bifcurve}).

\subsection{Parameter dependency of
the location of peripheral places}%
\label{Parameter dependence FE}

We numerically obtain the distance $\delta_{{\rm s}}$
 from the center
of the twin peripheral cities
for the FO model with $K=7$ places.
Figure~\ref{SustainBifFOPFPFSU}(a) shows the contours of $\delta_{{\rm s}}$
in the range $(0,1)\times (0,1)$ of the space of $(1/\sigma,\mu)$
for model parameters.
As $1/\sigma$ and $\mu$ increase, $\delta_{{\rm s}}$
enlarges one by one from $\delta_{{\rm s}}=1$.
Smaller $\sigma$ enlarges economies of scale,
and larger $\mu$ increases the manufacturing sector.
An increase of $\delta_{{\rm s}}$
creates an \textit{agglomeration shadow} [Arthur, 1990; Ikeda et al., Fig.~5, 2017].
By contrast, for large $\sigma$ and small $\mu$,
we have $\delta_{{\rm s}}=1$; the peripheral cities emerge 
next to the central city,
and a discrete version of a hump-shaped megalopolis emerges.

Thus, we have observed the dependence of agglomeration patterns
on the values of economic parameters.
This dependence possibly is a source of the diversity of
the population distribution of a chain of cities worldwide
and may help
the explanation of the historical emergence of megalopolises.
This result is in line with
the analytical results on the stability
of the single agglomeration in the pioneering
work
that considered a similar
economic geography model with a one-dimensional unbounded continuous location space 
\cite{Fujita.Krugman.1995}.

The PFSU model also has parameter dependency.
Figure~\ref{SustainBifFOPFPFSU}(b)
shows the contours of $\delta_{\rm s}$ in the range $(0,1)\times (0,1)$
in $(\alpha,\gamma)$-space for $\sigma=6.0$.
While $\delta_{\rm s} = 1$ and 2
are attainable at the lower sustain point A,
only $\delta_{\rm s} = 1$ at the upper one B
in Fig.~\ref{BifAnaPFFOPFSU}(b).
For the lower sustain point,
higher $\alpha$
 and lower $\gamma$
 increases $\delta_{\rm s}$.

The MT model does not have parameter dependency
as $\delta_{\rm s}=1$ holds always
(cf., Appendix~\ref{The MT model Appendix}).

\begin{figure}[!htbp]
\begin{center}
\begin{tabular}{@{\hspace{-3mm}}c@{\hspace{-3mm}}c@{\hspace{-3mm}}c}
\includegraphics[width = 5.4cm]{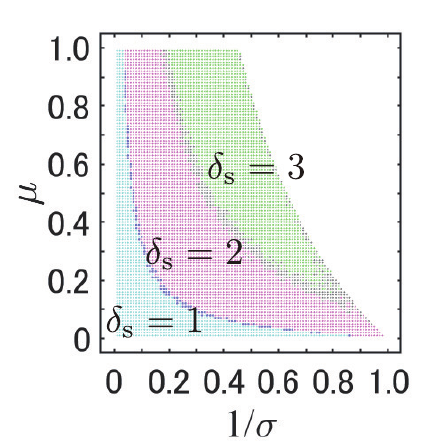} &
\includegraphics[width = 5.4cm]{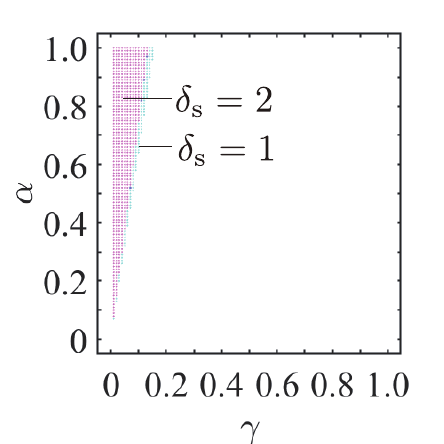} &
\includegraphics[width = 5.4cm]{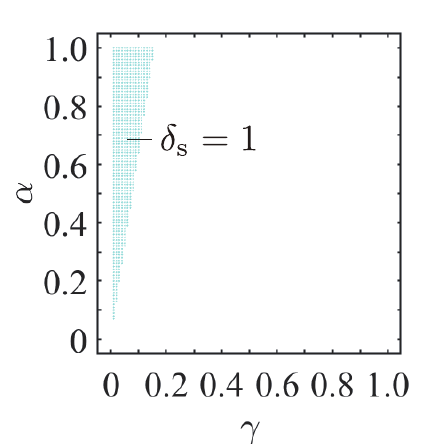} 
\\
& The lower sustain point & The higher sustain point
\\
\noalign{\vskip 0.5ex}
(a) The FO model & \multicolumn{2}{c}{(b) The PFSU model ($\sigma=6.0$)}
\end{tabular}
\end{center}
\caption{Contour maps of $\delta_{\rm s}$
for $K = 7$ places
drawn on parameter spaces
(sky blue area: $\delta_{\rm s} = 1$;
pink: $\delta_{\rm s} = 2$;
green: $\delta_{\rm s} = 3$)}
\label{SustainBifFOPFPFSU}
\end{figure}
\FloatBarrier


\section{Bifurcation analysis for the FO model}\label{AnalysisGeography}

We conduct a further bifurcation analysis for 
the FO model with several numbers of places.
The analysis for $K=5$ places  
is used in the study of the historical change in the
population distribution
of Japan's Main Island in Section~\ref{Use of EG analysis}.

\subsection{Stability of full agglomerations and twin cities}


For the FO model,
we investigate the stability of  
full agglomerations  
$\bm{\lambda}^{\rm FA}_j$
and twin cities 
$\bm{\lambda}^{\rm Twin}_j$. 
These are solutions for the governing equation for any $\phi$
 (Proposition~\ref{Invariant proposition}).
They are stable in
the ranges of $\phi$ plotted by
the red solid lines in Fig.~\ref{Sustainability of K=5-15}
for several numbers of places
($K=5,7,11,15$).

\begin{figure}[!t]
\begin{small}
  \begin{center}
    \begin{tabular}{@{\hspace{-2mm}}c@{\hspace{-2mm}}c}
      \includegraphics[scale=0.135]{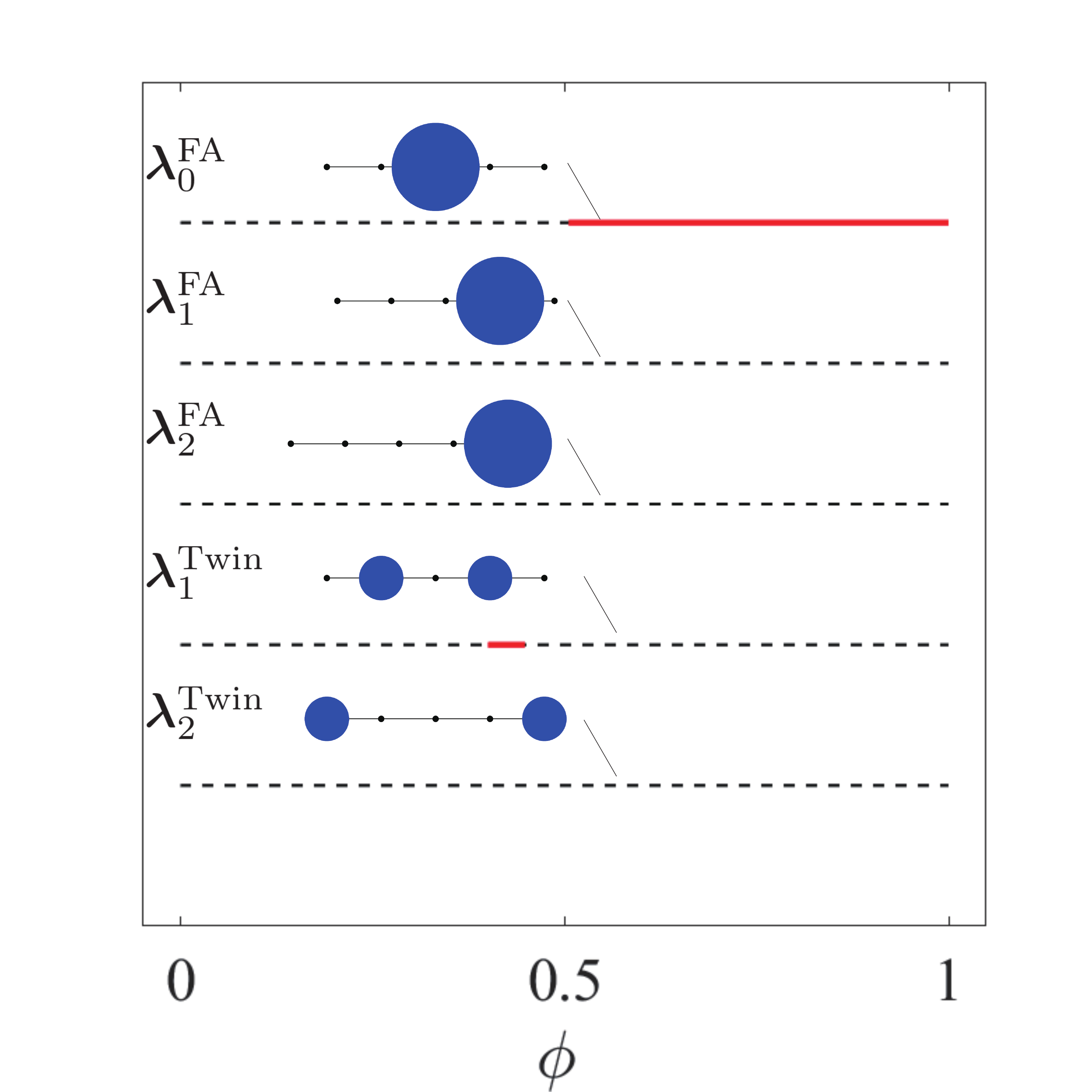} &
      \includegraphics[scale=0.135]{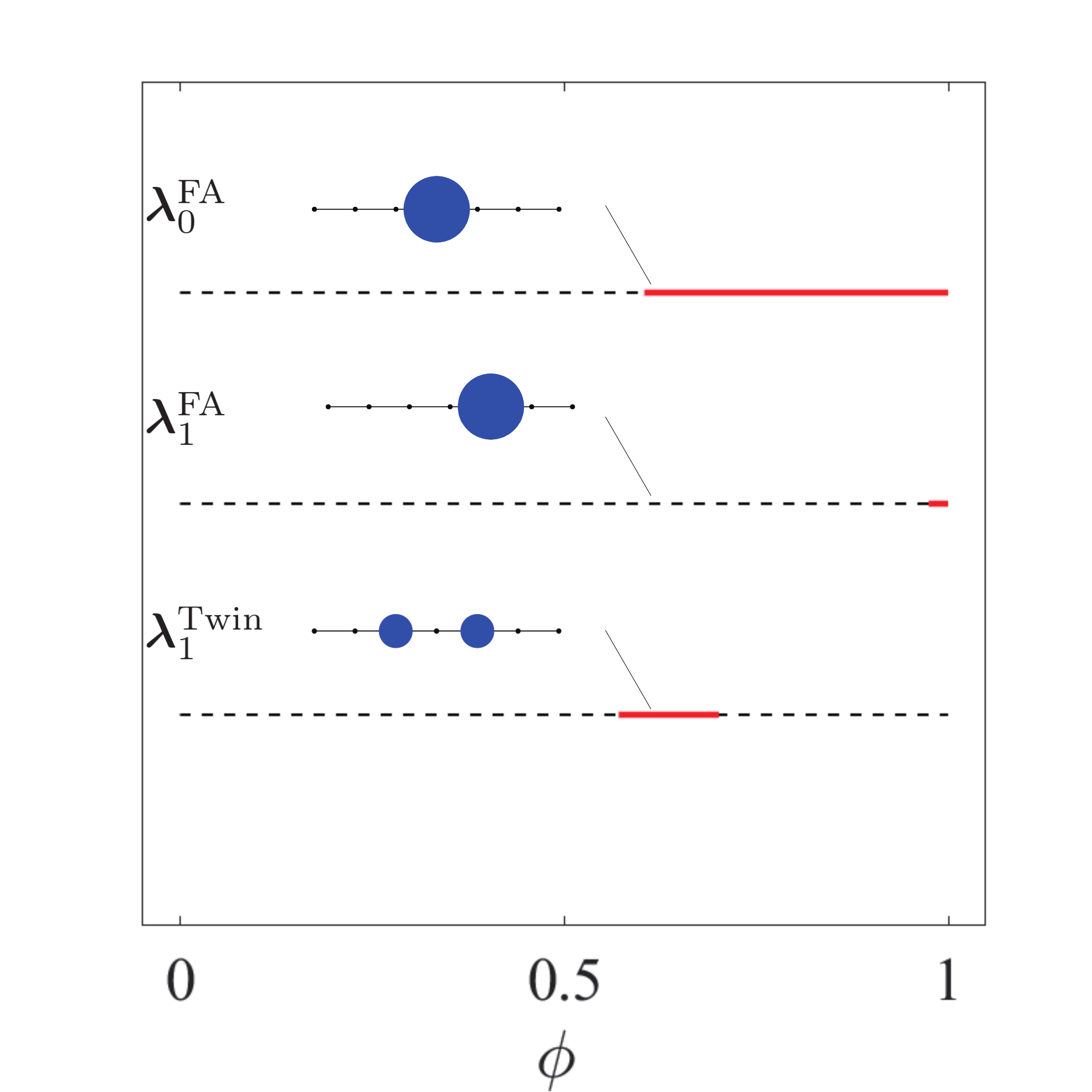} \\
(a) $K=5$ & (b) $K=7$ \\
      \includegraphics[scale=0.135]{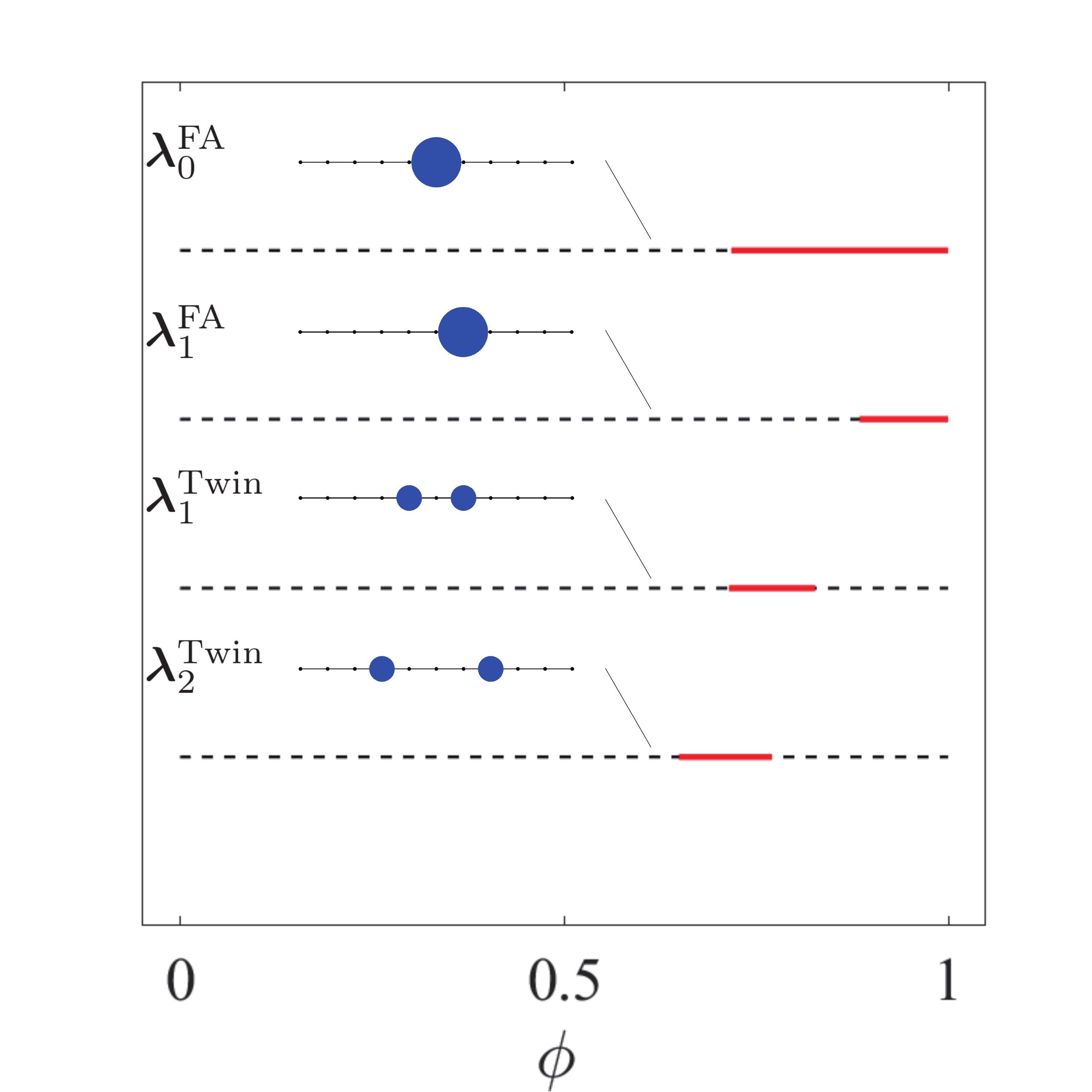} &
      \includegraphics[scale=0.135]{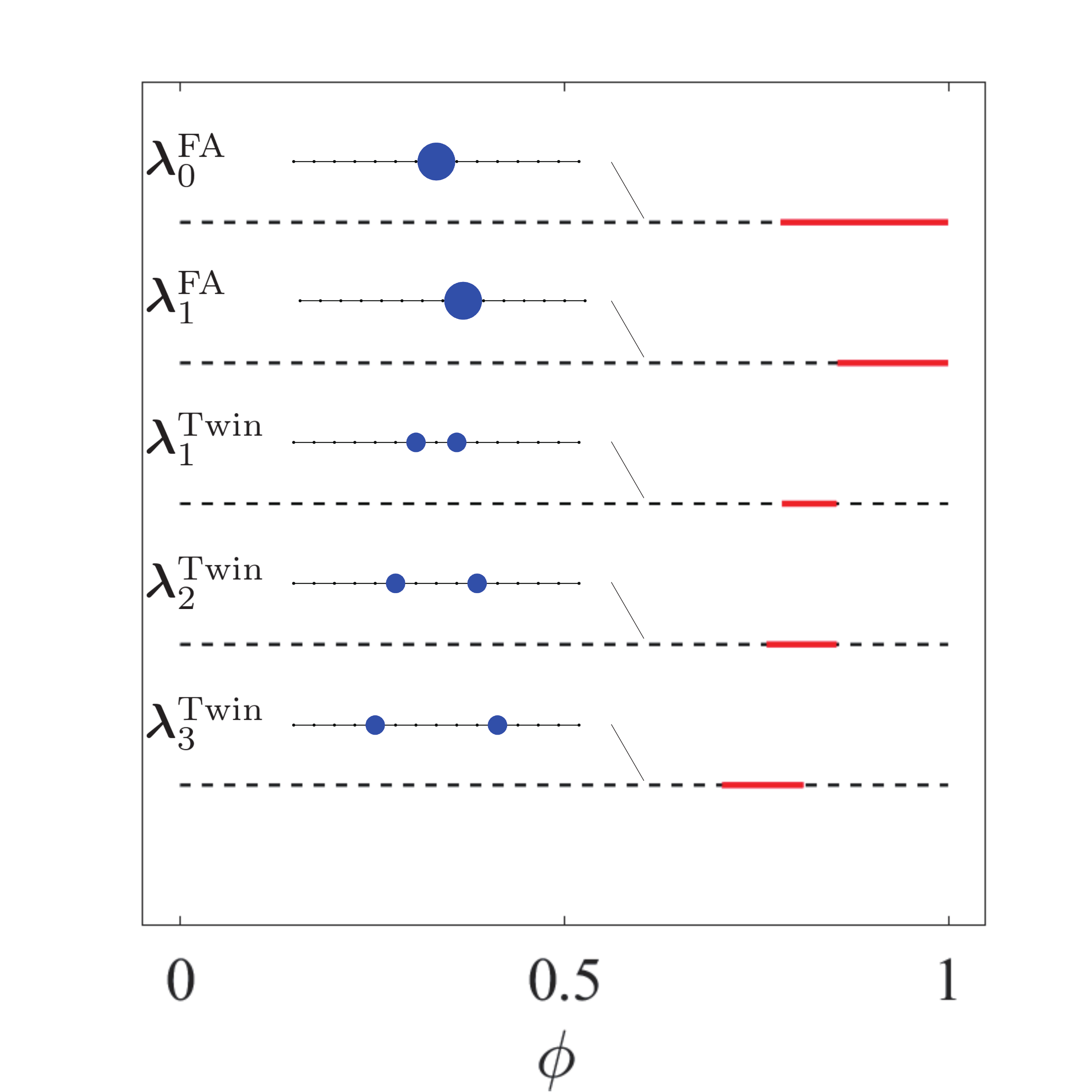} \\
(c) $K=11$ & (d) $K=15$ 
    \end{tabular}
  \end{center}
  \end{small}
  	\caption{
The range of $\phi$ of stable full agglomerations
and twin cities for the FO model for $(\sigma,\mu)=(6.0,0.4)$
and $\tilde{L}=1/6$
(red solid line: stable; broken line: unstable)}
\label{Sustainability of K=5-15}
\end{figure}
\FloatBarrier

Similarly to the result in Fig.~\ref{SustainabilityMONO} for the three models,
the full agglomeration $\bm{\lambda}^{\rm FA}_0$ at the center
has superior stability to full agglomerations elsewhere  
and becomes stable first when the trade freeness 
$\phi$ increases from a low value.
For this reason, we focus on 
$\bm{\lambda}^{\rm FA}_0$ from now on.

For a relatively small number of places 
($K=5, 7$ shown in Figs.~\ref{Sustainability of K=5-15}(a) and (b)),  
among the twin cities, 
only $\bm{\lambda}^{\rm Twin}_{1}$
located next to the center is stable  
(near $\phi=0.44$ for $K=5$ and $\phi=0.65$ for $K=7$).
For more cities $(K=11,15$ in panels (c) and (d)),
stable twin cities tend to be located outwards
($\delta_{\rm s}=1,~2$ for $K=11$ 
and $\delta_{\rm s}=1,~2,~3$ for $K=15$).
The twin cities are stable 
for intermediate values of $\phi$,
while the full agglomeration $\bm{\lambda}^{\rm FA}_0$
is stable
for large values of $\phi$.
It implies an inevitable transition from the twin cities to the full agglomeration
as $\phi$ increases from an intermediate to a large value, 
as will be observed in the numerical analysis of 
Section~\ref{AgglomerationBehaviorsMore}.

Figure~\ref{Sustainability of K=5-15} demonstrates 
the coexistence of multiple stable solutions
for some range of $\phi$.
For example, $\lambda_0^{\mathrm{FA}}$,
$\lambda_1^{\mathrm{Twin}}$, and $\lambda_2^{\mathrm{Twin}}$
are stable 
for $\phi \in [0.717, 0.770]$ for $K=11$ in panel (c).
The number of multiple stable solutions
increases from two, three to four
as $K$ increases from 7, 11, to 15.

\subsection{Sustain point of full agglomeration
at the center}\label{Sustainability NEG}

The full agglomeration
$\bm{\lambda}^{\rm FA}_0$ at the center of
the FO model
has the local sustain point 
$\phi=\phi^{\rm s}_\delta 
\in(0,1)$ ($v_\delta-\bar{v}=0$ for some $\delta~( \neq 0)$)
and the sustain point 
$\phi=\phi^{\rm s}\in(0,1)$  
($v_{\delta_{\rm s}}-\bar{v}=0$ for some $\delta_{\rm s}~(\neq 0)$
and $v_i-\bar{v}\le 0$ for all $i \neq \delta_{\rm s}$).

	\begin{lemma}\label{Sustainable_Full_Agglomeration_NEW}
For 
$\bm{\lambda}^{\rm FA}_0$ of
the FO model under the no-black hole condition \eqref{no-black-hole},
we have:

\noindent
		(i) A local sustain point $\phi=\phi^{\rm s}_\delta \in(0,1)$  
exists for every $\delta \neq 0$. 

\noindent
(ii) The local sustain point is unique
 for each $\delta\in\{1,\ldots,6\}$.
			\end{lemma}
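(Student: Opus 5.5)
We would work directly with the explicit indirect utilities \eqref{Payoff_full_Agglomeration_Center}--\eqref{Payoff_full_Agglomeration}. At $\bm{\lambda}^{\rm FA}_0$ we have $\bar{v}=v_0$, since $\lambda_0=1$. For fixed $\delta\in\{1,\ldots,k\}$ we study
\[
f_\delta(\phi)=v_\delta-v_0=\frac{\delta\mu}{\sigma-1}\log\phi+\log B_\delta(\phi)-\log(2k+1),
\]
where
\[
B_\delta(\phi)=(\hat{\mu}k+k+1)\phi^{\delta}+(1-\hat{\mu})\Bigl[(k-\delta)\phi^{-\delta}+\sum_{p=1}^{\delta}\phi^{\delta-2p}\Bigr].
\]
By the symmetry $v_{-\delta}=v_\delta$ it suffices to treat $\delta>0$. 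A local sustain point is exactly a zero of $f_\delta$ in $(0,1)$.

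\emph{Existence (part (i)).} We would read off the behaviour of $f_\delta$ at the two ends of $(0,1)$ and apply the intermediate value theorem.
\begin{itemize}
\item As $\phi\to 0$, the dominant part of $B_\delta$ is $(1-\hat{\mu})(k-\delta+1)\phi^{-\delta}$; the $p=\delta$ term of the sum contributes $\phi^{-\delta}$. Hence
\[
f_\delta(\phi)\sim \delta\Bigl(\frac{\mu}{\sigma-1}-1\Bigr)\log\phi\to+\infty ,
\]
and this is exactly where the no-black-hole condition \eqref{no-black-hole} enters.
\item At $\phi=1$, $B_\delta(1)=\hat{\mu}k+k+1+(1-\hat{\mu})k=2k+1$, so $f_\delta(1)=0$.
\item Differentiating and using $\sum_{p=1}^{\delta}(\delta-2p)=-\delta$ gives
\[
f_\delta'(1)=\frac{\delta\mu}{\sigma-1}+\frac{\delta\,[\,2\hat{\mu}k+\hat{\mu}+\delta(1-\hat{\mu})\,]}{2k+1}>0 .
\]
So $f_\delta<0$ just below $1$.
\end{itemize}
Continuity then yields a zero $\phi^{\rm s}_\delta\in(0,1)$.

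\emph{Uniqueness (part (ii)).} We would clear the negative powers. The product $\phi^{\delta}B_\delta(\phi)=P_\delta(\phi^2)$, where $P_\delta(t)$ is a polynomial of degree $\delta$ with nonnegative coefficients (the coefficient $(1-\hat{\mu})(k-\delta)$ may vanish when $\delta=k$, which is harmless). The equation $f_\delta=0$ becomes $h_\delta(\phi)=\log(2k+1)$ with
\[
h_\delta(\phi)=\delta(a-1)\log\phi+\log P_\delta(\phi^2),\qquad a=\frac{\mu}{\sigma-1}\in(0,1).
\]
Then
\[
\phi\,h_\delta'(\phi)=\delta(a-1)+\frac{2tP_\delta'(t)}{P_\delta(t)},\qquad t=\phi^2 .
\]
The key observation is that $tP'(t)/P(t)$ is nondecreasing in $t$ for any polynomial with nonnegative coefficients. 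It is the mean exponent under weights proportional to $c_it^i$, and its derivative with respect to $\log t$ is the variance of the exponent, which is $\ge 0$. Hence $\phi h_\delta'$ is nondecreasing, so $h_\delta$ is decreasing and then increasing on $(0,1)$ (quasi-convex). We already know $h_\delta\to+\infty$ at $0$, $h_\delta(1)=\log(2k+1)$ and $h_\delta'(1)>0$. Therefore $h_\delta$ dips below $\log(2k+1)$ and crosses that level exactly once in $(0,1)$.

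This argument appears to work for every $\delta$, not only $\delta\le 6$. The main obstacle is therefore checking that $P_\delta$ really has nonnegative coefficients and that the variance argument applies. As a fallback, one could verify for each $\delta\le6$ directly, by Descartes' rule of signs on the numerator of $h_\delta'$, that $h_\delta'$ has at most one sign change; this is presumably the route behind the restriction $\delta\in\{1,\ldots,6\}$ in the statement.
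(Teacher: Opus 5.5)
Your proof of (i) is the same as the paper's. You show $\mathcal{S}^{\rm FO}\to+\infty$ as $\phi\to0$, which is where $\mu<\sigma-1$ is used. You then use $\mathcal{S}^{\rm FO}(1)=0$ and a positive derivative at $\phi=1$; your formula for $f_\delta'(1)$ agrees with the paper's limit. The intermediate value theorem finishes it.

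For (ii) your route is genuinely different, and it proves more.

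\emph{The paper's route.} It writes $\partial\mathcal{S}^{\rm FO}/\partial\phi$ as a rational function with numerator $F=A_1-A_2$. For each $j=1,\ldots,6$ it expands $F_j$ explicitly as $(\phi^2-1)^2$ times a polynomial in $\phi^2$. It then applies Descartes' rule of signs case by case, handling the sign-indefinite coefficients by hand. The conclusion is that $\mathcal{S}^{\rm FO}$ has a single turning point, a minimum.

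\emph{Your route.} In the variable $s=\log\phi$ you have $f_\delta=\delta\bigl(\frac{\mu}{\sigma-1}-1\bigr)s+\log P_\delta(e^{2s})-\log(2k+1)$. This is a linear term plus a log-sum-exp of $s$, so it is convex; your variance identity is exactly this fact. A convex function on $(-\infty,0)$ that tends to $+\infty$, vanishes at $s=0$ and has positive slope there has exactly one zero.

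The point you flag as the main obstacle is immediate. We have $P_\delta(t)=(\hat{\mu}k+k+1)t^{\delta}+(1-\hat{\mu})\sum_{q=1}^{\delta-1}t^{q}+(1-\hat{\mu})(k-\delta+1)$. The $q=0$ term of the sum combines with $(k-\delta)$, so every coefficient is strictly positive, including at $\delta=k$. Since both $t^0$ and $t^\delta$ occur, the convexity is strict.

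\emph{What each approach buys.} Your argument is more general and shorter. It gives uniqueness for every $\delta\in\{1,\ldots,k\}$ without computing $A_1,A_2,A_3$. This shows that the restriction $\delta\le 6$ in (ii), and hence in Proposition~\ref{Stability_Full_Agglomeration}(ii), is an artifact of the paper's case-by-case method. The paper's computation does produce explicit polynomials, but its sign bookkeeping does not obviously extend beyond $j=6$.
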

	\begin{proof}
		See Appendix~\ref{The FE model Appendix} for the proof.
			\end{proof}

	\begin{proposition}\label{Stability_Full_Agglomeration}
For $\bm{\lambda}^{\rm FA}_0$ of
the FO model under the no-black hole condition \eqref{no-black-hole},
we have:

\noindent
(i) A sustain point 
		 $\phi^{\rm s}=
		 \max_{j \neq 0} \, \phi_\delta^{\rm s}\in(0,1)$
 exists and $\bm{\lambda}^{\rm FA}_0$
 is stable for 
 $\phi\in(\phi^{\rm s},1)$.
 
 \noindent
 (ii) 
 The sustain point is unique and $\bm{\lambda}^{\rm FA}_0$
 is unstable for $\phi\in(0,\phi^{\rm s})$
 for each $\delta_{\rm s}\in\{1,\ldots,6\}$.
	\end{proposition}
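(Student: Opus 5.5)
We will reduce stability of $\bm{\lambda}^{\rm FA}_0$ to the signs of the excess utilities $g_\delta(\phi) := v_\delta - \bar v = v_\delta - v_0$ for $\delta = 1,\ldots,k$. At the full agglomeration $\bar v = v_0$, and by the bilateral symmetry $v_{-\delta} = v_\delta$. Since $\lambda_i = 0$ for $i \neq 0$, the Jacobian of the replicator field \eqref{F_expression} restricted to the simplex is triangular. Its eigenvalues are exactly $v_i - v_0$ for $i \neq 0$, each appearing twice as $\pm\delta$. Hence $\bm{\lambda}^{\rm FA}_0$ is asymptotically stable when all $g_\delta < 0$ and unstable when some $g_\delta > 0$.

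Using \eqref{Payoff_full_Agglomeration_Center}--\eqref{Payoff_full_Agglomeration}, we write
\[
g_\delta(\phi) = \frac{\delta\mu}{\sigma-1}\log\phi + \log\Big\{(\hat\mu k+k+1)\phi^{\delta} + (1-\hat\mu)\Big[(k-\delta)\phi^{-\delta} + \textstyle\sum_{p=1}^{\delta}\phi^{\delta-2p}\Big]\Big\} - \log(2k+1).
\]
At $\phi = 1$ the bracket equals $(\hat\mu k+k+1) + (1-\hat\mu)k = 2k+1$, so $g_\delta(1) = 0$.

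For (i), we take from Lemma~\ref{Sustainable_Full_Agglomeration_NEW}(i) the existence of a zero $\phi^{\rm s}_\delta \in (0,1)$ for every $\delta$. The argument behind it also gives the boundary behaviour. As $\phi \to 0^+$, the term $(1-\hat\mu)\phi^{-\delta}$ dominates (at least from the $p=\delta$ summand), so $g_\delta \approx \big(\delta\mu/(\sigma-1) - \delta\big)\log\phi \to +\infty$, because $\mu < \sigma-1$ by \eqref{no-black-hole}. Near $\phi = 1$ we need $g_\delta < 0$. For this we check $g'_\delta(1) > 0$, which reduces to an explicit inequality in $\hat\mu$ and $\mu/(\sigma-1)$ that we expect to follow from \eqref{no-black-hole}. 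This step presumably already underlies Lemma~\ref{Sustainable_Full_Agglomeration_NEW}(i).

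Set $\phi^{\rm s} := \max_\delta \phi^{\rm s}_\delta$, taking for each $\delta$ the largest zero in $(0,1)$. For $\phi \in (\phi^{\rm s},1)$ every $g_\delta$ has no zero, so by continuity it keeps the negative sign it has near $1$. All eigenvalues are then negative and the full agglomeration is stable. At $\phi^{\rm s}$ the maximizing index gives $g_{\delta_{\rm s}} = 0$ while the other $g_\delta \le 0$, so $\phi^{\rm s}$ is a sustain point.

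For (ii), fix the maximizing index $\delta_{\rm s} \in \{1,\ldots,6\}$. By Lemma~\ref{Sustainable_Full_Agglomeration_NEW}(ii), $\phi^{\rm s}_{\delta_{\rm s}}$ is the unique zero of $g_{\delta_{\rm s}}$ in $(0,1)$. Since $g_{\delta_{\rm s}} \to +\infty$ at $0$, we get $g_{\delta_{\rm s}} > 0$ on the whole interval $(0,\phi^{\rm s})$. The eigenvalue $v_{\delta_{\rm s}} - v_0$ is therefore positive there, and $\bm{\lambda}^{\rm FA}_0$ is unstable on $(0,\phi^{\rm s})$.

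Uniqueness of the sustain point follows by contradiction. Any other sustain point $\phi' < \phi^{\rm s}$ would require $g_i \le 0$ for all $i$, but $g_{\delta_{\rm s}}(\phi') > 0$. Any $\phi' > \phi^{\rm s}$ is excluded because all $g_\delta < 0$ there.

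The main obstacle is the sign of $g_\delta$ near $\phi = 1$, where $g_\delta(1) = 0$ is degenerate. If $g'_\delta(1) > 0$ does not follow cleanly from \eqref{no-black-hole}, the fallback is to use the uniqueness statement of Lemma~\ref{Sustainable_Full_Agglomeration_NEW}(ii) together with an asymptotic expansion in $1-\phi$. The restriction $\delta_{\rm s} \le 6$ is inherited from that lemma's (presumably computer-checked) uniqueness.
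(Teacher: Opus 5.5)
Your proposal is correct and follows essentially the same route as the paper. Both use $\mathcal{S}^{\rm FO}<0$ just below $\phi=1$, obtained in the proof of the existence lemma, together with existence of the local sustain points. Both then set $\phi^{\rm s}=\max_i\phi^{\rm s}_i$ and invoke uniqueness from Lemma~\ref{Sustainable_Full_Agglomeration_NEW}(ii) to get instability below $\phi^{\rm s}$. Your choice of the largest zero for each $\delta$, and your use of uniqueness only for the maximizing index $\delta_{\rm s}$, are slightly tighter than the paper's ``if each local sustain point is unique.''

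The step you flag as the main obstacle is routine. Direct computation gives $g_\delta'(1)=\delta\bigl[\frac{\delta(\sigma-\mu)}{\sigma(2k+1)}+\mu\bigl(\frac{1}{\sigma}+\frac{1}{\sigma-1}\bigr)\bigr]>0$, which the paper also states. This needs only $\mu<\sigma$, not the no-black-hole condition.
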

	\begin{proof}
		See Appendix~\ref{The FE model Appendix} 
		for the proof.
			\end{proof}

\subsection{Agglomeration behaviors for several numbers of places}\label{AgglomerationBehaviorsMore}

\begin{figure}[!p]
\begin{small}
  \begin{minipage}{.575\textwidth}
  \begin{center}
      \includegraphics[scale=0.5]{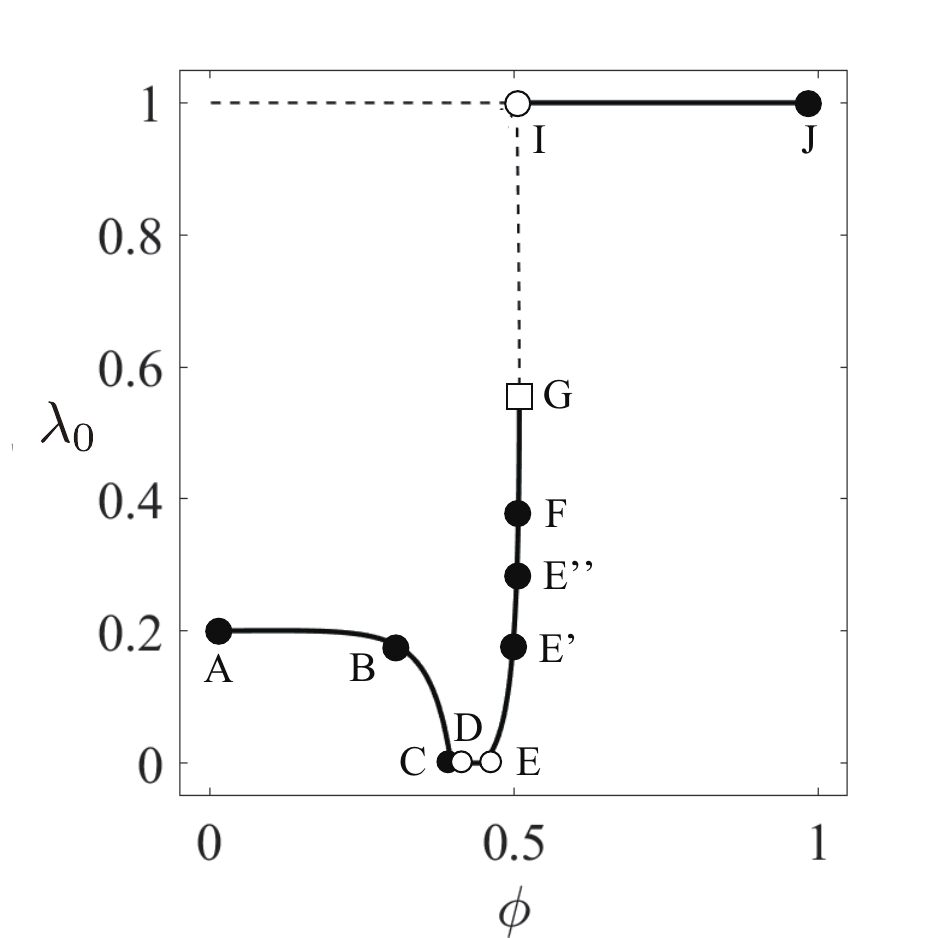} 
   \end{center}
   \end{minipage}
  \begin{minipage}{.33\textwidth}
  \begin{center}
  \begin{tabular}{l@{\hspace{-3mm}}l}
      \includegraphics[scale=0.1]{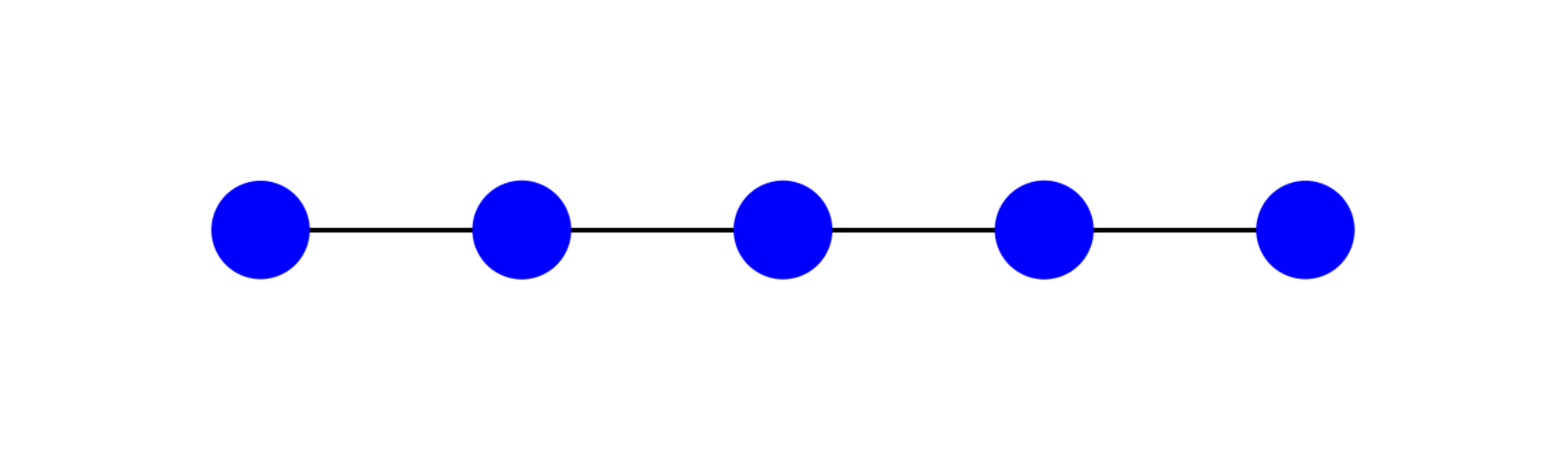}
       & \raisebox{3.3mm}{A} \\
       \noalign{\vskip -2ex}
      \includegraphics[scale=0.1]{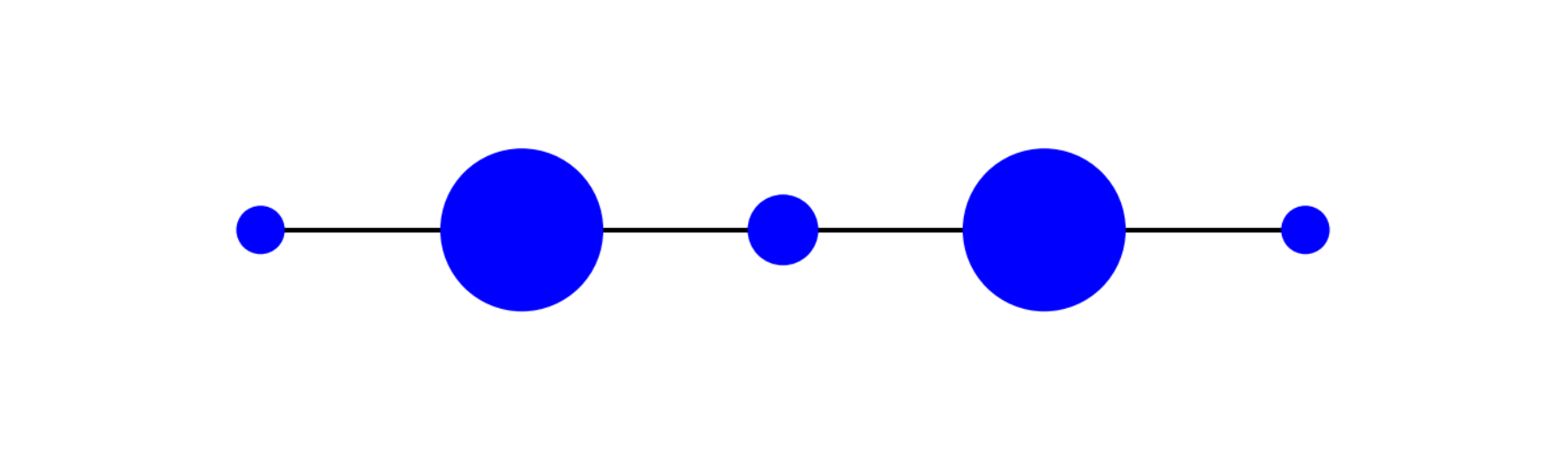} 
      & \raisebox{3.3mm}{B} \\
       \noalign{\vskip -2ex}
      \includegraphics[scale=0.1]{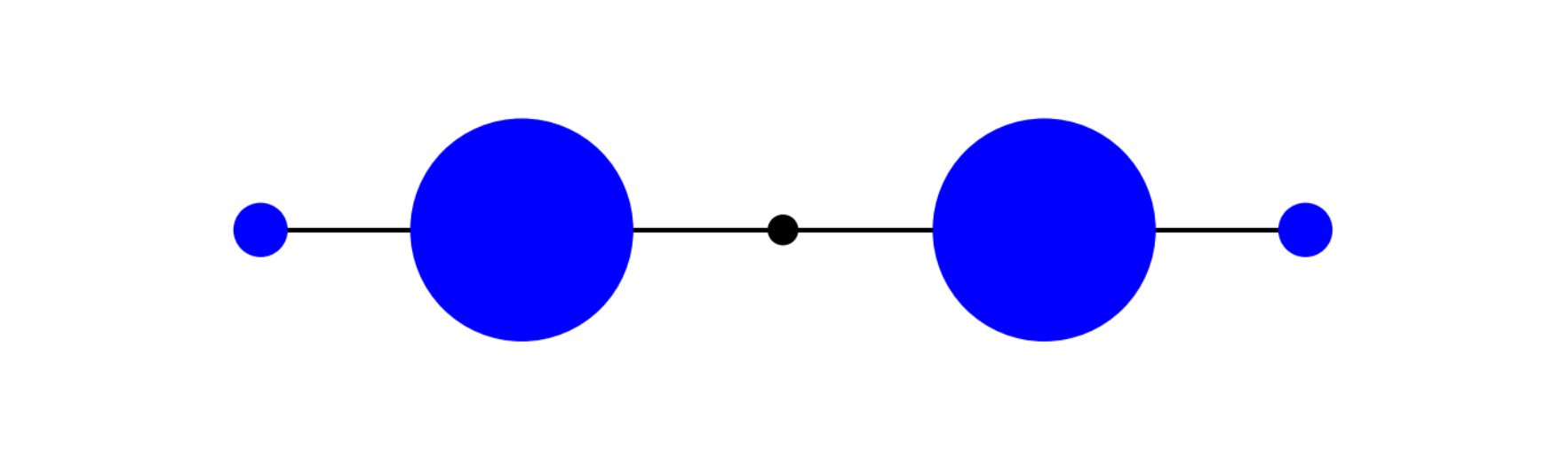}
      & \raisebox{3.3mm}{C} \\
       \noalign{\vskip -2ex}
      \includegraphics[scale=0.1]{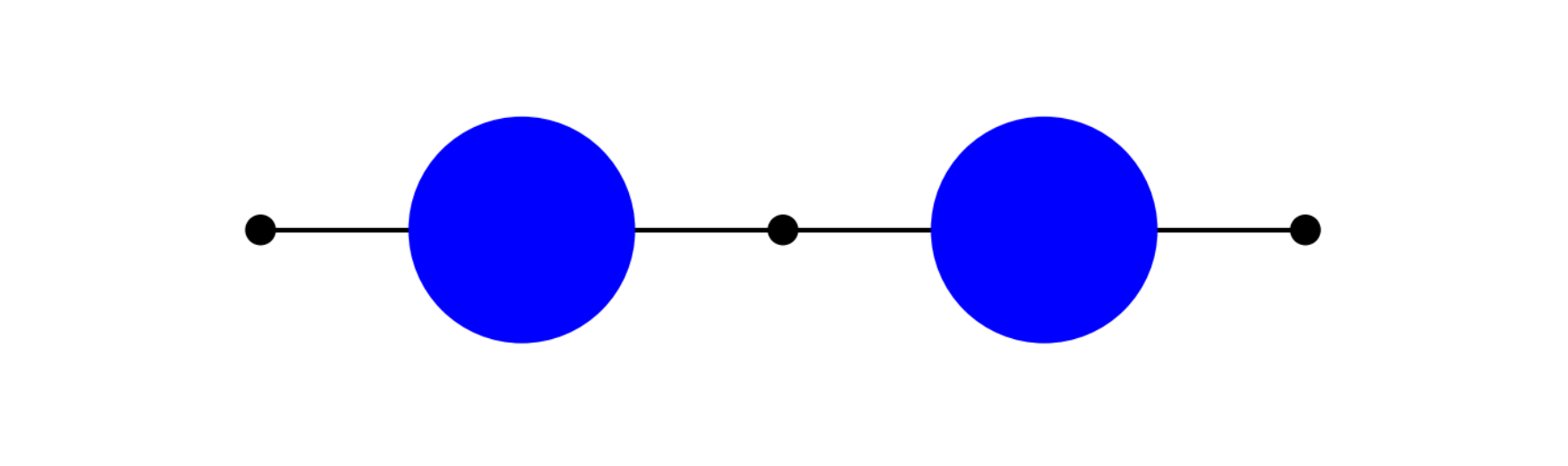}
      & \raisebox{3.3mm}{DE} \\
	  \noalign{\vskip -2ex}
	  \includegraphics[scale=0.1]{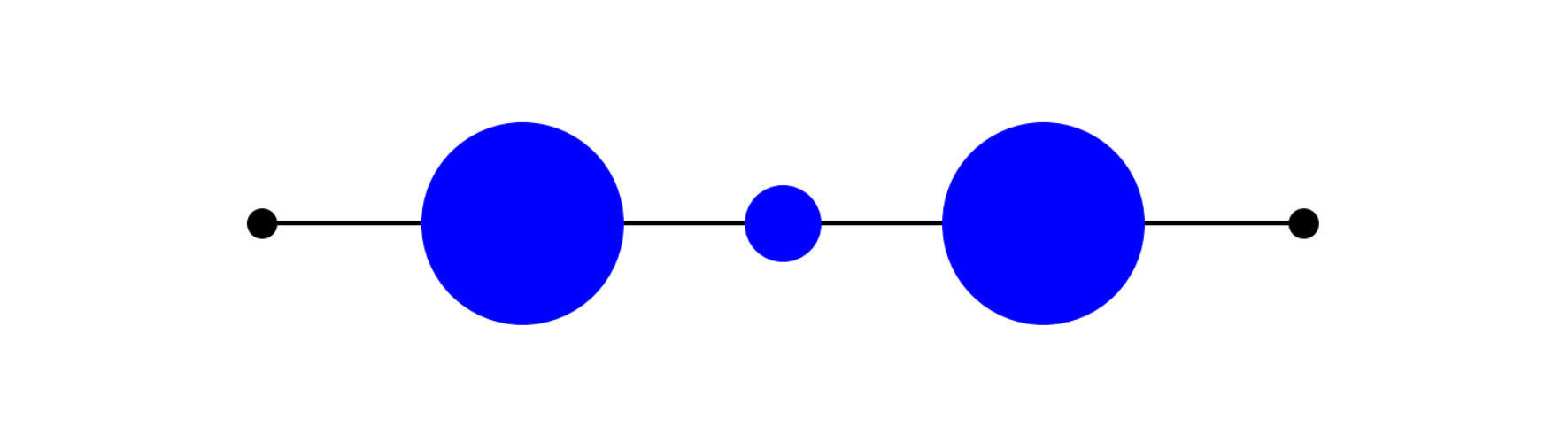}
      & \raisebox{3.3mm}{E'} \\
	  \noalign{\vskip -2ex}
	  \includegraphics[scale=0.1]{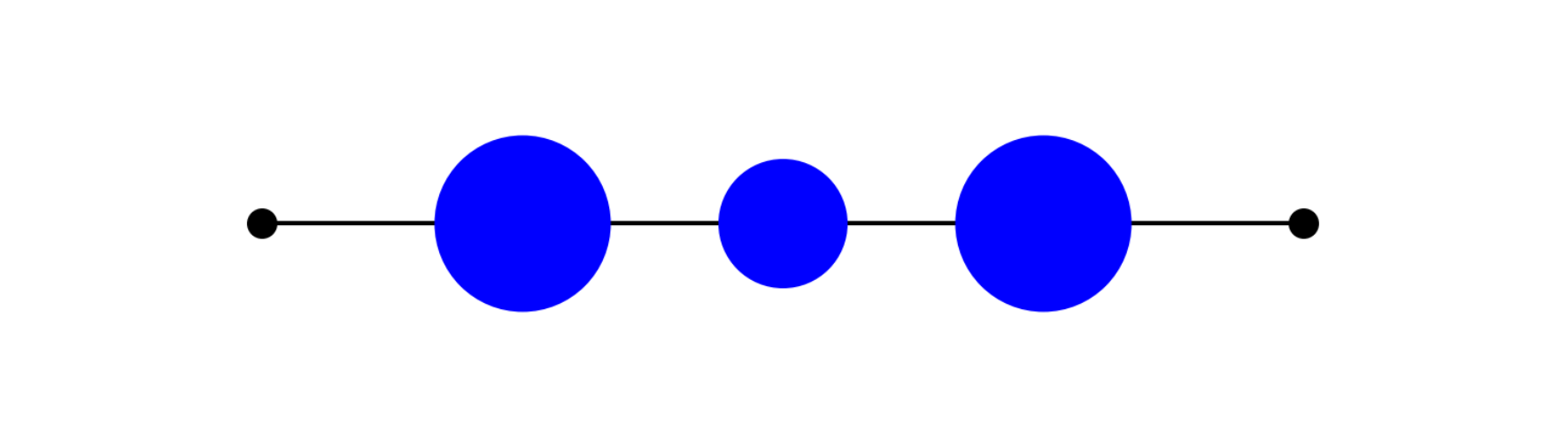}
      & \raisebox{3.3mm}{E''} \\
      \end{tabular} 
      
      \vspace{-2mm}
      Dawning

      \vspace{2mm}
  \begin{tabular}{l@{\hspace{-3mm}}l}
      \includegraphics[scale=0.1]{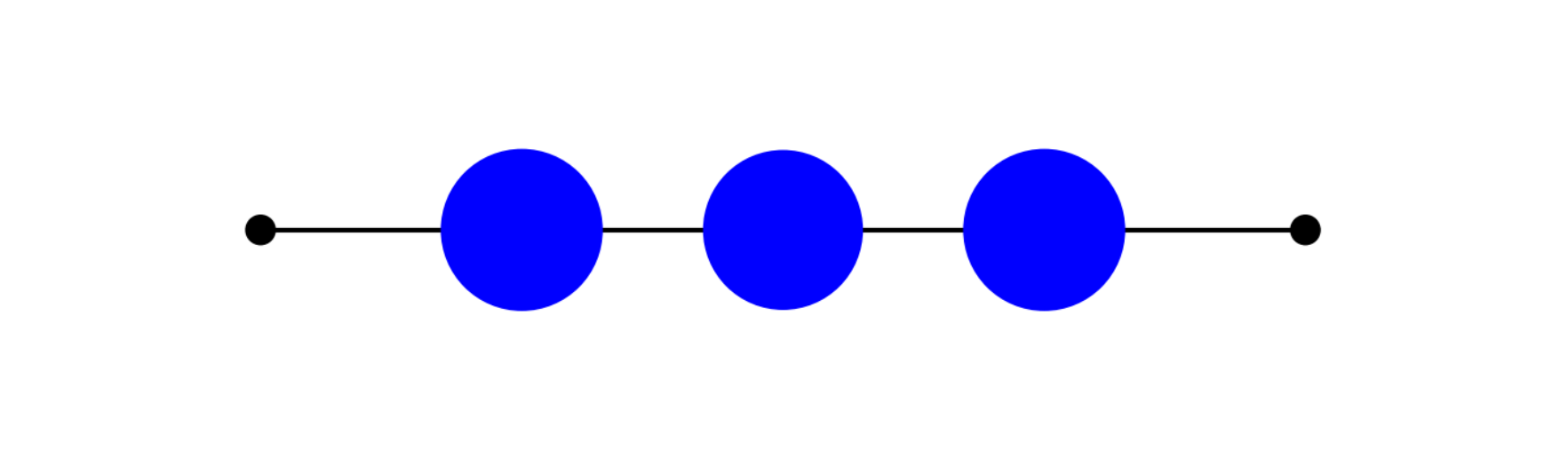}
       & \raisebox{3.3mm}{F} \\
       \noalign{\vskip -1ex}
      \includegraphics[scale=0.1]{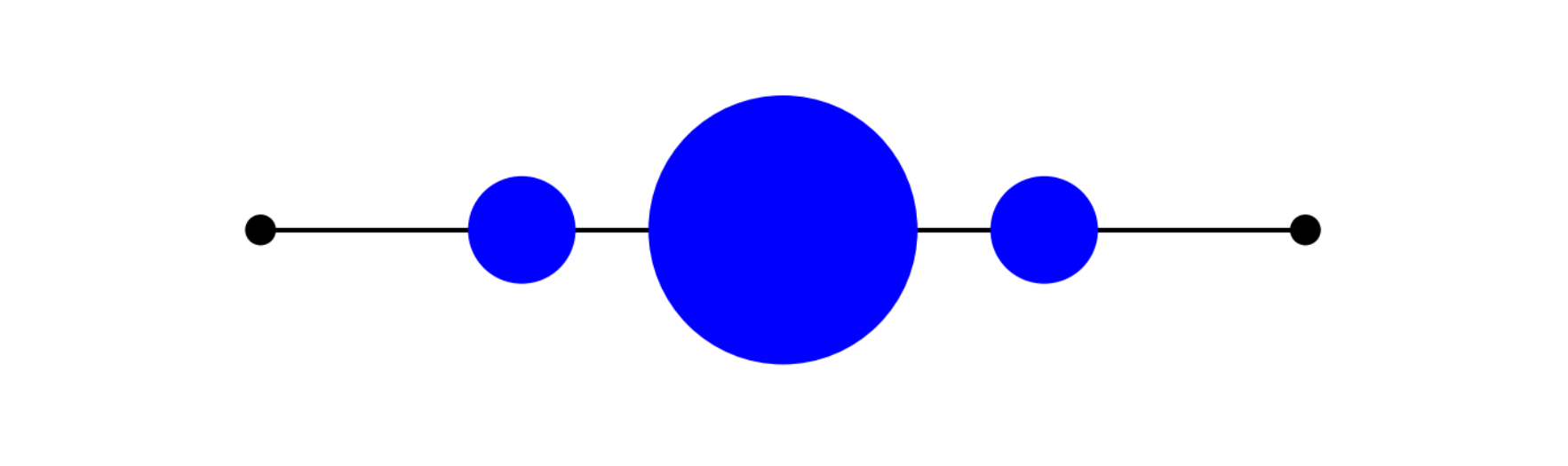} 
      & \raisebox{3.3mm}{G} \\
      \end{tabular} 

      \vspace{-2mm}
      Core--periphery 

      \vspace{2.5mm}
  \begin{tabular}{l@{\hspace{-3mm}}l}
      \includegraphics[scale=0.1]{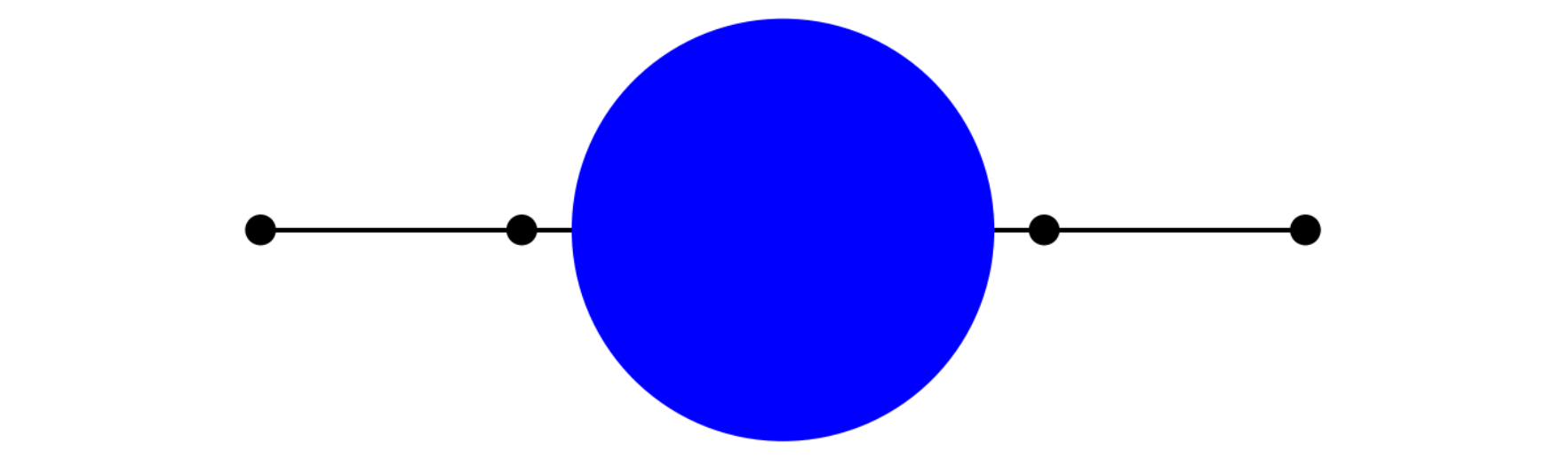}
       & \raisebox{3.3mm}{IJ} \\
      \end{tabular} 

      Full agglomeration
  \end{center}
   \end{minipage}
\end{small}
	\vspace{2mm}
    	\caption{Solution curves 
    	of $K=5$ places
    	for the FO model with
  $(\sigma, \mu) = (6.0, 0.4)$
  (solid line: stable; broken line: unstable; 
$\circ$: sustain point;
$\square$: local maximum point of $\phi$; 
$\bullet$: reference point)}
	\label{K=5,sigma=6}

	\vspace{2mm}
\begin{small}
 \begin{minipage}{.58\textwidth}
  \begin{center}
      \includegraphics[scale=0.16]{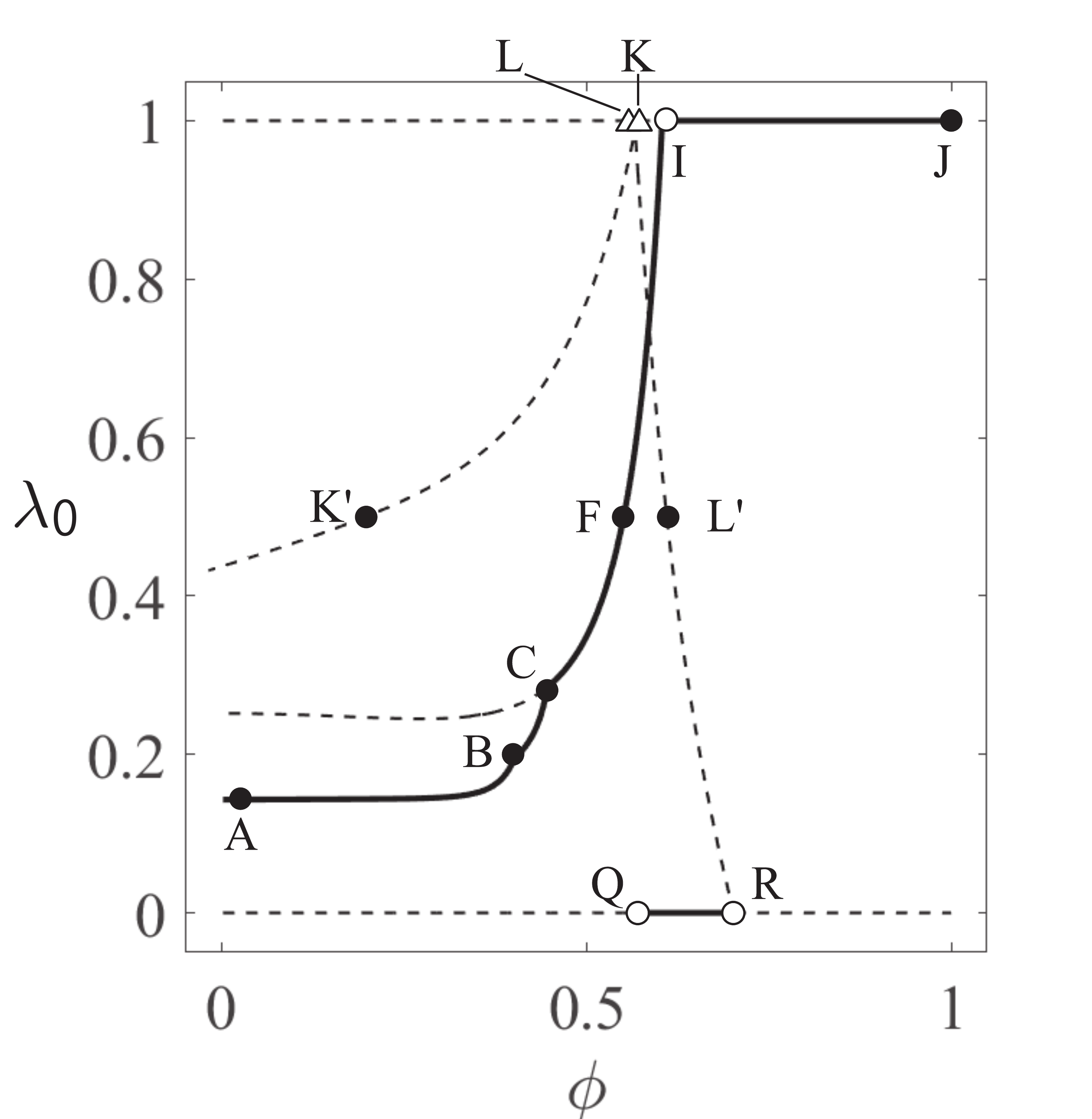} 
   \end{center}
   \end{minipage}
   \hspace{-3mm}
  \begin{minipage}{.32\textwidth}
    \begin{center}
  \begin{tabular}{c@{\hspace{-1mm}}l}
      \includegraphics[scale=0.08]{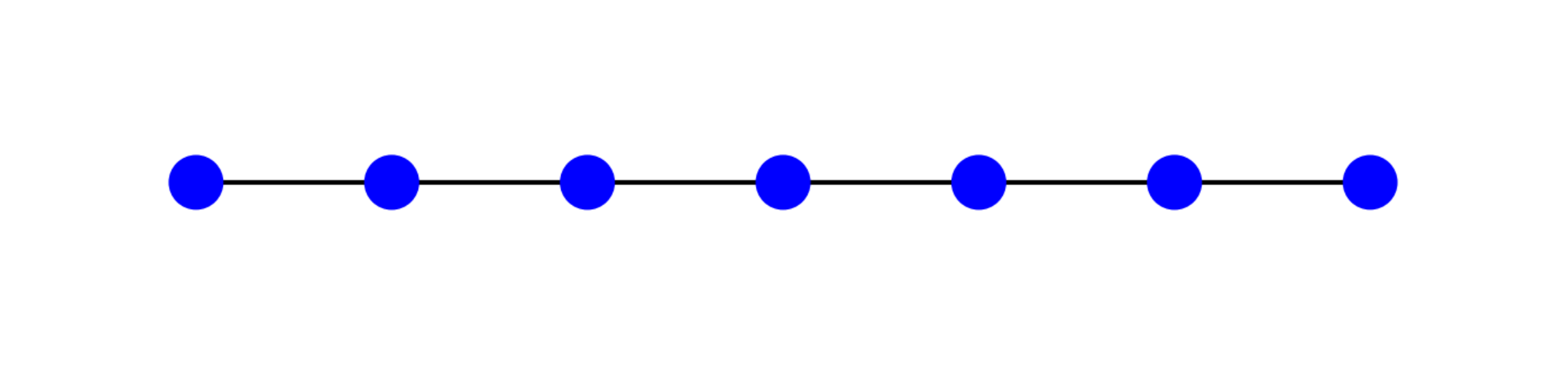}
       & \raisebox{3.1mm}{A} \\
       \noalign{\vskip -2ex}
      \includegraphics[scale=0.08]{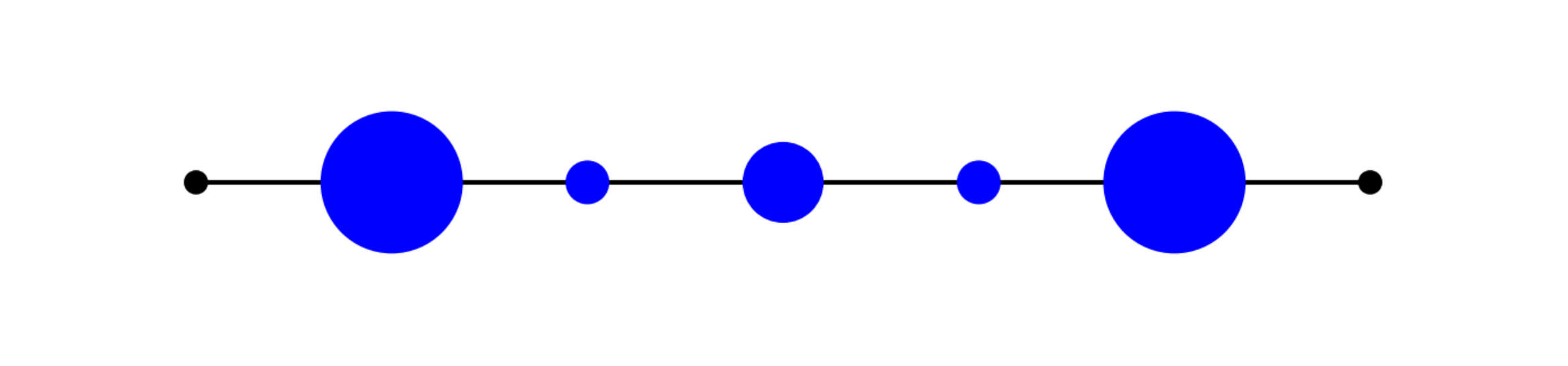} 
      & \raisebox{3.1mm}{B} \\
      \noalign{\vskip -2ex}
      \includegraphics[scale=0.08]{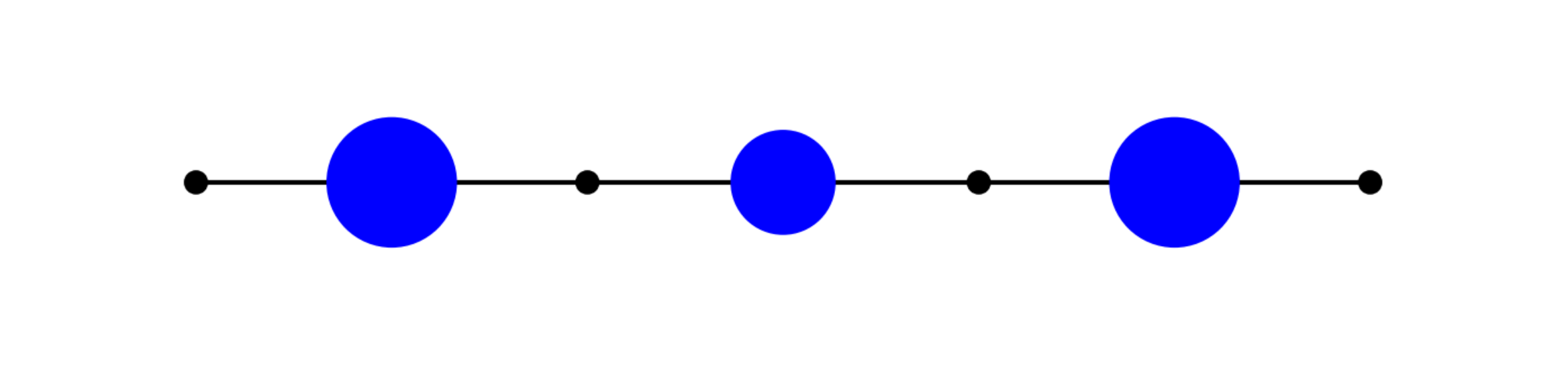}
      & \raisebox{3.1mm}{C} \\
     \multicolumn{2}{c}{Dawning} \\
      \includegraphics[scale=0.08]{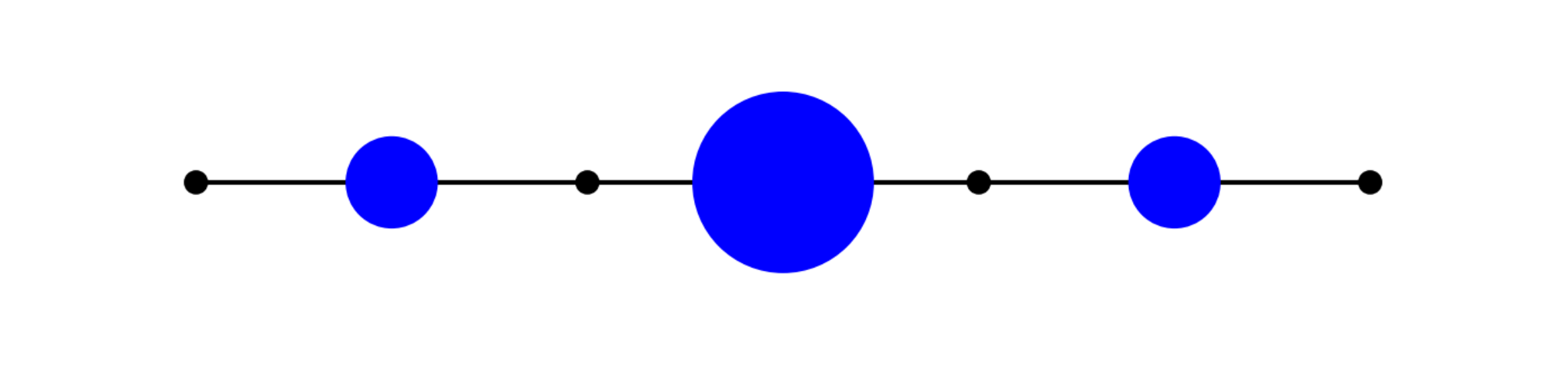}
       & \raisebox{3.1mm}{F} \\
       \multicolumn{2}{c}{Core--periphery} \\
       \noalign{\vskip 1.5ex}
      \includegraphics[scale=0.08]{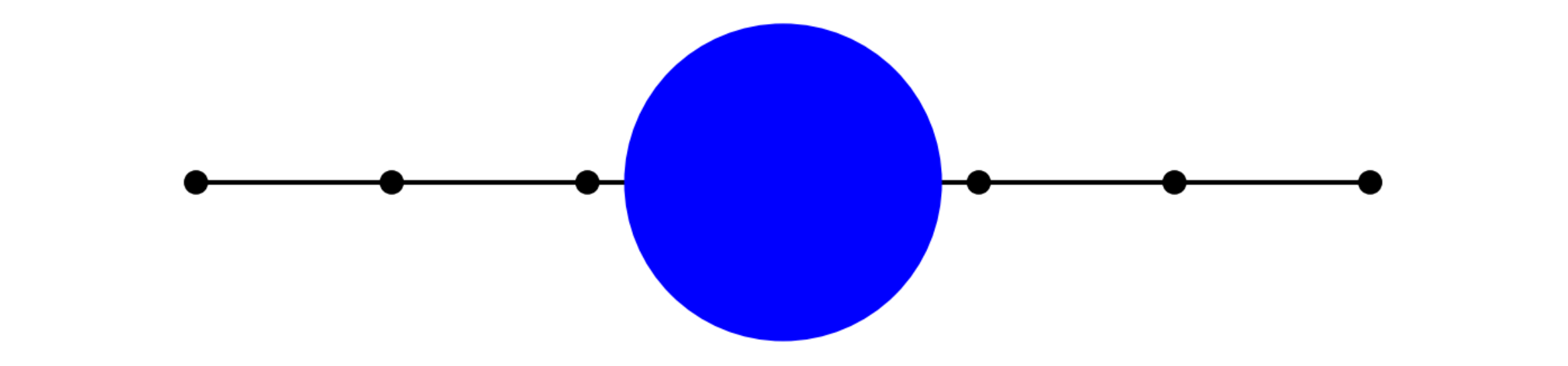}
       & \raisebox{3.1mm}{IJ,~K,~L} \\
       \multicolumn{2}{c}{Full agglomeration} \\
       \noalign{\vskip 1.5ex}
      \includegraphics[scale=0.08]{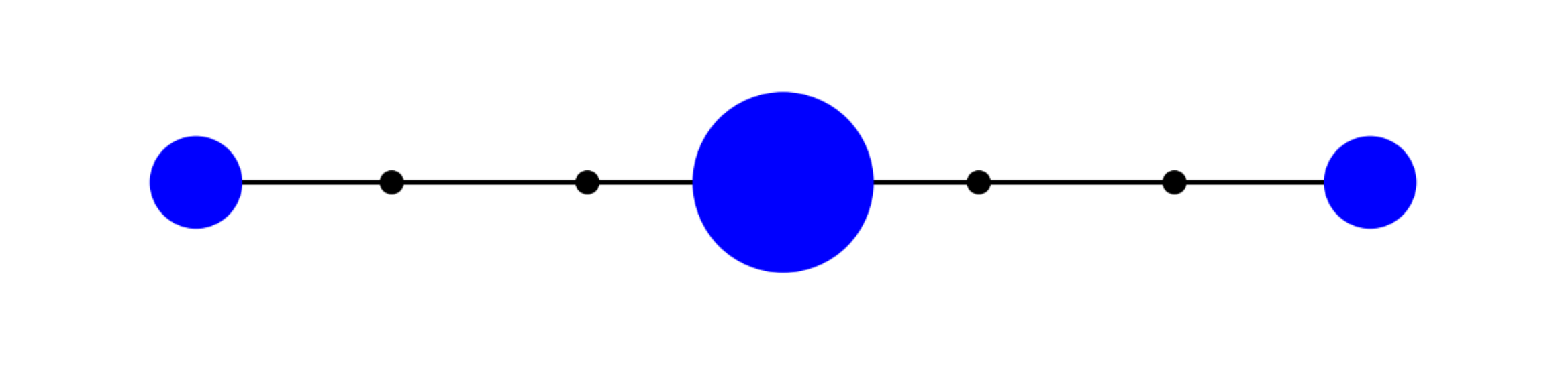}
       & \raisebox{3.1mm}{K'} \\
      \noalign{\vskip -2ex}
      \includegraphics[scale=0.08]{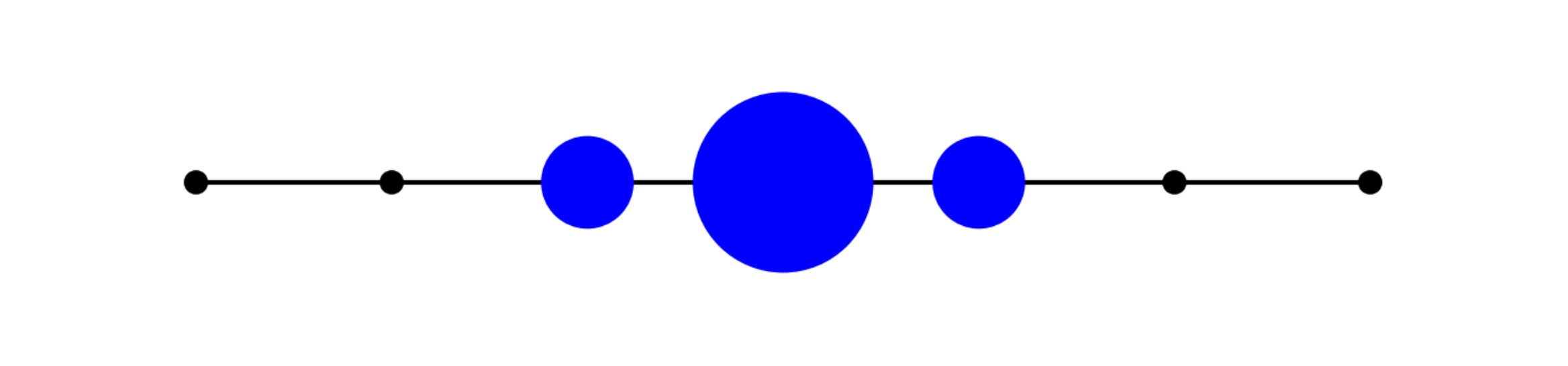}
       & \raisebox{3.1mm}{L'} \\
      \noalign{\vskip -2ex}
      \includegraphics[scale=0.08]{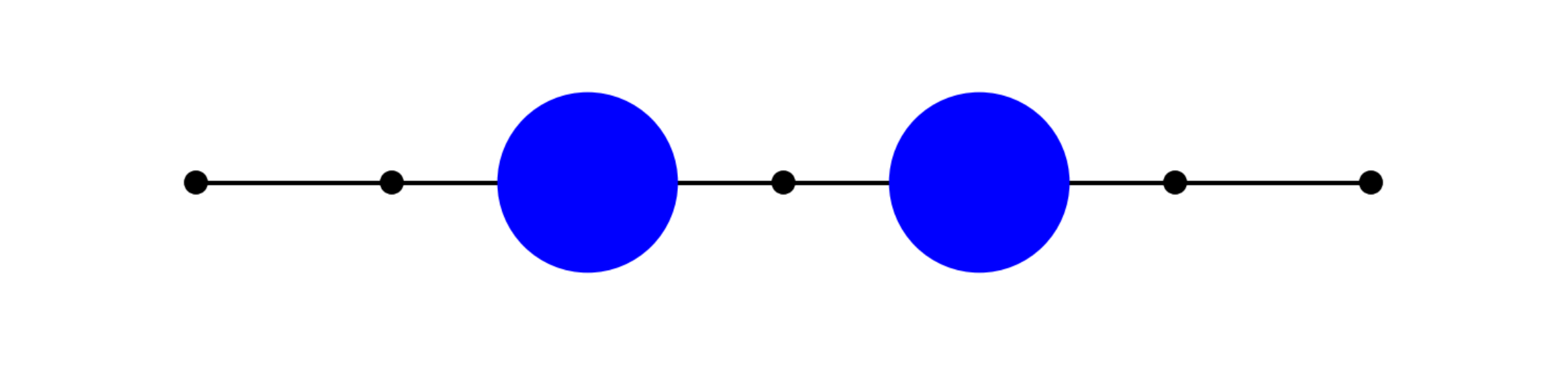}
       & \raisebox{3.1mm}{QR} \\
       \noalign{\vskip -1ex}
       \multicolumn{2}{c}{Other stages} \\
       \end{tabular}
  \end{center}
   \end{minipage}
   \end{small}
    	\caption{Solution curves of $K=7$ places
    	    	for the FO model with
  $(\sigma, \mu) = (6.0, 0.4)$
(solid line: stable; broken line: unstable; $\triangle$: bifurcation point;
$\circ$: sustain point)}
\label{K=7,sigma=6}
\vspace{-3mm}
\end{figure}
\FloatBarrier

\begin{figure}[!p]
\begin{small}
\begin{minipage}{.52\textwidth}
 \begin{center}
      \includegraphics[scale=0.16]{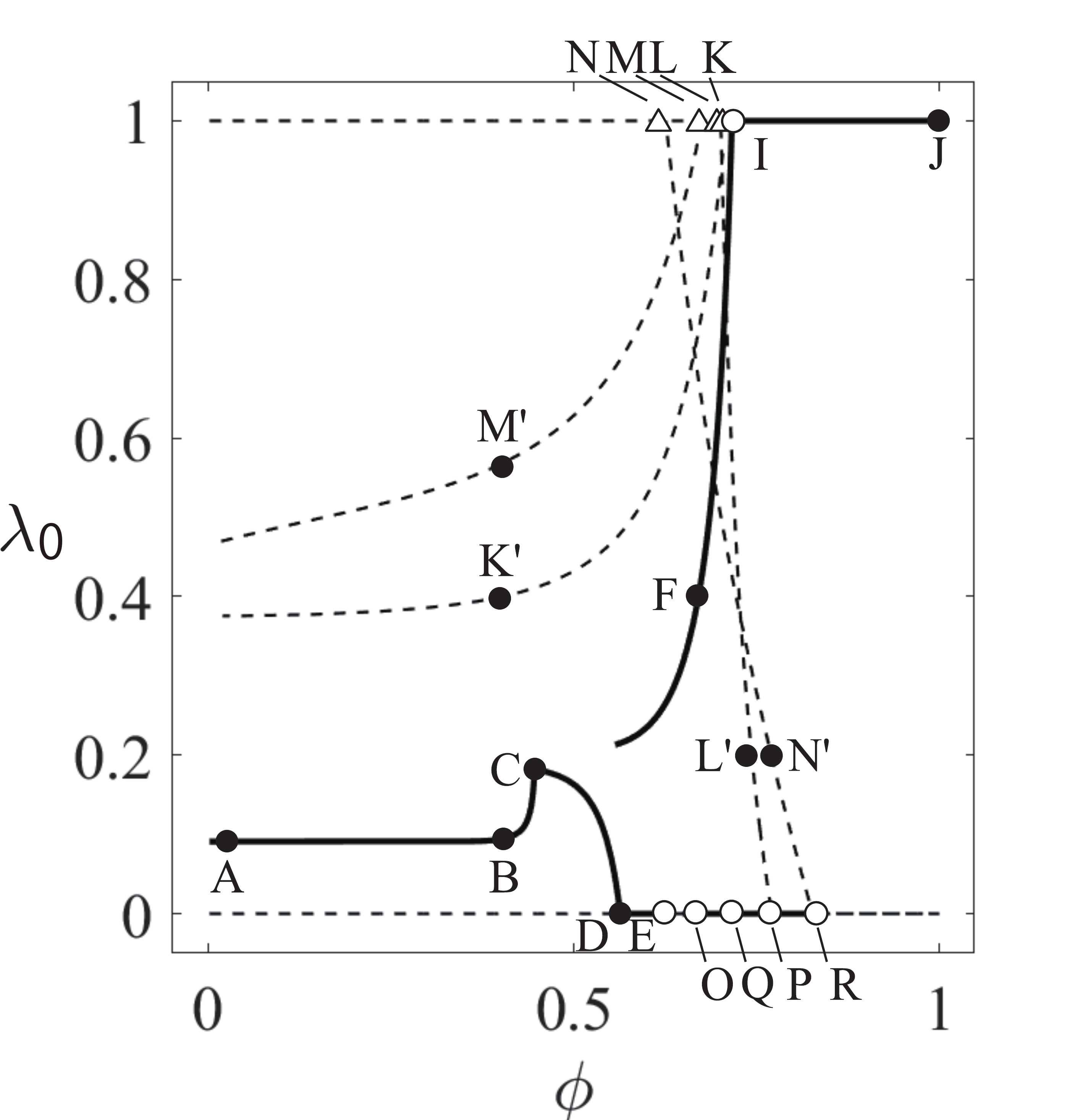} 
   \end{center}
   \end{minipage}
   \begin{minipage}{.45\textwidth}
  \begin{tabular}{l@{\hspace{-1mm}}l}
      \includegraphics[scale=0.08]{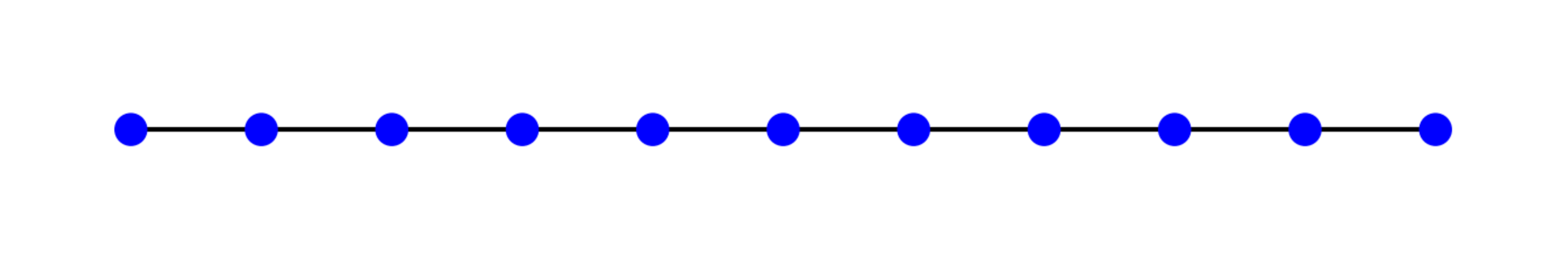}
       & \raisebox{3.5mm}{A} \\
      \includegraphics[scale=0.08]{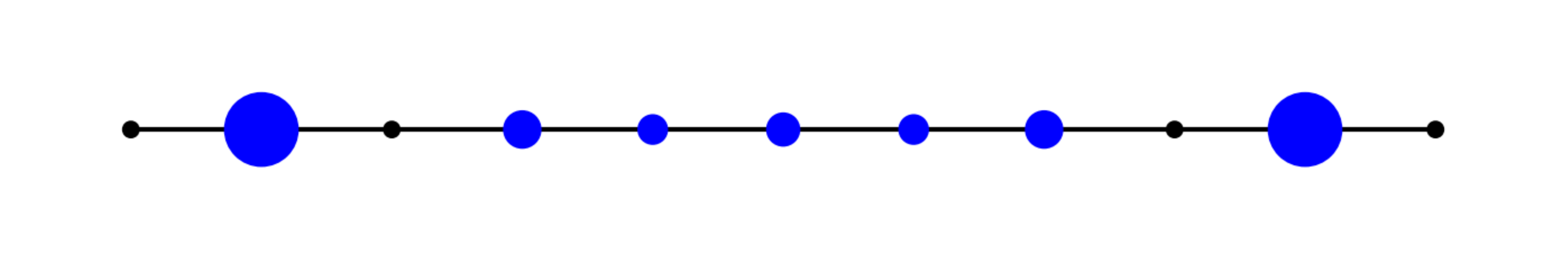} 
      & \raisebox{3.5mm}{B} \\
      \includegraphics[scale=0.08]{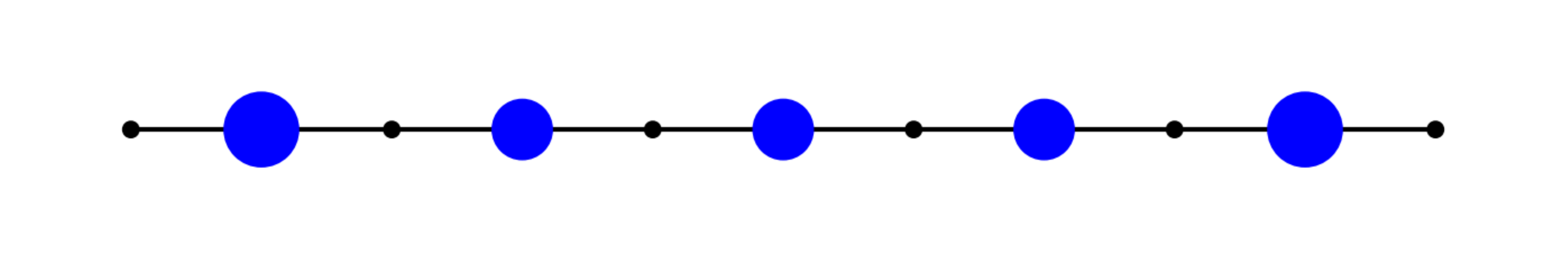}
      & \raisebox{3.5mm}{C} \\
      \includegraphics[scale=0.08]{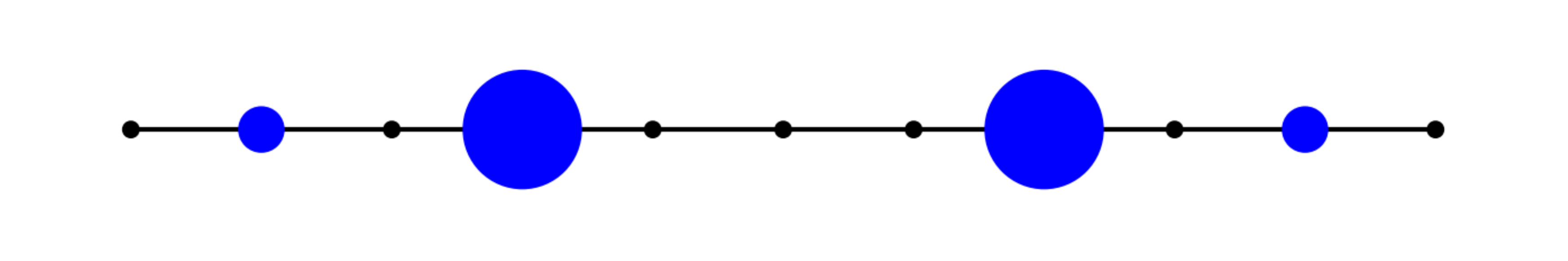}
       & \raisebox{3.5mm}{D} \\
      \includegraphics[scale=0.08]{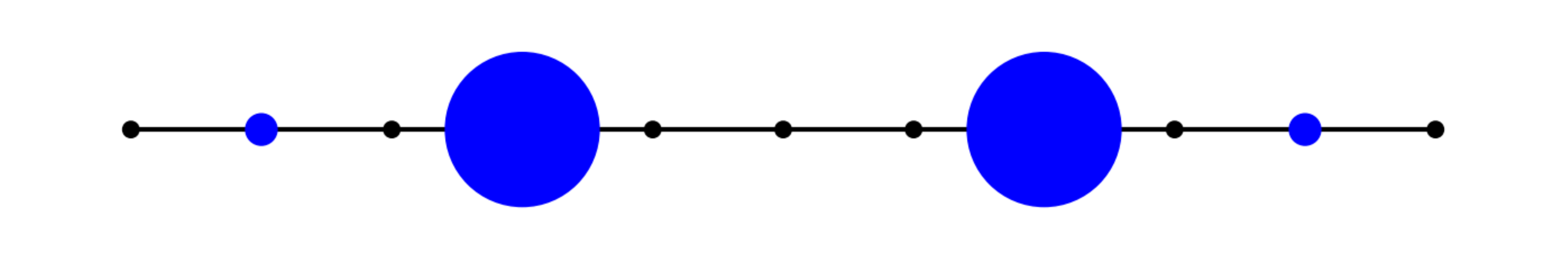}
       & \raisebox{3.5mm}{E} \\
    \multicolumn{2}{c}{Dawning} \\
     \includegraphics[scale=0.08]{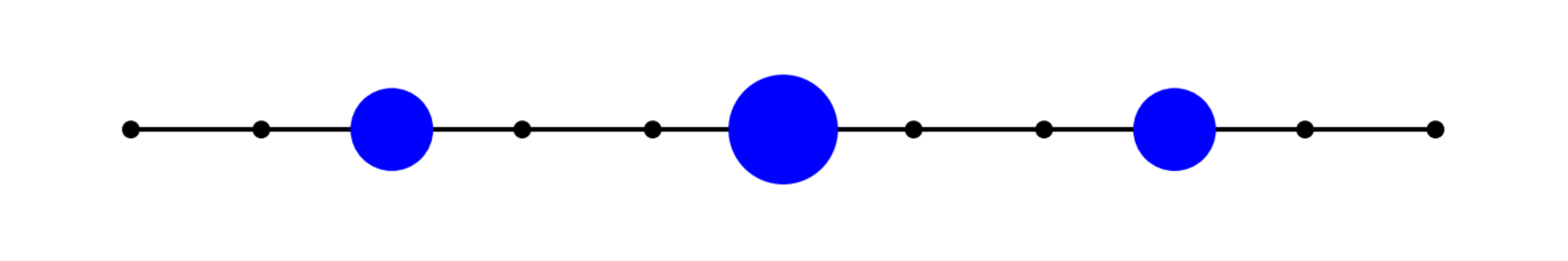}
       & \raisebox{3.5mm}{F} \\
    \multicolumn{2}{c}{Core--periphery} \\
       \noalign{\vskip 1ex}
    \includegraphics[scale=0.08]{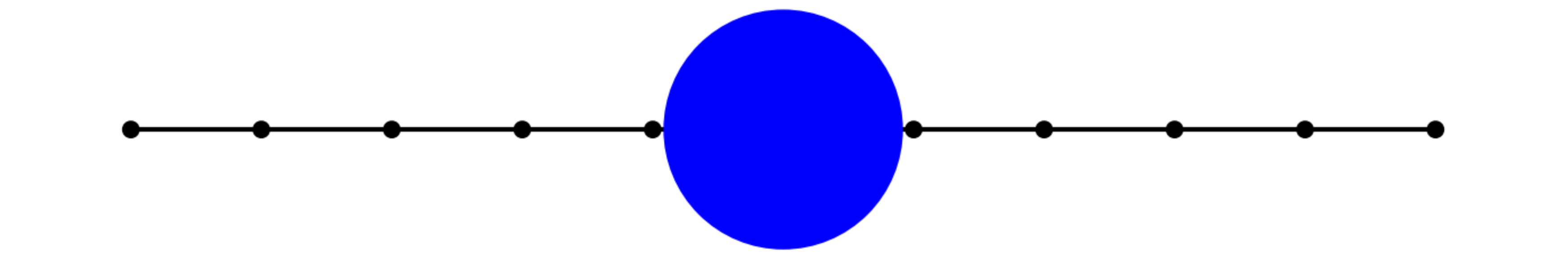}
       & \raisebox{3.5mm}{IJ} \\
      \multicolumn{2}{c}{Full agglomeration}
      \end{tabular} 
 \end{minipage}

  \begin{tabular}{l@{\hspace{-1mm}}l}
      \includegraphics[scale=0.08]{fig12_7_-eps-converted-to.pdf}
       & \raisebox{3.5mm}{K, L, M, N} \\
      \includegraphics[scale=0.08]{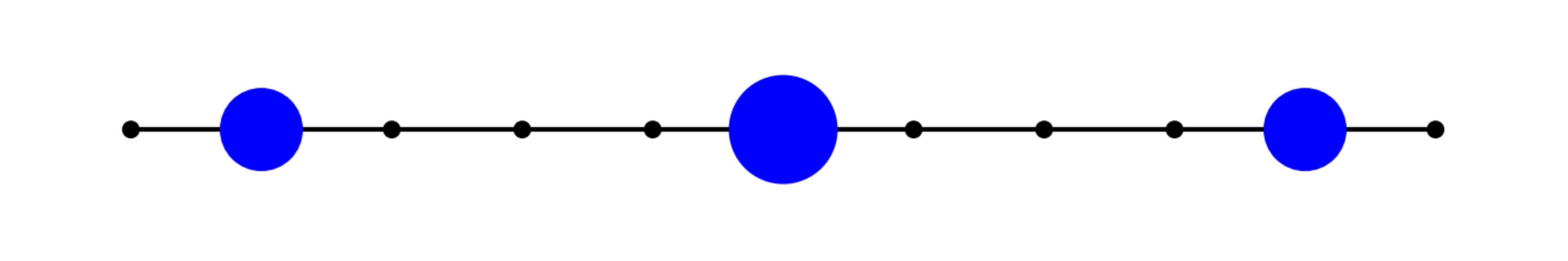} 
      & \raisebox{3.5mm}{K'} \\
      \includegraphics[scale=0.08]{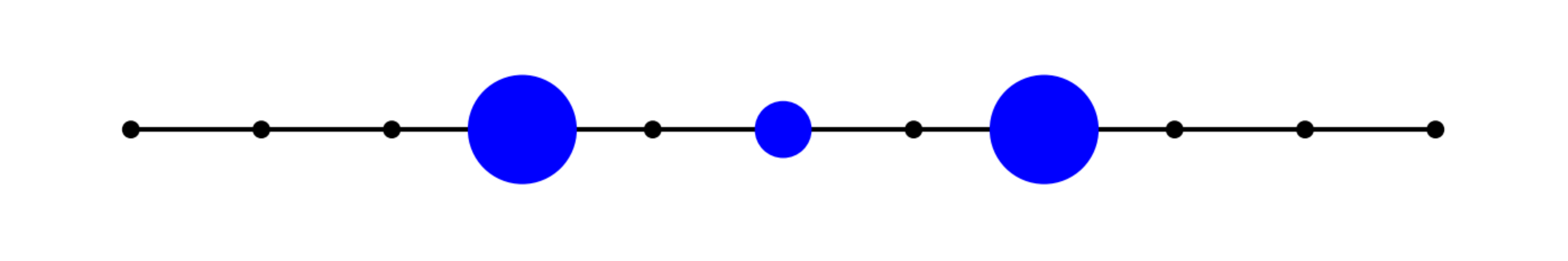}
      & \raisebox{3.5mm}{L'} \\
      \end{tabular} 
  \begin{tabular}{l@{\hspace{-1mm}}l}
      \includegraphics[scale=0.08]{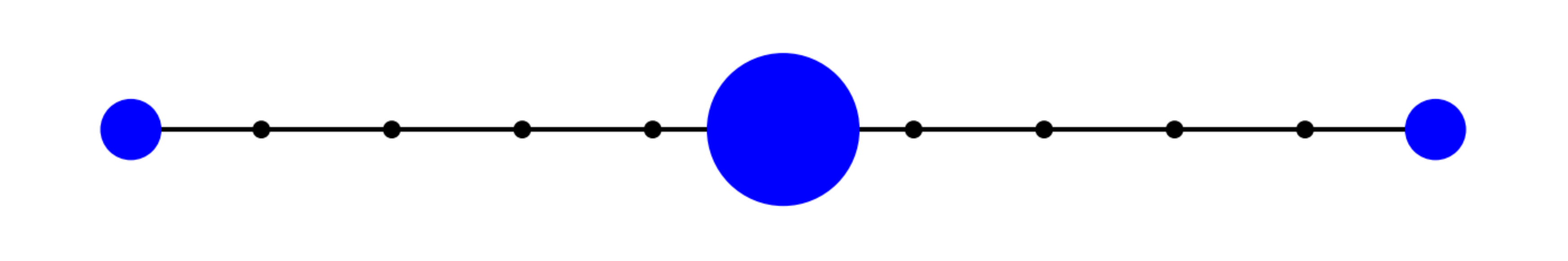}
      & \raisebox{3.5mm}{M'} \\
      \includegraphics[scale=0.08]{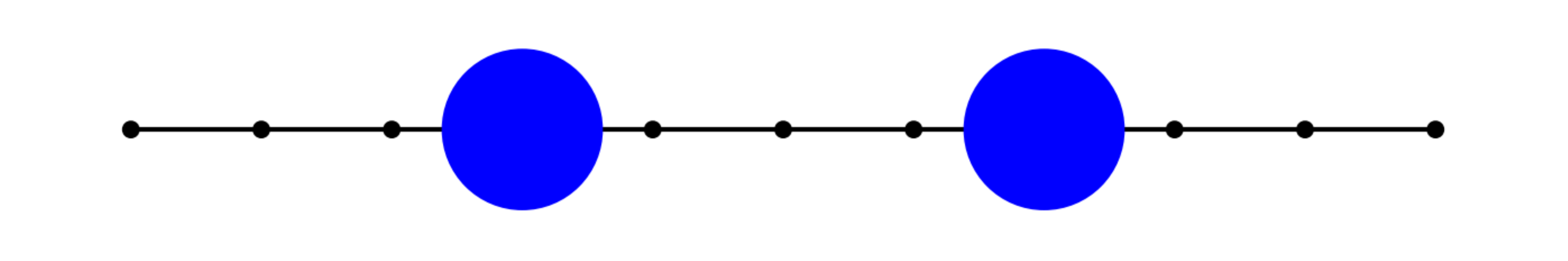}
      & \raisebox{3.5mm}{OP} \\
      \includegraphics[scale=0.08]{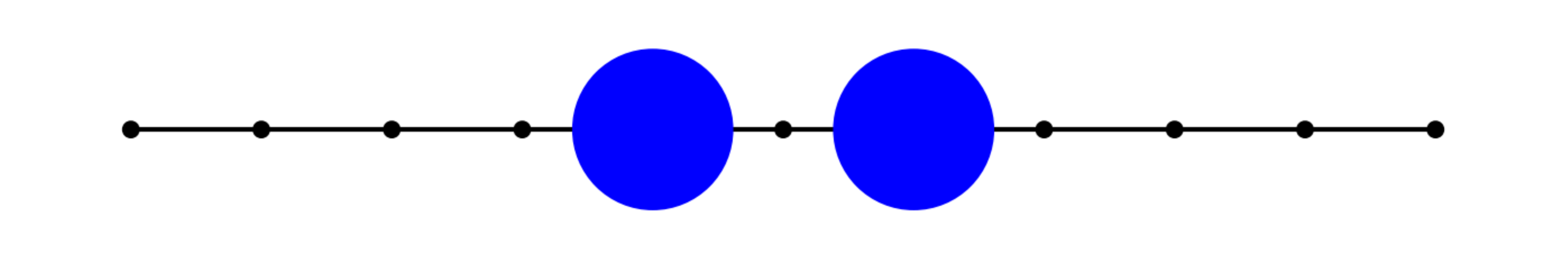}
      & \raisebox{3.5mm}{QR} \\
      \end{tabular} 
\end{small}
 	\caption{Solution curves of $K=11$ places
 	    	    	for the FO model with
 $(\sigma, \mu) = (6.0, 0.4)$
(solid line: stable; broken line: unstable; $\triangle$: bifurcation point;
$\circ$: sustain point)}
	\label{K=11,sigma=6}

\begin{small}
 \begin{minipage}{.58\textwidth}
  \begin{center}
      \includegraphics[scale=0.16]{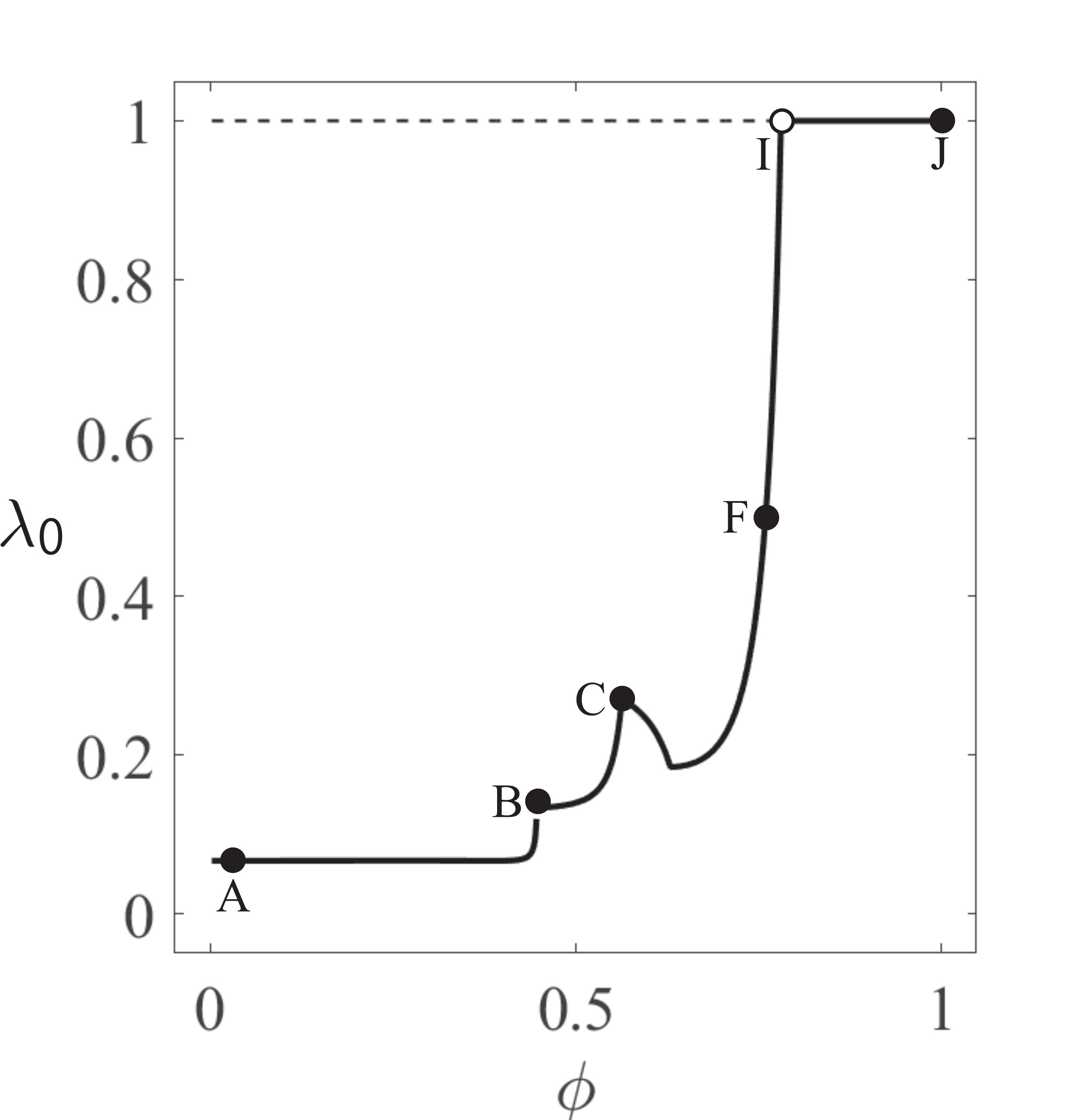} 
   \end{center}
   \end{minipage}
   \hspace{-3mm}
  \begin{minipage}{.32\textwidth}
    \begin{center}
  \begin{tabular}{c@{\hspace{-1mm}}l}
      \includegraphics[scale=0.06]{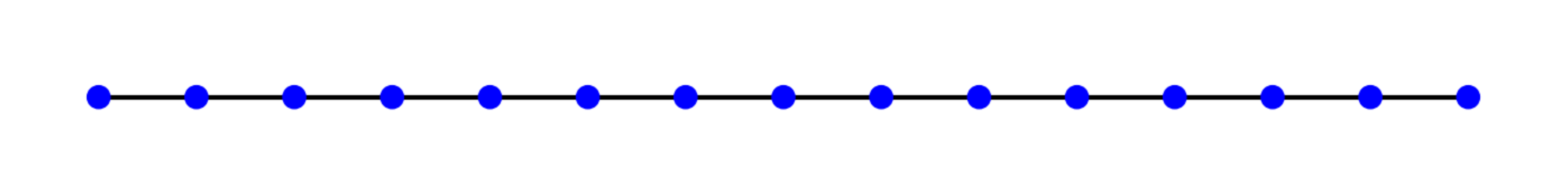}
       & \raisebox{1.75mm}{A} \\
      \includegraphics[scale=0.06]{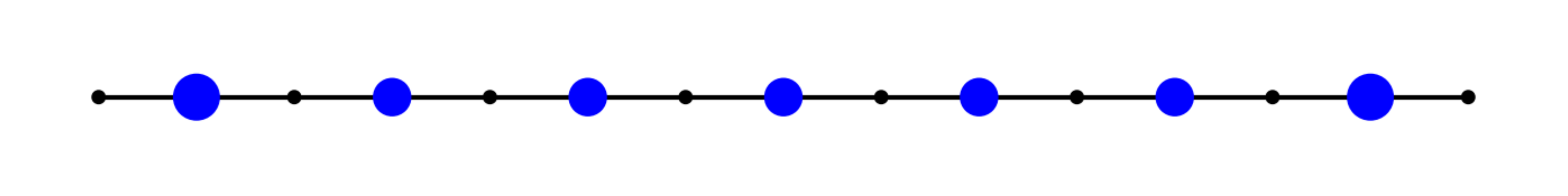} 
      & \raisebox{1.75mm}{B} \\
      \includegraphics[scale=0.06]{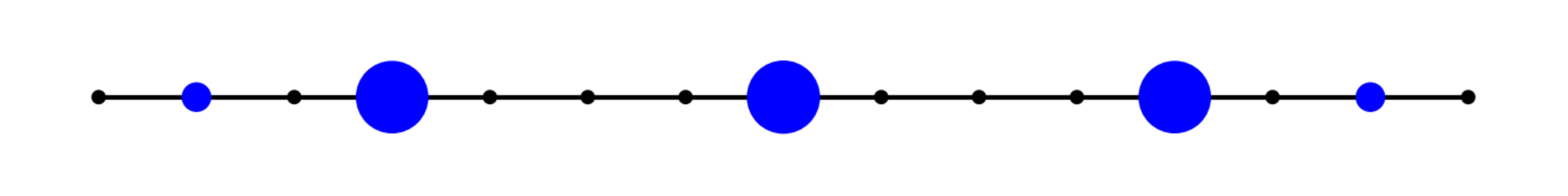}
      & \raisebox{1.75mm}{C} \\
     \multicolumn{2}{c}{Dawning} \\
       \noalign{\vskip 1ex}
      \includegraphics[scale=0.06]{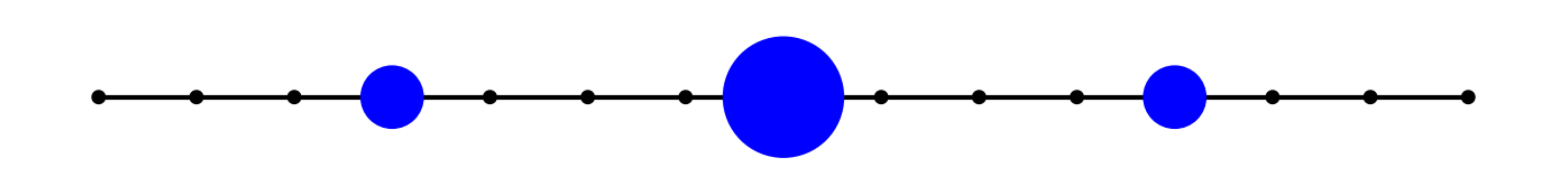}
       & \raisebox{1.75mm}{F} \\
       \multicolumn{2}{c}{Core--periphery} \\
       \noalign{\vskip 2ex}
      \includegraphics[scale=0.06]{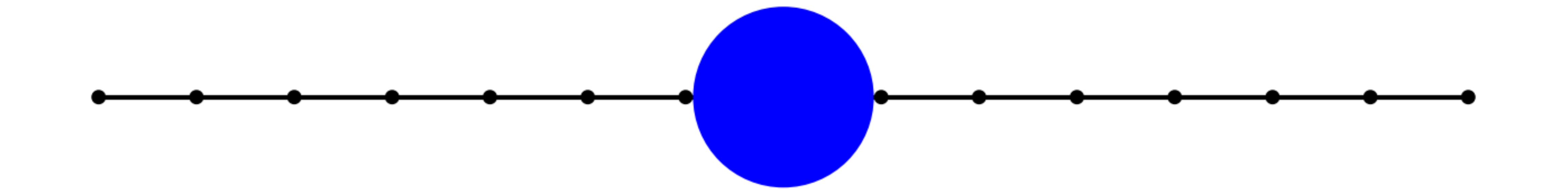}
       & \raisebox{1.75mm}{IJ} \\
       \multicolumn{2}{c}{Full agglomeration (stable)}
       \end{tabular}
  \end{center}
   \end{minipage}
\end{small}
    	\caption{Solution curves of $K=15$ places 
    	    	    	for the FO model with
  $(\sigma, \mu) = (6.0, 0.4)$
(solid line: stable; broken line: unstable; 
$\circ$: sustain point)}
\label{K=15,sigma=6}
\end{figure}
\FloatBarrier

The solution curves for $K=5$, 7, 11, and 15
are plotted in Figs.~\ref{K=5,sigma=6}--\ref{K=15,sigma=6}.
First, we investigate the bifurcating paths from the full agglomeration $\bm{\lambda}^{\rm FA}_0$.
The horizontal line at $\lambda_0=1$ is
the solution curve for this full agglomeration.
The stable full agglomeration (shown by the solid horizontal line IJ)
has the sustain point I (shown by $(\circ)$),
irrespective of the number $K$ of places.
The bifurcation behavior at this sustain point is
 in line with
  Proposition~\ref{Stability_Bifcurve}, 
  which states that there is at most one stable bifurcating path:
\begin{itemize}
\item
For $K=5$ places 
(Fig.~\ref{K=5,sigma=6}),
both bifurcating paths are unstable just after the bifurcation.
A bifurcating path IGF 
regains stability at the limit point of $\phi$ (point G shown as $(\square)$),
where twin satellite cities gain a significant population size.
There occurs a dynamic jump
between points I and G.
\item
The bifurcating path IF with twin satellite cities
is stable for $K=7,11,15$ places.
\end{itemize}
Thus, the bifurcating path with twin satellite cities
is superior in stability and is economically important.
Such superior stability might be due to balanced economic activities
on both sides of the economy. 
A stable core--periphery pattern emerges at point F.

Next, we turn our attention to the twin cities.
The solution curves for the twin cities are the horizontal line at $\lambda_0=0$.
For $K=5$
(Fig.~\ref{K=5,sigma=6}),
there is a stable path DE
for twin cities $\bm{\lambda}=\bm{\lambda}^{\rm Twin}_{1}
=(0,1/2,0,1/2,0)$,
enclosed by the two sustain points D and E.
The sustain point D 
corresponds to the case treated in Proposition \ref{twin sustain double};
there is a stable bifurcating path DC,
on which the population of the two border places ($i=\pm 2$)
 becomes positive. 
 Another sustain point E has the stable bifurcating path EFG,
on which the central place regains population
leading to an agglomeration into three cities
(Proposition \ref{twin sustain Bifurcation}).

Last, we observe changes in stable solutions
in Figs. \ref{K=5,sigma=6}--\ref{K=15,sigma=6}
as $\phi$ increases from 0 to 1.
There are three stages: the dawning stage, the stage with a stable core-periphery pattern, and the stage with a stable full agglomeration at the center.
The dawning stage is the stage where a nearly uniform population distribution, which is stable for small
$\phi$, changes to a stable core-periphery pattern.
In this stage, the nearly uniform distribution reaches this core-periphery pattern by continuously tracing stable equilibria or undergoing a dynamic jump from a stable equilibrium.

For $K = 7$ places in Fig.~\ref{K=7,sigma=6},
a nearly uniform population distribution prevails for small $\phi$ (point A), transitioning
to a distribution with significant populations at $i = 0$ and $i = \pm2$ in the dawning stage (points B and C).
The central city ($i = 0$) grows to engender a core--periphery pattern (point F) with a large central city surrounded by twin cities at $i = \pm2$. The growth of this central city produces the full agglomeration at the
center (path IJ). The changes in stable solutions for $K = 11$ and $15$ places
(Figs. \ref{K=11,sigma=6} and \ref{K=15,sigma=6}) also exhibit the dawning stage, the core--periphery pattern, and the full agglomeration in this sequence.
A state like twin cities appears in the dawning stage for $K = 11$ places (points D and E),
similarly to $K=5$ places.

We numerically obtain 
the distance $\delta_{\rm s}$ of the emerging twin cities from the center 
for various numbers $K$ of places.
As listed in Table \ref{list of core-sat},
the distance $\delta_{\rm s}$ enlarges as $K$ increases.
As an index for the optimal location of the satellite cities,
we consider a normalized distance 
 $\delta_{\rm s}/k$ with $k=(K-1)/2$.
As $k$ increases to a large value, such as $k=49$,
$\delta_{\rm s}/k$ converges to
a little beyond halfway from the center to the border
$(\delta_{\rm s}/k=28/49\approx 0.57)$.

For various values of parameter $\sigma\in\{2.5,6.0,10.0\}$
with the same value of $\mu=0.4$,
$\delta_{\rm s}/k \in\{0.70,0.57,0.49\}$ 
for $k=49$.
Smaller $\sigma$ enlarges $\delta_{\rm s}/k$
and the periphery places tend to be located 
further away from the center.
It seems that there might be a 
golden rule for $\delta_{\rm s}$
 in terms of $\mu$ and $\sigma$.

\begin{table}[htbp]
\centering
\caption{The values of $\delta_{\rm s}/k$ for various values of $k$.}
\label{list of core-sat}
\begin{tabular}{c|ccccccc} \hline
$k$ & 2 & 3 & 4 & 5 & 7 & $\cdots$ & 49 \\ \hline
$\delta_{\rm s}$ &
 1 & 2 & 3 & 3  & 4  & $\cdots$ & 28 \\
$\delta_{\rm s}/k$ & 0.5 & 0.667 & 0.75 & 0.6
 & $4/7\approx 0.57$  & $\cdots$ & $28/49\approx 0.57$
 \\ \hline
\end{tabular}
\end{table}


\section{Historical change in population in Japan's Main Island}\label{Use of EG analysis}

We have observed several theoretical patterns of twin peripheral cities
around the central city and ensured their existence
for the typical economic geography models. 
Taking Japan's Main Island (Fig.~\ref{USA=JAPAN}(a)) as a reference 
with $K=5$ places,
we demonstrate the usefulness of
the proposed procedures in the analysis of real population data.

In the analysis for $K=5$ shown in Fig.~\ref{K=5,sigma=6}, 
we can observe several kinds of population distributions 
at points A, B, $\ldots$, J.
The bifurcation occurs at the sustain point I, engendering 
twin cities at $i=\pm 1$
leading to a stable state A--G.
  As the trade freeness $\phi$ increases from 0 to 1,
there emerge three stable stages:
(1) the twin cities at $i=\pm1$ (Tokyo and Osaka)
 along the path DE for $\phi$ slightly below 0.5,
 (2) the central city surrounded by the twin cities
 in an intermediate stage along the path EG (containing   
 E$^\prime$, E$^{\prime\prime}$ and F), and
(3) the full agglomeration at the center (Nagoya)
along the horizontal path IJ for $\phi$ over 0.5.
One could infer that 
the configuration of Japan resembles that of the pattern E$^{\prime\prime}$:
Nagoya, a city at the center, has a smaller population compared to the 
gigantic twin cities of Osaka and Tokyo at $i=\pm 1$, 
but is much larger than the border cities
 of Hiroshima and Sendai at $i=\pm 2$.

When we compare the five cities' populations in the 1950s (shown by blue arcs in 
Fig.~\ref{USA=JAPAN}(a))
with the population in 2020 (shown by orange arcs),
we observe that Nagoya has grown 
significantly more compared to Tokyo and Osaka, thus suggesting that Japan may be in the stage 
of E$^\prime$E$^{\prime\prime}$, expressing the growth of the 
central city (Nagoya).
Such growth is in line with the recent numerical experiment on  
Shinkansen extension 
\cite{Hayakawa.2021},
which reports that the investment in the Shinkansen network
would lead to population change in large metropolitan areas of
Tokyo, Osaka, and Nagoya by
$-0.3\%$,
$0.6\%$, and $9.8\%$,
respectively.
Thus, 
the most significant growth of the population in response to the investment
is expected to occur in Nagoya, located
at the geographical center of the metropolitan areas of Tokyo and Osaka.


\section{Conclusion}\label{section conclusion}

We have elucidated the bifurcation mechanism of the emergence of 
twin cities around the central city in a long narrow economy.
This mechanism is independent of models 
and applicable to general spatial economic models
under replicator dynamics.
A pertinent combination of this model-independent general bifurcation 
mechanism with model-dependent properties, such as stability/sustainability
and parameter dependency,
would be vital in the 
understanding of the agglomeration behaviors
in the long narrow economy.

  We have observed diverse agglomeration patterns
dependent on the number of places,
trade freeness, and economic parameters.
Such dependence is a possible source of the diversity of
the population distribution of a chain of cities observed worldwide.
Using the analysis result for five cities, we explained the population distribution
in a chain of cities on Japan's Main Island.
Thus,
the analysis by the economic geography model
gives insight into the investigation of actual data.
  It motivated us to conduct  
  the study of the bifurcation mechanism
  from the full agglomeration presented in this paper.

A remark is on the standpoint of this paper. 
While it is customary to start from the uniform state,
we highlight agglomeration patterns emanating from the
fully agglomerated state.
Nowadays, it would be far more essential to investigate the competition
between a large central city and peripheral cities than
to investigate the self-organization of towns in a flat land
envisaged in central place theory.

Future research will elucidate how and where population agglomeration emerges in a two-dimensional space.
For example, for Southern Germany, which is
the motherland of central place theory
\cite{Christaller.1933},
a hexagonal lattice network would be a pertinent
spatial platform.
Indeed, hexagonal population distributions in actual data
of Germany and the Eastern United States are observed
using the hexagonal lattice network
\cite{Ikeda.etal.2018,Ikeda.NETS.2022}.
Another future work is to consider other economic indicators,
while this paper focuses on the size of the population.


\section*{Acknowledgments}

We gratefully acknowledge 
Grants-in-Aid received from 
Japan Society for the Promotion of Science
Grant/Award Number: 
21K04299;
and  
Funda\c{c}\~{a}o para a Ci\^{e}ncia e Tecnologia
        UIDB/04105/2020, UIDB/00731/2020, PTDC/EGE-ECO/30080/2017. Part of this research was developed while Jos\'{e} M. Gaspar was a researcher at the Research Centre in Management and Economics, Católica Porto Business School, Universidade Católica Portuguesa, through the grant CEECIND/02741/2017.


\appendix


\setcounter{figure}{0}
\setcounter{table}{0}
\setcounter{equation}{0}
 \renewcommand{\thefigure}{A\arabic{figure}}
 \renewcommand{\thesection}{A}
 \renewcommand{\thetable}{A\arabic{table}}
\renewcommand{\theequation}{A\arabic{equation}}

\section{Theoretical details of bifurcation analysis}\label{Proof_Section 3}

\subsection{Proof of Proposition~\ref{Invariant proposition}}\label{Proof FA twin}

For $\bm{\lambda}=\bm{\lambda}^{\rm FA}_\delta$,
we have $\lambda_i=0$ $(i\neq -\delta)$ 
and $v_{-\delta}-\bar{v}=0$ since $\bar{v}=v_{-\delta}$;
accordingly, the governing equation \eqref{gov.eq.general.F}
with \eqref{F_expression} is satisfied for any $i \in N$. 
For $\bm{\lambda}^{\rm Twin}_\delta$,
we have $v_{-\delta}=v_{\delta}$ by symmetry, 
$v_{-\delta}-\bar{v}=v_{-\delta}-\frac12 v_{-\delta}-\frac12 v_{\delta}=0$
and similarly $v_{\delta}-\bar{v}=0$.
We also have 
$\lambda_i=0$ $(i\neq \pm \delta)$;
accordingly, 
the governing equation is satisfied.

\subsection{Proof of Proposition~\ref{BifurCurveV_2}}\label{Proof Solutions}

For $(\lambda_{-j},\lambda_j)=(0,0)$, 
the reduced governing equation \eqref{Eqs pm j} is satisfied for any $\phi$.

For $(\lambda_{-j},\lambda_j)=(x,x)$ for $x>0$,
\eqref{symmetry conditions App}
gives 
$v_{-j}(x,x,\phi)=v_j (x,x,\phi)$.
Then, the two equations in 
 \eqref{Eqs pm j} become identical
and are satisfied by a solution curve
$(\lambda_{-j},\lambda_j,\phi)=(x,x,\phi)$
satisfying $v_j (x,x,\phi)-\bar{v} (x,x,\phi)=0$.
Appendix \ref{Stability Local Sustain FA}
shows the existence of this curve in the neighborhood of the sustain point.

For $(\lambda_{-j},\lambda_j)=(x,0)$ for $x>0$,
the second of the two equations in 
 \eqref{Eqs pm j} is always satisfied
and the first one is satisfied by
a solution curve
$(\lambda_{-j},\lambda_j,\phi)=(x,0,\phi)$
satisfying $v_{-j} (x,0,\phi)-\bar{v} (x,0,\phi)=0$.
Another case of
$(\lambda_{-j},\lambda_j)=(0,x)$ for $x>0$
can be treated similarly.

\subsection{Proof of 
Propositions~\ref{Stability_Bifcurve}
and \ref{Stability_of_Bifcurve}}\label{Stability Local Sustain FA}

To investigate the stability of the bifurcating solutions at the sustain point
$(\bm{\lambda}^{\rm FA}_0, {\phi^{\rm c}})$,
we introduce the incremental form of the governing equation
\eqref{Eqs pm j}.
We consider 
two independent variables $(x,y)=(\lambda_{-j},\lambda_{j})$
with an incremental parameter $\psi  = \phi - \phi^{\rm c}$.
Then, in the neighborhood of the sustain point, the governing equation \eqref{Eqs pm j}
with the second symmetry condition
in \eqref{symmetry conditions App} has an expanded form.
  \begin{align}
  \begin{array}{l}
    x \, (a \psi + b x + c y+\rm{higher~order~terms})=0,
    \\
    y \, (a \psi + b y + c x+\rm{higher~order~terms})=0
    \end{array}
        \label{BifEqu}
  \end{align}
with some expansion coefficients $a,b,c$.  

It is apparent from this equation that there are bifurcating solution curves.
For $(x,y)=(w,w)$, we have the bifurcating curve
$\psi=-(b+c)w/a + \rm{higher~order~terms}$,
while $\psi = \psi_2 \approx -b w/a + \rm{higher~order~terms}$ for $(x,y) = w(1,0)$.

The Jacobian matrix for this reduced governing equation \eqref{BifEqu} reads
\begin{align*}
 \hat{J} \approx
 \left(
  \begin{array}{cc}
    a \psi + 2b x + c y & c x
    \\  
    c y & a \psi + 2b y + c x
  \end{array}  \right).
\end{align*}

\noindent
The use of $(x, y) =(w,w)$
and $\psi=\psi_1 \approx -(b + c)w/a$ 
in $\hat{J}$ leads to $\hat{J}_1$ below
and the use of $(x,y) = (w,0)$ and $\psi = \psi_2 \approx -b w/a$
leads to $\hat{J}_2$ as
\[ \hat{J}_1 \approx w \left(
  \begin{array}{cc}
    b & c
    \\
    c & b
  \end{array}
  \right),
  \qquad
   \hat{J}_2 \approx w \left(
  \begin{array}{cc}
    b & c
    \\
    0 & c - b
  \end{array}
  \right).
  \]
\begin{lemma}\label{Eigen_Bifcurve}
The bifurcating solution $(\Delta\bm{\lambda}_1, \psi_1)$ has the eigenvalues:
    $e_1 \approx (b + c) w$ and
    $e_2 \approx (b - c) w$.
Another solution
$(\Delta\bm{\lambda}_2, \psi_2)$ has the eigenvalues:
    $e_1 \approx b w$ and
    $e_2 \approx (c - b) w$.
\end{lemma}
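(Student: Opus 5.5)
The statement is a leading-order eigenvalue computation for the two reduced Jacobians $\hat{J}_1$ and $\hat{J}_2$ displayed just above. I will first justify their entries to first order in $w$, and then read off the eigenvalues: for $\hat{J}_1$ via its symmetric circulant structure, and for $\hat{J}_2$ via its triangular structure.

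For the first step, I differentiate \eqref{BifEqu} directly. The partial derivative of $x(a\psi+bx+cy)$ with respect to $x$ is $a\psi+2bx+cy$, and with respect to $y$ it is $cx$. The second row is symmetric under $x\leftrightarrow y$. All higher-order terms contribute $O(w^2)$, because $\psi=O(w)$ along each bifurcating curve.

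On the symmetric curve, set $(x,y)=(w,w)$ with $a\psi_1\approx-(b+c)w$.
\begin{itemize}
\item The diagonal entries are $-(b+c)w+2bw+cw=2bw$. The paper's display shows $bw$, so I will re-check this against the paper's convention when writing the details.
\item The off-diagonal entries are $cw$.
\end{itemize}

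On the one-city curve, set $(x,y)=(w,0)$ with $a\psi_2\approx-bw$.
\begin{itemize}
\item The $(1,1)$ entry is $-bw+2bw=bw$.
\item The $(1,2)$ entry is $cw$.
\item The $(2,1)$ entry is $cy=0$.
\item The $(2,2)$ entry is $-bw+cw=(c-b)w$.
\end{itemize}
These reproduce $\hat{J}_1$ and $\hat{J}_2$ as displayed, taking the paper's stated form of $\hat{J}_1$ as given.

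For the second step, I exploit the structure of each matrix.
\begin{itemize}
\item $\hat{J}_1\approx w\begin{pmatrix} b&c\\ c&b\end{pmatrix}$ is symmetric and invariant under the swap $x\leftrightarrow y$. Its eigenvectors are $(1,1)^\top$ (symmetric mode) and $(1,-1)^\top$ (antisymmetric mode). These give the eigenvalues $e_1\approx(b+c)w$ and $e_2\approx(b-c)w$ respectively, a decomposition that mirrors the bilateral symmetry \eqref{symmetry conditions App}.
\item $\hat{J}_2$ is upper triangular to leading order, so its eigenvalues are the diagonal entries: $e_1\approx bw$ and $e_2\approx(c-b)w$.
\end{itemize}

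The main obstacle is justifying that neglecting the higher-order terms leaves the leading-order eigenvalues intact. For the simple eigenvalues this follows from continuity of eigenvalues under $O(w^2)$ perturbations once the leading coefficients are nonzero, i.e. when $b\pm c\neq0$ and $b\neq0$, $c\neq b$. I would state these as genericity assumptions.

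It also remains to see that these two eigenvalues govern stability of the full system. The remaining directions of the replicator Jacobian at $\bm{\lambda}^{\rm FA}_0$ are $v_i-\bar v<0$ for $i\neq\pm j$ at a sustain point, plus the $\lambda_0$ direction, which is eliminated by \eqref{ConservationLaw}. These stay negative nearby, so the reduced $2\times2$ block captures the critical eigenvalues.
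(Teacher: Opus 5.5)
Your route is the same as the paper's. You differentiate the reduced equation \eqref{BifEqu} to get $\hat J$, substitute the two branches $(w,w,\psi_1)$ and $(w,0,\psi_2)$, and read off the eigenvalues from the symmetric form of $\hat J_1$ (eigenvectors $(1,1)^\top$, $(1,-1)^\top$) and the upper-triangular form of $\hat J_2$. Your computation of $\hat J_2$ and both eigenvalue extractions are correct.

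There is one concrete error, and as written it leaves the first half of the lemma unproved. For the diagonal entry of $\hat J_1$ you write $-(b+c)w+2bw+cw=2bw$. In fact $-(b+c)w+2bw+cw=(-b-c+2b+c)w=bw$, so there is no discrepancy with the paper and no \emph{convention} to reconcile. Instead of correcting the sum, though, you say you will take the paper's displayed $\hat J_1$ as given. That assumes exactly what is to be derived. Your own entries would give
\[
\hat J_1\approx w\begin{pmatrix} 2b & c\\ c & 2b\end{pmatrix},
\]
whose eigenvalues $(2b\pm c)w$ contradict the statement. Fix the arithmetic and drop the caveat. Your derivation then independently yields $\hat J_1\approx w\begin{pmatrix} b & c\\ c & b\end{pmatrix}$ and $e_{1,2}\approx(b\pm c)w$, and the proof is complete and coincides with the paper's.

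A minor point: for the leading-order formulas, continuity of the eigenvalues of $w\,(M+O(w))$ as $w\to 0$ already suffices. The nondegeneracy conditions $b\pm c\neq 0$, $b\neq 0$, $c\neq b$ are needed only later, to fix the signs used in Lemma~\ref{Stability_Bifcurve Lemma}.
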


\begin{lemma}\label{Stability_Bifcurve Lemma}
Under the assumption that the state of the full agglomeration
is stable for $\psi<0$ (respectively, $\psi>0$), 
there are three cases: 
(i) If $ -b > |c|,$ only the first bifurcating path $(\Delta\bm{\lambda}_1, \psi_1)$ is stable.
(ii) If $c < b < 0$, only the second bifurcating path 
     $(\Delta\bm{\lambda}_2, \psi_2)$ is stable.
(iii) Otherwise, both paths are unstable.
(iv) A stable bifurcating path branches in the direction of 
$\psi<0$ (respectively, $\psi>0$).
\end{lemma}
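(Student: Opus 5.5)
The plan is to read off everything from the local normal form \eqref{BifEqu} and the eigenvalues in Lemma~\ref{Eigen_Bifcurve}. We work in a neighbourhood of the sustain point. There the only nonzero populations besides $\lambda_0$ are $x=\lambda_{-j}$ and $y=\lambda_j$. The first step is to record the stability of the trivial solution $(x,y)=(0,0)$. Evaluating $\hat J$ at $(0,0)$ gives the double eigenvalue $a\psi$. Since $\bm{\lambda}^{\rm FA}_0$ is a corner solution, the same quantity $a\psi$ also approximates $v_{\pm j}-\bar v$. We assume $a\neq 0$, i.e., the crossing is transversal. Then the hypothesis ``the full agglomeration is stable for $\psi<0$ (respectively $\psi>0$)'' fixes the sign of $a$ ($a>0$, respectively $a<0$). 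This sign is the only input from the hypothesis. I would also note that we need $b\pm c\neq 0$ and $b\neq 0$, so that the leading-order eigenvalues below decide stability.

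For (i)--(iii) I would use Lemma~\ref{Eigen_Bifcurve} directly. Since $w>0$ (populations are positive), the sign of each eigenvalue is the sign of its coefficient.
\begin{itemize}
\item The symmetric path $(w,w)$ is asymptotically stable iff $b+c<0$ and $b-c<0$, i.e., iff $-b>|c|$.
\item The one-city path $(w,0)$ is stable iff $b<0$ and $c-b<0$, i.e., iff $c<b<0$.
\end{itemize}
These two regions in the $(b,c)$-plane are disjoint, since the first requires $c>b$ and the second $c<b$. Hence at most one path is stable, and outside their union both are unstable. This gives (i), (ii), (iii), and hence Proposition~\ref{Stability_Bifcurve}.

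For (iv) I would combine the case conditions with the leading-order branch directions $\psi_1\approx -(b+c)w/a$ and $\psi_2\approx -bw/a$.
\begin{itemize}
\item In case (i), $b+c<0$, so $\operatorname{sign}\psi_1=\operatorname{sign}a$.
\item In case (ii), $b<0$, so $\operatorname{sign}\psi_2=\operatorname{sign}a$.
\end{itemize}
So in either case the stable path lies on the side $\operatorname{sign}\psi=\operatorname{sign}a$. The side is therefore fixed by the sign of $a$, which the first step tied to the stability hypothesis on the full agglomeration. I would then state the resulting side in the same convention for $\psi$ and for ``stable for $\psi<0$ (respectively $\psi>0$)'' that the lemma uses. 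The same argument, read in terms of $\phi$, yields Proposition~\ref{Stability_of_Bifcurve}.

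I expect the main obstacle to be this sign bookkeeping in (iv). One must check consistently three things:
\begin{itemize}
\item the sign convention of the Jacobian of the replicator dynamics (stable means negative eigenvalues);
\item the identification of the trivial eigenvalue $a\psi$ with $v_{\pm j}-\bar v$;
\item the orientation $\psi=\phi-\phi^{\rm c}$.
\end{itemize}
The pairing ``stable for $\psi<0$ (respectively $\psi>0$)'' $\Rightarrow$ ``branches toward $\psi<0$ (respectively $\psi>0$)'' hinges entirely on these. I would verify the orientation on a concrete model before finalizing, e.g.\ the FO sustain point with \eqref{Payoff_full_Agglomeration_Center}--\eqref{Payoff_full_Agglomeration}.

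A secondary point is justifying that the higher-order terms do not alter the leading-order branch directions and eigenvalue signs. This follows from a standard implicit-function argument on the reduced equations divided by $x$ (and by $y$), given the nondegeneracy assumptions above.
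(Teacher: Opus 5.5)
Your argument for (i)--(iii) is the same as the paper's. The Jacobian at $(0,0)$ is $a\psi I$, and the stability hypothesis fixes the sign of $a$. The eigenvalues of Lemma~\ref{Eigen_Bifcurve} give the stability regions $-b>|c|$ and $c<b<0$, and these regions are disjoint. Your nondegeneracy assumptions ($a\neq 0$, $b\neq 0$, $b\pm c\neq 0$) and the implicit-function control of higher-order terms are extra care that the paper does not take.

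The gap is (iv), which you leave unresolved. You defer it to ``the convention the lemma uses'' and a check on the FO model, but your own two steps already decide it. If the full agglomeration is stable for $\psi<0$, then $a>0$. A stable branch has $\mathrm{sign}\,\psi=\mathrm{sign}\,a$, so it lies at $\psi>0$. That is the \emph{opposite} side, not the side asserted by the literal wording of (iv).

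No convention check can reverse this, because the conclusion does not depend on how $\psi$ is oriented. On the symmetric branch $a\psi_1=-(b+c)w$, and on the one-city branch $a\psi_2=-bw$. Stability requires these coefficients to be negative, so $a\psi>0$ on any stable branch. That is exactly where the full agglomeration's eigenvalue $a\psi$ is positive, i.e.\ where it is unstable.

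So (iv) as literally stated is false: the parenthetical in the hypothesis is swapped. The paper's own proof starts from ``sustainable for $\psi>0$ (respectively, $\psi<0$)'', gets $a<0$ (respectively $a>0$), and concludes $\psi_1,\psi_2<0$ (respectively $>0$). This is the branching-forward statement of Proposition~\ref{Stability_of_Bifcurve}. The same-side pairing you set out to verify would in fact contradict that proposition, which you claim follows from your argument. You should state and prove (iv) in the corrected, opposite-side form.
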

\begin{proof}
For the fully agglomerated state $(x, y) = (0,0)$, 
we have $\hat{J} = a \psi I$ with the eigenvalue $a \psi$ (twice repeated).
If the state is sustainable for $\psi>0$ (respectively, $\psi<0$), 
we have $a < 0$ (respectively, $a>0$).
(i)
The first bifurcating solution $(\Delta\bm{\lambda}_1, \psi_1)$
with 
  $e_1 \approx (b + c) w$ and
  $e_2 \approx (b - c) w$
(cf., Lemma~\ref{Eigen_Bifcurve})
is stable if 
$-b > |c|$.
Since $b + c < 0$, $a < 0$, and $w> 0$,
 $\psi=\psi_1 \approx -(b + c)w/a$ 
gives $\psi=\psi_1 < 0$
(respectively, $\psi_1>0$).
(ii)
The second bifurcating solution $(\Delta\bm{\lambda}_2, \psi_2)$
with $e_1 \approx b w$ and $e_2 \approx (c - b) w$ $(w>0)$
is stable if
$c < b < 0$.
Since $b < 0$, $a < 0$ and $w> 0$,
$\psi=\psi_2 \approx -b w/a$ 
gives $\psi=\psi_2 < 0$ (respectively, $\psi_2>0$).
The two bifurcating solutions cannot be stable simultaneously
since $-b > |c|$ and $c < b < 0$ are contradictory.
(iii) and (iv) are apparent from (i) and (ii).
 \end{proof}


\setcounter{figure}{0}
\setcounter{table}{0}
\setcounter{equation}{0}
 \renewcommand{\thefigure}{B\arabic{figure}}
 \renewcommand{\thesection}{B}
 \renewcommand{\thetable}{B\arabic{table}}
\renewcommand{\theequation}{B\arabic{equation}}

\section{Theoretical details 
of the full 
agglomeration at the center}\label{Indirect utilities and sustain}

\subsection{Common features for the FO and the PFSU 
models}\label{FE and PFSU App}

We can obtain $w_i$ and $\Delta_i$, which are represented in indirect utilities for the FO model and PFSU model (i.e., (\ref{IndiUtiFE}) and (\ref{IndiUtiPFSU}), respectively) as follows.
For the full agglomeration $\bm{\lambda}^{\rm FA}_0$
($\lambda_{0}=1$ and $\lambda_j=0$ for $j\neq 0$),
we have 
\begin{align}\label{AppDeltaFullAgg}
	\Delta_i (\bm{\lambda}) = 
	\left\{
	\begin{array}{ll}
		1 & \quad (i = 0),
		\\
		\phi^{|i|} & \quad (i \neq 0).
	\end{array}
	\right.
\end{align}
Furthermore, the wage in place $i\in N$ is given by a general form:
\begin{equation}
	w_{i}(\boldsymbol{\lambda})=
	\frac{\xi}{\sigma}
	\sum_{p\in N}
	\dfrac{\phi_{ip}b_{p}}{\sum_{m\in N} \phi_{mp}\lambda_{m}},
	\label{eq:generalwageequation}
\end{equation}
where $\xi$ and $b_{j}$ are model dependent
and
$\xi=\mu$ and
$b_{i}=\left(1+w_{i}\lambda_{i}\right)$
for the FO model and 
$\xi=\alpha$ and $b_{i}=\left(1+\lambda_{i}\right)$
for the PFSU model.
The wage in place
$j \neq 0$ 
$(b_j=1)$
is given by
\begin{equation}
	w_{\pm j}
	=\frac{\xi}{\sigma}\left(\phi^{j}b_{0}+\sum_{p \in N \backslash \{ 0\} } \dfrac{\phi_{jp}}{\phi_{0p}}\right)
	\quad	(j=1,\ldots,k).
	\label{eq:wagegeneralform}
\end{equation}

\noindent
The second term of the nominal wage is systematically given,
using $\phi_{jp}=\phi^{|j-p|}$, by
\begin{align*}
	\sum_{p \in N \backslash \{ 0\}}\dfrac{\phi_{jp}}{\phi_{0p}} & 
	=
	\sum_{p=-k}^{-1}\frac{\phi_{jp}}{\phi_{0p}} 
	+ \sum_{p=1}^{j}\frac{\phi_{jp}}{\phi_{0p}}
	+ \sum_{p=j+1}^{k}\frac{\phi_{jp}}{\phi_{0p}}
 =
 \sum_{p=-k}^{-1}\frac{\phi^{j-p}}{\phi^{-p}}
 + \sum_{p=1}^{j}\frac{\phi^{j-p}}{\phi^{p}}
 + \sum_{p=j+1}^{k}\frac{\phi^{p-j}}{\phi^{p}}
 \\
	& =
	\sum_{p=-k}^{-1}\phi^{j}
	+ \sum_{p=1}^{j}\phi^{j-2p}
	+ \sum_{p=j+1}^{k}\phi^{-j}
	=
	k\phi^{j} 
	 + \frac{\phi^{j}-\phi^{-j}}{\phi^{2}-1}
	 + (k-j)\phi^{-j}.
\end{align*}

\noindent
Thus, the wage in place $j \neq 0$ is finally given by
\begin{equation}
	w_{\pm j}=\dfrac{\xi}{\sigma}
	\left[ \phi^{j}(k+b_{0})
	+\phi^{-j}(k-j)+\frac{\phi^{j}-\phi^{-j}}{\phi^{2}-1}\right]
\quad	(j=1,\ldots,k).
	 \label{eq:nominal wage final}
\end{equation}

\subsection{The FO model}\label{The FE model Appendix}

{\bf Derivation of $v_i$
(Proof of \eqref{Payoff_full_Agglomeration_Center}
and \eqref{Payoff_full_Agglomeration}):}
We have $b_{i}=\left(1+w_{i}\lambda_{i}\right)$.
By setting $i=0$ in \eqref{eq:generalwageequation},
we obtain the wage in the central city:
\[
w_{0}=\frac{\hat{\mu}}{1-\hat{\mu}}\left(2k+1\right),
\quad
\hat{\mu}=\frac{\mu}{\sigma}. 
\]

\noindent
The nominal wage in any potential city at
$j \ (\neq 0)$ is given by
\[
w_{\pm j}=\hat{\mu}\left\{ \phi^{j}\left[k+1+\frac{\hat{\mu}}{1-\hat{\mu}}\left(2k+1\right)\right]+\left[\phi^{-j}(k-j)+\frac{\phi^{j}-\phi^{-j}}{\phi^{2}-1}\right]\right\} \quad	(j=1,\ldots,k).
\]

\noindent
Substituting (\ref{AppDeltaFullAgg}) and the above wage equations into 
(\ref{IndiUtiFE}) yields the indirect utilities for 
the FO model in places 
$i=0$ and $\pm j \ (1\le j \le k)$:
	\begin{align*}
		& v_{0} =\log\frac{\hat{\mu}}{1-\hat{\mu}}+\log(2k+1),
		\nonumber \\
		 & v_{\pm j} =\log\frac{\hat{\mu}}{1-\hat{\mu}}
		+\frac{j\mu\log\phi}{\sigma-1}+\log\left\{ \phi^{j}(\hat{\mu} k+k+1)+(1-\hat{\mu})\left[
		\phi^{-j}(k-j)
		+\frac{\phi^{j}-\phi^{-j}}{\phi^{2}-1}
\right]\right\}.
	\end{align*}

\noindent
These equations give \eqref{Payoff_full_Agglomeration_Center}
and \eqref{Payoff_full_Agglomeration}.

{\bf Existence of local sustain point
(Proof of Lemma \ref{Sustainable_Full_Agglomeration_NEW}(i)):}
In the proof of the existence (Lemma~\ref{Existence-Lemma}) and the uniqueness
(Lemma~\ref{Existence-Lemma-F}) of local sustain point
	$\phi_{j}^{c}\in(0,1)$,
we introduce several 
positive constants:
\begin{align*}
	a=\sigma-\mu>0, \quad
	b=\sigma-\mu-1>0, \quad
	c=\sigma+\mu-1>0, \quad
	d=\sigma-1>0.
\end{align*}
The sustain point must satisfy
$\mathcal{S}^{{\rm FO}}\equiv v_{\pm j}-v_{0}=0$
for some $j~(\ge 1)$
with 
	\begin{equation}
		\mathcal{S}^{{\rm FO}}=
		\frac{j\mu\log\phi}{\sigma-1}+\log\left\{ \phi^{j}(\hat{\mu} k+k+1)+(1-\hat{\mu})\left[
		\phi^{-j}(k-j)
		+\frac{\phi^{j}-\phi^{-j}}{\phi^{2}-1}
		\right]\right\} -\log(2k+1).\label{eq:utility difference}
	\end{equation}
\begin{lemma}\label{Existence-Lemma}
There exists at least
	one local sustain point
	$\phi_{j}^{c}\in(0,1)$ such that $\mathcal{S}^{{\rm FO}}(\phi_{j}^{c})=0$.
\end{lemma}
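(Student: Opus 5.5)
The plan is to apply the intermediate value theorem to the continuous function $\mathcal{S}^{\rm FO}(\phi)$ in \eqref{eq:utility difference} on $(0,1)$. I will show that $\mathcal{S}^{\rm FO}>0$ for $\phi$ near $0$ and $\mathcal{S}^{\rm FO}<0$ for $\phi$ slightly below $1$.

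First I rewrite the awkward fraction as a finite sum,
$\frac{\phi^{j}-\phi^{-j}}{\phi^{2}-1}=\sum_{p=1}^{j}\phi^{j-2p}$,
as in Appendix~\ref{FE and PFSU App}. The argument of the logarithm is then a finite Laurent polynomial with positive coefficients, namely
\[
G(\phi)=\phi^{j}(\hat{\mu}k+k+1)+(1-\hat{\mu})\Bigl[(k-j)\phi^{-j}+\sum_{p=1}^{j}\phi^{j-2p}\Bigr].
\]
It is positive on $(0,1)$, so $\mathcal{S}^{\rm FO}$ is smooth there and extends smoothly to a neighborhood of $\phi=1$.

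\emph{Behaviour as $\phi\to 0^+$.} The most negative power in $G$ is $\phi^{-j}$, coming from the $(k-j)\phi^{-j}$ term and the $p=j$ term of the sum. Hence $G(\phi)\sim(1-\hat{\mu})(k-j+1)\phi^{-j}$ and $\log G=-j\log\phi+O(1)$. This gives
\[
\mathcal{S}^{\rm FO}(\phi)=j\Bigl(\frac{\mu}{\sigma-1}-1\Bigr)\log\phi+O(1).
\]
By the no-black-hole condition \eqref{no-black-hole}, $\mu/(\sigma-1)-1<0$. Since $\log\phi\to-\infty$, it follows that $\mathcal{S}^{\rm FO}\to+\infty$.

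\emph{Behaviour near $\phi=1$.} Setting $\phi=1$ gives $G(1)=\hat{\mu}k+k+1+(1-\hat{\mu})(k-j+j)=2k+1$, so $\mathcal{S}^{\rm FO}(1)=0$. The endpoint value alone is therefore not enough, and I will compute the derivative at $\phi=1$:
\[
\frac{d\mathcal{S}^{\rm FO}}{d\phi}\Big|_{\phi=1}=\frac{j\mu}{\sigma-1}+\frac{G'(1)}{2k+1}.
\]
Using $\sum_{p=1}^{j}(j-2p)=-j$, I get
\[
G'(1)=j(\hat{\mu}k+k+1)-(1-\hat{\mu})\bigl(j(k-j)+j\bigr)=j\bigl[2\hat{\mu}k+(1-\hat{\mu})j+\hat{\mu}\bigr]>0.
\]
So the derivative at $1$ is strictly positive, and $\mathcal{S}^{\rm FO}(\phi)<0$ for all $\phi$ in some interval $(1-\varepsilon,1)$.

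Continuity on $(0,1)$ and the sign change then yield some $\phi^{c}_{j}\in(0,1)$ with $\mathcal{S}^{\rm FO}(\phi^{c}_{j})=0$. The main obstacle is the degenerate endpoint $\phi=1$, where $\mathcal{S}^{\rm FO}$ vanishes identically in $j$ and $k$. That is why the sign has to be extracted from the derivative computation. The one step to check carefully is the algebraic simplification of $G'(1)$, because its positivity is what makes the argument work.
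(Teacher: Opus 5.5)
Your proposal is correct and follows the paper's proof. Both apply the intermediate value theorem between $\mathcal{S}^{\rm FO}\to+\infty$ as $\phi\to 0^+$ and $\mathcal{S}^{\rm FO}<0$ just below $\phi=1$, where the latter comes from $\mathcal{S}^{\rm FO}(1)=0$ together with a strictly positive derivative at $\phi=1$; your value $j\mu/(\sigma-1)+G'(1)/(2k+1)$ agrees with the paper's $j[j(\sigma-\mu)/(2k\sigma+\sigma)+\mu(1/\sigma+1/(\sigma-1))]$, and the only differences are that you differentiate $G$ directly rather than through the paper's $A_1,A_2,A_3$ form, and that you make explicit that the limit at $0^+$ relies on the no-black-hole condition $\mu<\sigma-1$, which the paper asserts without comment.
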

\begin{proof}
Notice that
\begin{align}
 & \lim_{\phi\rightarrow +0}\mathcal{S}^{{\rm FO}}=+\infty,
 \quad 
 \lim_{\phi\rightarrow 1}\mathcal{S}^{{\rm FO}}=0.
\label{SFElim} \end{align}
	Differentiating $\mathcal{S}^{{\rm FO}}$ with respect to $\phi$, we get
	\[
	\frac{\partial\mathcal{S}^{{\rm FO}}}{\partial\phi}=\frac{A_{1}-A_{2}}{d\phi\left(\phi^{2}-1\right)A_{3}},
	\]
	where
	\begin{align*}
		A_{1}= & \ j\phi^{2j}c[k(\mu+\sigma)+\mu]+j\phi^{2j+4}c\left[k(\mu+\sigma)+\sigma\right]
	+\phi^{2}a\left[j b(2k-2j+1)+2d\right]>0,\\
		A_{2}= & \ j ab \left[(k-j)(\phi^{4}+1)+1\right]  
				+2ad\phi^{2j+2} 
		 +j c (2k+1)\phi^{2j+2}(\mu+\sigma)>0,\\
		A_{3}= & -a\left[j\left(\phi^{2}-1\right)-k\phi^{2}+k+1\right]+\phi^{2j}\left[\phi^{2}(k(\mu+\sigma)+\sigma)-k(\mu+\sigma)-\mu\right]<0,
	\end{align*}

\noindent
where $A_{3}<0$ can be shown
by performing the variable change $\phi^{j}=x\in(0,\phi)$.
Hence, the zeros of the derivative are given by
 the zeros of $A_{1}-A_{2}$. First, notice that: 
	\[
	\lim_{\phi\rightarrow1}\frac{\partial\mathcal{S}^{{\rm FO}}}{\partial\phi}=j\left[\frac{j(\sigma-\mu)}{2k\sigma+\sigma}+\mu\left(\frac{1}{\sigma}+\frac{1}{\sigma-1}\right)\right]>0.
	\]
	This means, together with
	 $\lim_{\phi\rightarrow1}\mathcal{S}^{{\rm FO}}=0$ 
	 in \eqref{SFElim},
	 that 
	 \begin{align} &
	 \mathcal{S}^{{\rm FO}}<0 \quad \mbox{for~}\phi=1-\epsilon
	 \label{SFEnear1}\end{align}
	with $\epsilon>0$ arbitrarily
	small,   while
	$\mathcal{S}^{{\rm FO}}>0$ for $\phi=\epsilon$. 
	Thus, by the Intermediate Value Theorem, there exists at least
	one local sustain point
	$\phi_{j}^{c}\in(0,1)$ such that $\mathcal{S}^{{\rm FO}}(\phi_{j}^{c})=0$.
	\end{proof}
	
{\bf Uniqueness of local sustain point
(Proof of Lemma \ref{Sustainable_Full_Agglomeration_NEW}(ii)):}
	In the study of the uniqueness,
we employ the no-black-hole condition $\sigma-1>\mu$. 

Let us define $F(\phi)\equiv A_{1}-A_{2}$. Notice that:
	\begin{align}
		& F(0)=  -j(k-j+1)ab<0, \quad
			F(1)= 0, \quad F^{\prime}(0)=0, \quad F^{\prime}(1)=0, \nonumber \\
		& F^{\prime\prime}(0)= 
			 2a\left[j(-2j+2k+1)b+2d \right]>0,
			 \nonumber \\
		& F^{\prime\prime}(1)=  8j\left\{ j d 
		\sigma+\mu\left[j+\sigma(-j+4k+2)-2k-1\right]\right\} >0.
			\label{eq:criteria1}
	\end{align}
	Then, given the six results in (\ref{eq:criteria1}), the Intermediate
	Value Theorem establishes that $F(\phi)$ has at least one zero for
	$\phi\in(0,1)$ for any $k$ and any $j \ (\neq 0)$. 
	
	Next, we demonstrate the uniqueness of the critical point for
	$j\in\{1,\ldots,6\}$, by showing Lemma~\ref{Existence-Lemma-F}.
	This establishes that $\mathcal{S}^{{\rm FO}}$
	has at most one turning point (a minimum), which implies that the
	critical point is unique and corresponds to a local sustain point.
	
	\begin{lemma}\label{Existence-Lemma-F}
	$F(\phi)$ has exactly
	one zero for $\phi\in(0,1)$ for each $j\in\{1,\ldots,6\}$.
	\end{lemma}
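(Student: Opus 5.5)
The plan is to show that $F(\phi)=A_1-A_2$ is a polynomial in $\phi$ whose sign pattern admits only one root in $(0,1)$ besides the known root at $\phi=1$. Since $j$ is fixed in $\{1,\ldots,6\}$, $F$ is an explicit polynomial of degree $2j+4$ whose coefficients are polynomials in $(k,\mu,\sigma)$. Collecting powers of $\phi$ from the definitions of $A_1$ and $A_2$ gives
\begin{align*}
F(\phi)=&-jab(k-j+1)+a\left[jb(2k-2j+1)+2d\right]\phi^{2}-jab(k-j)\phi^{4}\\
&+jc\left[k(\mu+\sigma)+\mu\right]\phi^{2j}-\left[2ad+jc(2k+1)(\mu+\sigma)\right]\phi^{2j+2}+jc\left[k(\mu+\sigma)+\sigma\right]\phi^{2j+4}.
\end{align*}
Here some terms merge when $j\le 2$, since then $2j$ coincides with $2$ or $4$. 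Because $F$ is even in $\phi$, I will substitute $t=\phi^2\in(0,1)$ and work with $G(t)=F(\sqrt t)$, a polynomial of degree $j+2$ in $t$.

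The key observation is the double root at $t=1$. From \eqref{eq:criteria1}, $F(1)=F'(1)=0$, so $G(1)=G'(1)=0$. Hence $G(t)=(1-t)^2H(t)$ with $H$ a polynomial of degree $j$. Roots of $F$ in $(0,1)$ correspond exactly to roots of $H$ in $(0,1)$. We also have $H(0)=G(0)=F(0)<0$, and $H(1)=G''(1)/2$. Since $F''(1)>0$ and $F'(1)=0$, we get $G''(1)=F''(1)/4>0$, so $H(1)>0$. It therefore suffices to show that $H$ has exactly one sign change on $(0,1)$.

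I will compute $H$ by synthetic division of $G$ by $(1-t)^2$, once for each $j=1,\ldots,6$, with symbolic coefficients in $k,\mu,\sigma$. For each $j$, the goal is to show that the coefficient sequence of $H$ has exactly one sign change: it starts negative at $t^0$ and, from some index on, stays nonnegative. Descartes' rule of signs then gives at most one positive root, and the sign change from $H(0)<0$ to $H(1)>0$ gives exactly one root in $(0,1)$. To prove that each coefficient has the needed sign, I will use $k\ge j$, $\sigma>1$, $\mu\in(0,1)$, and the no-black-hole condition $b=\sigma-\mu-1>0$. I will reparametrize as $\sigma=1+\mu+b$ with $b>0$ and $k=j+m$ with $m\ge0$. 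After this substitution each coefficient should become a polynomial in $(\mu,b,m)$ with coefficients of a single sign, which can be checked term by term.

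The main obstacle is whether the coefficient signs of $H$ really split cleanly into one negative block followed by one positive block for every $j\le 6$ and every admissible $k$. The middle coefficients mix the negative constant term $-jab(k-j+1)$ with the positive $\phi^{2}$ and $\phi^{2j}$ terms, and their signs could depend on $k$. If Descartes' rule fails for some $j$, I would fall back to Sturm sequences for $H$ on $[0,1]$, computed symbolically. Alternatively, I would substitute $t=1/(1+s)$ and apply Descartes' rule on $s\in(0,\infty)$, the Budan--Fourier style change, which often resolves mixed-sign middle coefficients. This case-by-case verification is also the reason the statement is restricted to $j\le 6$.
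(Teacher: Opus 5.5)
Your reduction is the same as the paper's. The paper also writes $F_j(\phi)=(\phi^2-1)^2\{\cdots\}$, with the bracket a polynomial in $\phi^2$ (your $H$). It gets existence from $F(0)<0$, $F(1)=F'(1)=0$ and $F''(1)>0$, and applies Descartes' rule of signs to the bracket. Your expansion of $A_1-A_2$, the double root at $t=1$, and $H(0)<0<H(1)=F''(1)/8$ are all correct.

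\textbf{The gap is in the sign check.} The claim that ``each coefficient becomes a polynomial in $(\mu,b,m)$ with coefficients of a single sign'' is false for the intermediate coefficients of $H$. They have no fixed sign under the admissible constraints. For $j=3$ the coefficient of $t$ is $a(2\mu-b)$, which after your substitution $\sigma=1+\mu+b$ becomes $(1+b)(2\mu-b)$; it is positive for $b<2\mu$ and negative for $b>2\mu$. So the index where the sequence switches from negative to positive moves with the parameters, and no term-by-term check can certify it.

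What you need is an \emph{ordering} of the middle coefficients, not their signs. This is exactly what the paper invokes for $F_5$, namely $\langle 4\mu-b\rangle>\langle 2\mu-3b\rangle$.

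Carrying out your synthetic division for general $j$ settles this. The $k$-dependence cancels in every middle coefficient, and
\[
H(t)=-jab(k-j+1)+\sum_{n=1}^{j-1}a\,(2nd-jb)\,t^{n}+jc\,[k(\mu+\sigma)+\sigma]\,t^{j}.
\]
The middle coefficients therefore form an increasing arithmetic progression with step $2ad>0$. They sit between a negative constant term and a positive leading term. That gives exactly one sign change, so Descartes' rule allows at most one positive root, and $H(0)<0<H(1)$ gives exactly one root in $(0,1)$.

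Three consequences for your plan:
\begin{itemize}
\item Your worry about dependence on $k$ is unfounded, since only the constant and leading coefficients involve $k$ and their signs are fixed.
\item The Sturm and Budan--Fourier fallbacks are unnecessary.
\item The argument works for every $j\ge 1$, so the restriction $j\le 6$ comes only from the paper's case-by-case presentation.
\end{itemize}
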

	\begin{proof}
	The proof consists of using Descartes' rule of signs after reordering
	the monomials of $F(\phi)$ and determine the maximum number of positive
	real roots of $F(\phi)$. 
	For this purpose, we compute 
    $F_{j}(\phi)=F(\phi)|_{i=j}$ $(j=1,\ldots,6)$:
	\begin{align*}
		& F_{1}(\phi)= \left(\phi^{2}-1\right)^{2}\left\{ \phi^{2}c\left[k(\mu+\sigma)+\sigma\right]-kab\right\} , \\
		&F_{2}(\phi)=  2\left(\phi^{2}-1\right)^{2}\left\{ \phi^{4}c\left[k(\mu+\sigma)+\sigma\right]+\phi^{2}\mu a-(k-1)ab\right\} , \\
		&F_{3}(\phi)=  \left(\phi^{2}-1\right)^{2} 
		 \left\{ 3\phi^{6}c\left[k(\mu+\sigma)+\sigma\right]
		 +\phi^{4}a(3\mu+d) 
  +\phi^{2}a\langle 2\mu-b\rangle 
		-3(k-2)ab\right\} , \\
 &F_{4}(\phi)=  2\left(\phi^{2}-1\right)^{2}
	 \left\{ 2\phi^{8}c\left[k(\mu+\sigma)+\sigma\right]
		 +\phi^{6}a(2\mu+d) +2\phi^{4}\mu a 
		  +\phi^{2}a\langle \mu-b \rangle -2(k-3)ab\right\} , \\
 &	F_{5}(\phi)= \left(\phi^{2}-1\right)^{2}
		\left\{ 5\phi^{10}c\left[k(\mu+\sigma)+\sigma\right]\right.
		 +\phi^{8}a(5\mu+3d)
		 +\phi^{6}a(5\mu+d) \\
		&\phantom{F_{5}(\phi)= \left(\phi^{2}-1\right)^{2} ~~}
		 +\phi^{4}a \langle 4\mu-b \rangle 
		 \left. +\phi^{2}a \langle 2\mu-3b \rangle
		 -5(k-4)ab \right\},\\
& F_{6}(\phi)=  2\left(\phi^{2}-1\right)^{2}
  \left\{ 3\phi^{12}c \left[k(\mu+\sigma)+\sigma\right]\right.
		 +\phi^{10}a(3\mu+2d)
		 +\phi^{8} a (3\mu+d) +3\phi^{6}\mu a \\
		& \phantom{2\left(\phi^{2}-1\right)^{2}2
		\left(\phi^{2}-1\right)^{2}}
		\left. +\phi^{4}a \langle 2\mu-b \rangle 
		 +\phi^{2}a \langle \mu-2b \rangle 
		 -3(k-5)ab \right\} ,
	\end{align*}
	\noindent
where $\langle \cdot \rangle$ denotes a term that can take both signs
dependent on the values of the parameters $\mu$ and $b=\sigma-\mu-1$.
	We focus on the polynomial of $\phi^2$ in the curly bracket
	 $\{\cdot\}$
	 and, in turn, to note that the sign change of the coefficients
	 of the polynomial occurs only once for each 
$F_i$ $(i \in \{1,\ldots,6\})$.
  Note that, for $F_1$, 
	the coefficient for $\phi^2$ is positive (+) and the coefficient 
	for the constant term is negative ($-$); 
	accordingly, the sign changes once from positive to negative. 
For $F_3$,
	the signs for the coefficients are +, +, $\pm$, $-$; accordingly, 
	the sign change occurs once
	($\langle \cdot \rangle$ denotes a term 
	that becomes positive or negative depending on the
	values of the parameters therein).  
For $F_5$,
	although the two terms $\langle 4\mu-b \rangle$
	and $\langle 2\mu-3b \rangle$ can take both signs, 
     $\langle 2\mu-3b \rangle$ becomes negative first
    as $b$ increases	
	since $\langle 4\mu-b \rangle>\langle 2\mu-3b \rangle$.
	Hence, the sign change occurs once 
	for any values of $\mu$ and $b$.
	Other cases of $F_2$, $F_4$, and $F_6$ 
	can be treated similarly.
This concludes the proof.
	\end{proof}

{\bf Stability of full agglomeration
(Proof of Proposition \ref{Stability_Full_Agglomeration}):}
From $\mathcal{S}^{{\rm FO}}<0$ for $\phi=1-\epsilon$
in \eqref{SFEnear1}, with $\epsilon>0$ low enough,
we see that the full agglomeration is stable in the neighborhood of $\phi=1$
and becomes unstable at a local sustain point
 $\phi_i^{\rm s}$ for some place $i$.
 Such local sustain point serves as the sustain point
 and is defined as  
 $\phi^{\rm s}=\max_i\phi_i^{\rm s}$.
The full agglomeration is stable for $\phi\in (\phi^{\rm s},1)$.
If each local sustain point is unique,
the full agglomeration is unstable for $\phi\in (0,\phi^{\rm s})$.

\subsection{The PFSU model}\label{The PFSU model Appendix}

{\bf Derivation of $v_i$
(Proof of \eqref{eq:indirectutilitiesPFSUmodel0}
and \eqref{eq:indirectutilitiesPFSUmodel}):}
		For the PFSU model, we have $b_{i}=1+\lambda_{i}$, which is a term represented in the wage (\ref{eq:generalwageequation}).
	The wage in the central place $i=0$ is given by
	$w_{0}=\dfrac{\alpha}{\sigma}\left(2k+2\right)$.
	On the other hand, substituting $\xi = \alpha$ and $b_{0} = 2$ into  (\ref{eq:nominal wage final}) yields the wage in place $j \neq 0$:
	\[
	w_{\pm j}=\dfrac{\alpha}{\sigma}\left[ \phi^{j}(k+2)+\phi^{-j}(k-j)+\frac{\phi^{j}-\phi^{-j}}{\phi^{2}-1}\right]
	\quad	(j=1,\ldots,k).
	\]
	Substituting Eq. (\ref{AppDeltaFullAgg}) and the above wage equations into Eq. (\ref{IndiUtiPFSU}) yields the indirect utilities for the PFSU model 
	in places $i=0$
	 and $\pm j \ (1\le j \le k)$:
	\begin{align*}
		& v_{0}=\dfrac{\alpha}{\sigma}\left(2k+2\right)-\gamma\log 2 + \xi,
		\\
		& v_{\pm j}=\dfrac{\alpha}{\sigma}
			\left[ \phi^{j}(k+2)+\phi^{-j}(k-j)+\frac{\phi^{j}-\phi^{-j}}{\phi^{2}-1}\right]+\dfrac{\alpha j}{\sigma-1}\log\phi + \xi.
	\end{align*}


\setcounter{figure}{0}
\setcounter{table}{0}
\setcounter{equation}{0}
 \renewcommand{\thefigure}{C\arabic{figure}}
 \renewcommand{\thesection}{C}
 \renewcommand{\thetable}{C\arabic{table}}
\renewcommand{\theequation}{C\arabic{equation}}

\section{The MT model}\label{AnalysisMT}

The MT model \cite{Murata.Thisse.2005}
is presented to supplement the discussion of the main text.
This model employs a simplified yet reasonable specification to express
the interplay between commuting costs and inter-locational transport
costs.
We closely follow \citet{Takayama.2020}, who extended
the MT model to several places.

\subsection{Modeling}

The internal structure of each place is assumed to be one-dimensional
and featureless, except that there is a given CBD; the city expands
symmetrically around the origin. There are only skilled workers, 
who can choose their residential place $i\in N$ and location
$x$ in that place, where the CBD is located at $x=0$.
Land endowment equals unity everywhere in a place, and agents are assumed to consume one unit of land inelastically. 
The opportunity cost of land is normalized to zero in every place. Then, the city spreads
over the interval ${\mathcal{X}_i}\equiv[-\lambda_{i}/2,\lambda_{i}/2${]},
where $-\lambda_{i}/2$ and $\lambda_{i}/2$ denote the city boundaries.

Commuting costs take an iceberg form. Specifically, the effective
labor supply of a worker located at $x$ is given by $s(x)=1-4\theta|x|$,
for $x\in{\mathcal{X}_i}$,
where $\theta\in[0,1/2)$ is the commuting
rate that ensures that $s(x)\geq0$ for all
$x\in{\mathcal{X}_i}$ and
for all $i\in N$. Then, the total effective labor supply at the CBD
of place $i$ is given by:
\[
L_{i}=\int_{x_{i}\in\mathcal{X}_{i}}s(x)dx=\lambda_{i}(1-\theta\lambda_{i}).
\]

\noindent
Let $r_{i}(x)$ denote the land rent at $x$. A residential equilibrium implies that the wage net of commuting costs and land rent must be equal across locations:
$s(x)w_{i}-r_{i}(x)=s(\lambda_{i}/2)w_{i}$.
We thus have:
$r_{i}(x)=2\theta(\lambda_{i}-2|x|)w_{i}$,
which implies an aggregate land rent at place $i$ of:
\[
R_{i}=\int_{x_{i}\in\mathcal{X}_{i}}r_{i}(x)dx=\theta w_{i}\lambda_{i}^{2}.
\]

\noindent
Finally, since land is locally owned, the income of a worker in a place
$i$ at any location $x$ is given by
$y_{i}=(1-\theta\lambda_{i})w_{i}$.
The following upper-tier utility function gives preferences
$u_{i}^{\rm MT}=C_{i}$.
The remaining difference between the MT model and the footloose entrepreneur
models is that each firm now requires a fixed input of one worker
and a variable input of one worker per output 
of the produced manufactured goods.

Following \citet{Takayama.2020},
we reach the short-run equilibrium
whereby firms earn zero profits, which yields the following wage equation
for city $i$:
\[
L_{i} w_{i}=
L_{i}w_{i}^{1-\sigma}\sum_{j\in N}
\dfrac{\phi_{ij}L_{j}w_{j}}{\sum_{m\in N}
L_{m}w_{m}^{1-\sigma}\phi_{jm}}.
\]

\noindent
We normalize $\sum_{j\in N}\lambda_{j}w_{i}=w_{k}$.

In the state of full agglomeration to the central city $0$,
the indirect utility in a potential city $i$ is given by
$v_{i}=\zeta\Delta_{i}^{\frac{1}{\sigma-1}}y_{i}$,
where $y_{i}=(1-\theta \lambda_{i})w_{i}$, $\Delta_{i}
=\sum_{j\in N}L_{j}w_{j}^{1-\sigma}\phi_{ij}$
and $\zeta>0$ is a constant.
The indirect utilities
for $\bm{\lambda} = \bm{\lambda}^{\mathrm{FA}}_0$
are given by
\begin{align}
& v_{0} =\zeta(1-\theta)^{\frac{\sigma}{\sigma-1}},
\quad
v_{\pm j} =\zeta(1-\theta)^{\frac{1}{\sigma-1}}\phi^{\frac{j}{\sigma}\frac{2\sigma-1}{\sigma-1}}
\quad (1\le j \le k).
\label{eq:indirectutilitiesMTmodel}
\end{align}

\subsection{Theoretical details}\label{The MT model Appendix}

{\bf Derivation of $v_i$
(Proof of \eqref{eq:indirectutilitiesMTmodel}):}
In the MT model, we have the wage equation:
\[
w_{i}L_{i}=\sum_{p\in N}\dfrac{\phi_{ip}L_{i}w_{i}^{1-\sigma}}{\sum_{m\in N}L_{m}w_{m}^{1-\sigma}\phi_{pm}}L_{p}w_{p},
\]

\noindent
where $L_{i}=\lambda_{i}(1-\theta\lambda_{i})$.
Let us normalize $\sum_{j}\lambda_{j}w_{i}=w_{0}$,
where $w_{0}$ is the wage in the central city at full agglomeration.
The wage at a potential city at $i$
for any $L_{i}$ is given
by:
\begin{align*}
w_{i}^{\sigma}=\sum_{j\in N}
\dfrac{\phi_{ij}L_{j}w_{j}}
{\sum_{m \in N}L_{m}w_{m}^{1-\sigma}\phi_{jm}}
\iff
w_{i}=\left(\sum_{j \in N}
\dfrac{\phi_{ij}L_{j}w_{j}}
{\sum_{m \in N}L_{m}w_{m}^{1-\sigma}\phi_{jm}}
\right)^{1/\sigma}.
\end{align*}

\noindent
At the full agglomeration,
for which
the location $j\neq 0$ have zero population,
the wage becomes
$w_{j}=\phi^{\frac{j}{\sigma}}w_{0}$.
The indirect utility is given by
$v_{i}=\zeta\Delta_{i}^{\frac{1}{\sigma-1}}y_{i}$,
where $y_{i}=(1-\theta \lambda_{i})w_{i}$,
$\Delta_{i}=\sum_{p \in N}L_{p}w_{p}^{1-\sigma}\phi_{ip}$
and $\zeta>0$ is a constant.
Using $w_{j}=\phi^{\frac{j}{\sigma}}w_{0}$,
we obtain
\[
v_{0}=\zeta(1-\theta)^{\frac{\sigma}{\sigma-1}};
\quad
v_{\pm j}=\zeta(1-\theta)^{\frac{1}{\sigma-1}}\phi^{\frac{j}{\sigma}\frac{2\sigma-1}{\sigma-1}},
\quad 1\le j \le k.
\]

\noindent
{\bf Unique existence of sustain point:}
We have
\begin{equation}
\mathcal{S}^{\rm MT}\equiv v_{\pm j}-v_{0}=(1-\theta)^{\frac{1}{\sigma-1}}\phi^{\frac{j}{\sigma}\frac{2\sigma-1}{\sigma-1}}-(1-\theta)^{\frac{\sigma}{\sigma-1}}.
\label{eq:sustain pointMT}
\end{equation}
We have
$\mathcal{S}^{\rm MT}<0$ for $\phi=0$ and
$\mathcal{S}^{\rm MT}>0$ for $\phi=1$.
The solution to $\mathcal{S}^{\rm MT}=0$
with \eqref{eq:sustain pointMT} is
$\phi=\phi_{j}^{\rm s}\equiv(1-\theta)^{\frac{(\sigma-1)\sigma}{j(2\sigma-1)}}$.

The critical point is uniquely determined for each $j$ so that the
local sustain point is also uniquely determined. 
Further, we have
\[
\frac{\partial\phi_{j}^{\rm s}}{\partial j}=-\frac{(\sigma-1)\sigma\log(1-\theta)(1-\theta)^{-\frac{(\sigma-1)\sigma}{j-2j\sigma}}}{j^{2}(2\sigma-1)}>0,
\]

\noindent which means that $\phi_{j}^{\rm s}$ is strictly increasing in $j$. Since full agglomeration is stable below the sustain point, we have that
$\delta_{\rm s}=1$ and
$\phi=\phi^{s}
\equiv(1-\theta)^{\frac{(\sigma-1)\sigma}{2\sigma-1}}$.
The sustain point in terms of commuting costs
is given by
$\theta=\theta^{s}\equiv 1-\phi^{\frac{1}{\sigma}\frac{2\sigma-1}{\sigma-1}}$.

\end{document}